\DeclareMathOperator{\diag}{diag}
\DeclareMathOperator*{\Res}{Res}
\newcommand{\C}{\mathbb{C}}
\renewcommand{\Re}{\mathrm{Re}\,}
\renewcommand{\Im}{\mathrm{Im}\,}
\newcommand{\Ai}{\mathrm{Ai}}
\newcommand{\PC}{\mathrm{PC}}
\newcommand{\ud}{\,\mathrm{d}}
\newcommand{\what}{\widehat}
\newcommand{\ii}{\mathrm{i}}
\newcommand{\I}{\mathcal I}
\newcommand{\J}{\mathcal J}
\newcommand{\Boh}{\mathcal{O}}
\newtheorem{theorem}{Theorem}[section]
\newtheorem{lemma}[theorem]{Lemma}
\newtheorem{remark}[theorem]{Remark}
\newtheorem{proposition}[theorem]{Proposition}
\newtheorem{rhp}[theorem]{RH problem}
\newtheorem{assumption}[theorem]{Assumption}
\theoremstyle{definition}
\theoremstyle{remark}
\numberwithin{equation}{section}
\begin{document}
	%\title{Asymptotics of the particular solution of noncommutative Painlev\'{e} II equation}
    \title{Asymptotics for the noncommutative Painlev\'{e} II equation}
	
	\author{Junwen Liu\footnotemark[1],\quad Luming Yao\footnotemark[2], \quad Lun Zhang\footnotemark[1]}
	
	\renewcommand{\thefootnote}{\fnsymbol{footnote}}
	\footnotetext[1]{School of Mathematical Sciences, Fudan University, Shanghai 200433, P. R. China. E-mail: \{junwenliu21\symbol{'100}m.fudan.edu.cn, lunzhang\symbol{'100}fudan.edu.cn.\}}
	\footnotetext[2]{Institute for Advanced Study, Shenzhen University, Shenzhen 518060, P. R. China. E-mail: lumingyao\symbol{'100}szu.edu.cn.}
	%\footnotetext[3]{Shanghai Key Laboratory for Contemporary Applied Mathematics, Fudan University, Shanghai 200433, P. R. China.}
	
	\date{\today}
	
	\maketitle
	\begin{abstract}
In this paper, we are concerned with the following noncommutative Painlev\'{e} II equation
\begin{equation*}
\mathbf{D}^2 \beta_1 = 4\mathbf{s} \beta_1 +4 \beta_1 \mathbf{s} +8 \beta_1^3,
\end{equation*}
where  $\beta_1=\beta_1(\vec{s})$ is an $n \times n$ matrix-valued function of $\vec{s}=(s_1,\ldots,s_n)$,  $\mathbf{s}=\diag(s_1,\ldots,s_n)$ and
$\mathbf{D}=\sum_{j=1}^n\frac{\partial}{\partial s_j}$. If $n=1$, it reduces to the classical Painlev\'{e} II equation up to a scaling. Given an  arbitrary $n \times n$ constant matrix $C=\left(c_{j k}\right)_{j, k=1}^n$, a remarkable result due to Bertola and Cafasso asserts that there exists a unique solution $\beta_1(\vec{s})=\beta_1(\vec{s};C)$ of the noncommutative PII equation  such that its $(k,l)$-th entry behaves like  $-c_{kl} \Ai (s_k+s_l)$ as $S= \frac{1}{n}\sum_{i=1}^n s_j\to+\infty$, where $\Ai$ stands for the standard Airy function. For a class of structured matrices $C$, we establish 
asymptotics of the associated solutions as $S \to -\infty$, which particularly include the so-called connection formulas. In the present setting, it comes out that the solution exhibits a hybrid behavior in the sense that 
each entry corresponds to either an extension of the Hastings–McLeod solution or an extension of the Ablowitz--Segur solution for the PII equation. It is worthwhile to emphasize the asymptotics of the $(k,l)$-th entry as $S \to -\infty$ cannot be deduced solely from its behavior as $S \to +\infty$ in general, which actually also depends on the positive infinity asymptotics of the $(l,k)$-th entry. This new and intriguing phenomenon disappears in the scalar case.

	% In this paper, we consider a class of matrix-valued, pole-free solutions to the noncommutative (or matrix) Painlev\'e II equation, a natural generalization of the classical Painlev\'e II equation arising in integrable systems and random matrix theory. These solutions are characterized by their connection to a matrix Airy-type convolution operator and are rigorously analyzed through a $2n \times 2n$ Riemann--Hilbert (RH) problem framework. Using the nonlinear steepest descent method of Deift and Zhou, we derive precise asymptotic expansions and connection formulas as the mean parameter $S \to -\infty$, complementing earlier results for $S \to +\infty$.

% Our analysis reveals that these matrix solutions exhibit a hybrid behavior, combining both features of Hastings--McLeod type and Ablowitz--Segur type solutions. A distinctive feature emerging in the matrix case is that the asymptotic behavior of individual matrix entries cannot be determined autonomously, but rather depends crucially on the asymptotic behavior of interrelated matrix components in the Ablowitz--Segur case.
	\end{abstract}

% \tableofcontents

\section{Introduction}\label{sec:intro}
\renewcommand{\thefootnote}{\arabic{footnote}}
% 	The  Painlevé II equation
%  \begin{align}\label{eq:PII}
%	q''(x)=2q(x)^3+xq(x)-\alpha, \qquad \alpha>-1/2,
%  \end{align}	
% is a cornerstone of integrable systems theory, renowned for its role in describing critical phenomena such as universality in random matrix theory, nonlinear wave modulation, and transition asymptotics near singularities \cite{SUL}. Among its two most celebrated solutions that emerge in the integral representation of the Airy Fredholm determinant, the Hastings--McLeod solution\cite{Fokas2006,HM1980} has become pivotal in characterizing the Tracy-Widom distribution for the largest eigenvalue of Gaussian unitary ensembles. Similarly, the Ablowitz--Segur solution \cite{SA} underpins connections to inverse scattering and soliton theory. These solutions exemplify the profound interplay between integrability, asymptotics, and physical applications.

The second Painlev\'e (PII) equation  
%is the following second-order nonlinear differential equation
\begin{align}\label{eq:PII}
	q''(s)=2q(s)^3+sq(s)-\alpha,  \quad \alpha\in\mathbb{C},
\end{align}	
%where $\alpha$ is a constant. This equation, 
together with five other nonlinear ordinary differential equations, was first studied by Painlev\'e and his colleagues at the turn of the twentieth century; cf. \cite{Ince:book} for the historical background and more information.  
%see more information about the Painlev\'e equation and the historical background in \cite{Ince:book}. 
In general, the Painlev\'e  equations are irreducible, in the sense that they cannot be solved in terms of closed forms of elementary functions or known classical special functions (like Airy, Bessel, hypergeometric functions, etc.). Thus, the solutions of Painlev\'e equations are called Painlev\'e transcendents. All Painlev\'e transcendents satisfy the so-called Painlev\'e property: the only movable singularities are poles. Here, ``movable" means that the locations of singularities depend on the constants of integration associated with the initial or boundary conditions; cf. \cite{DLMF}. Due to the extensive connections between Painlev\'e equations and many branches of mathematics and physics, the Painlev\'e transcendents are nowadays widely recognized as the nonlinear counterparts of the classical special functions.

For the homogeneous PII equation, i.e., $\alpha=0$ in \eqref{eq:PII}, by assuming that the solutions tend to zero exponentially fast as $s \to +\infty$, it is readily seen that the equation should be well-approximated by the classical Airy equation $y''(s)=sy(s)$. A result due to Hastings and McLeod \cite{HM1980} asserts that, for any $k$, there exists a unique solution to the homogeneous PII equation which behaves like $k\Ai(s)$ for large positive $s$, where $\Ai$ stands for the standard Airy function (cf. \cite[Chapter 9]{DLMF}). Depending on different values of $k$, we encounter several classes of well-known solutions of the PII equation, which we denote by $q(s;k)$ in what follows.
\begin{itemize}
  \item If $k=\pm 1$, the solutions are known as the Hastings--McLeod solutions \cite{HM1980}. They are characterized by the asymptotics
\begin{equation}\label{HM}
q(s;\pm 1)=\left\{
                   \begin{array}{ll}
                     \pm \sqrt{-\frac{s}{2}}+\Boh(|s|^{-5/2}), & \hbox{ $s \to -\infty$,} \\
                     \pm \Ai(s)+\Boh\left(\frac{e^{-(4/3)s^{3/2}}}{s^{1/4}}\right), & \hbox{ $s \to +\infty$;}
                   \end{array}
                 \right.
\end{equation}
see \cite{DZ95} for the detailed derivation of the above asymptotic formulas. The Hastings--McLeod solutions $q(s;\pm 1)$ are real and pole-free on the real axis.

\item If $-1<k<1$, the one-parameter family of solutions is known as the real Ablowitz--Segur solutions \cite{AS1,SA}. They are also real and pole-free on the real line \cite{AS3}, 
with the asymptotics
\begin{equation}\label{AS}
q(s;k)= \begin{cases}
\frac{\sqrt{-2\chi}}{(-s)^{1/4}}\cos\left(\frac{2}{3}(-s)^{3/2}+\chi\ln(8(-s)^{3/2})+\phi\right)  +\Boh\left(\frac{\ln |s|}{|s|^{5/4}}\right), &  s\to -\infty, \\
k\Ai(s)+\Boh\left(\frac{e^{-(4/3)s^{3/2}}}{s^{1/4}}\right), &  s\to +\infty.
\end{cases}
\end{equation}
Here,
\begin{equation}\label{chi}
\chi:=\frac{1}{2\pi}\ln(1-k^2), \quad \phi:=-\frac{\pi}{4}-\arg\Gamma(\ii\chi)-\arg(-k\ii),
\end{equation}
where $\Gamma(z)$ is the Gamma function. In the case $k=0$, we simply have $q(s;0)=0$.

\item One can extend Ablowitz--Segur solutions to more general $k$. Indeed, if $k\in\mathbb{C}\setminus((-\infty,-1]\cup[1,\infty))$, it follows from \cite{BCI2016} that $q(s;k)$ is pole-free on the real line with the asymptotics
% If There also exist complex Ablowitz--Segur solutions, which correspond to complex $k$; see \cite{BCI2016}. They have no poles at real values of $s$, with the asymptotic behavior
\begin{equation}\label{Complex}
q(s;k)= \begin{cases}
\frac{\sqrt{-2\chi}}{(-s)^{1/4}}\sin\left(\frac{2}{3}(-s)^{3/2}+\chi\ln(8(-s)^{3/2})+\tilde\phi\right)  +\Boh\left(\frac{1}{|s|^{{2-3|\Im \chi|}}}\right), &  s\to -\infty, \\
k\Ai(s)+\Boh\left(\frac{e^{-(4/3)s^{3/2}}}{s^{1/4}}\right), &   s\to +\infty.
\end{cases}
\end{equation}
Here, $\chi$ is defined in \eqref{chi} with $|\Im \chi|<\frac 12$ and
\begin{equation}\label{def:tildephi}
 \tilde\phi:=-\frac{\pi}{4}-\frac{\ii}{2}\ln\frac{\Gamma(-\ii \chi)}{\Gamma(\ii \chi)}^{\footnotemark}. 
\end{equation}
\footnotetext{It seems that there is a typo in \cite[Equation (1.11)]{BCI2016}, the coefficient of logarithmic term $\ln\frac{\Gamma(-\ii \chi)}{\Gamma(\ii \chi)}$ should be $-\ii/2$ instead of $-\ii$.}The asymptotic formula \eqref{Complex} reduces to \eqref{AS} if $-1<k<1$. 
\end{itemize}

The one-parameter family of solutions $q(s;k)$ plays an important role in a variety of different areas of mathematics, ranging from transient asymptotics of integrable differential equations (cf. \cite{Monvel,FN1980,XYZ}) to random matrix theory and related fields (cf. \cite{Duits,FW2015}).  In particular, we note that
% For instance, just to name a few, similarity reductions of the modified Korteweg-de Vries equation satisfy \eqref{eq:PII}; see \cite{FN1980}. The cumulative distribution function of the celebrated Tracy-Widom distribution \cite{TW1994}, which describes the limiting distribution of the largest eigenvalue for Gaussian unitary ensembles, can be represented as an integral involving the so-called Hastings--McLeod solution \cite{HM1980} of \eqref{eq:PII} with $\alpha=0$, $\gamma=1$
        \begin{align}\label{eq:TW}
          \det \left(I- \gamma\mathcal{K}^{\Ai}_{(s,+\infty)}\right)= \exp\left(-\int_{s}^{\infty} (x-s) q(x;\gamma)^2\ud x\right), \quad 0<\gamma\leq 1,
        \end{align}
 where $\mathcal{K}^{\Ai}_{(s,+\infty)}$ is the trace-class operator acting on $L^{2} (s,+\infty)$ with the Airy
kernel 
\begin{align}
    K^{\Ai}(x,y):=\int_0^\infty 
    \Ai(x+z)\Ai(y+z)dz=\frac{\Ai(x)\Ai'(y)-\Ai'(x)\Ai(y)}{x-y}.
\end{align}
If $\gamma=1$, the left-hand side of \eqref{eq:TW} is the celebrated Tracy-Widom distribution \cite{TW1994}, which describes the limiting distribution of the largest eigenvalue for a large class of unitary ensembles \cite{Deift1999}. If $\gamma \in (0,1)$, the Fredholm determinant can be interpreted as the limiting distribution of the largest observed eigenvalue after thinning; see \cite{Boh1,Boh04,Boh06} for more information.

In the context of infinite Toda system, Retakh and Rubtsov introduced a fully noncommutative analog of the PII equation \cite{RR2010}. It reads 
\begin{align}\label{noncommutative equation}
			\mathbf{D}^2 \beta_1 = 4\mathbf{s} \beta_1 +4 \beta_1 \mathbf{s} +8 \beta_1^3, \quad \mathbf{s} := \operatorname{diag}(s_1, \ldots, s_n), \quad
            \mathbf{D} := \sum_{j=1}^{n} \frac{\partial}{\partial s_j}, \quad n\in\mathbb{N}, 
		\end{align}
where $\beta_1=\beta_1(\vec{s})$ is an $n \times n$ matrix-valued function of $\vec{s}:=\left(s_1, \ldots, s_n\right)$. If $n=1$, one has 
       \begin{align}\label{eq:beta1}
           \beta_1(s_1) =\sqrt{2} q(2\sqrt{2}s_1),
       \end{align}
where $q$ satisfies the homogeneous PII equation. A special rational solution of \eqref{noncommutative equation} was also constructed in \cite{RR2010} by using the theory of quasideterminants \cite{GGRW}. We refer to \cite{IP2020,Rum1,Rum2} for the appearance of noncommutative PII equation in description of Tracy-Widom distribution function of the general $\beta$-ensembles with the even values of the parameter $\beta$; see also \cite{BCR2018} for its relation with systems of Calogero type. 

Let $C=\left(c_{j k}\right)_{j, k=1}^n$ be an arbitrary $n \times n$ constant matrix, and set
\begin{align}\label{def:S}
    S:= \frac{1}{n}\sum_{i=1}^n s_j, \qquad \delta_i:=s_i-S, \quad i=1,\ldots,n. 
\end{align} 
A remarkable result established by Bertola and Cafasso \cite{MM12} asserts that there exists a unique solution $\beta_1(\vec{s})=\beta_1(\vec{s};C)$ of the noncommutative PII equation \eqref{noncommutative equation} such that 
\begin{align}\label{+infty asympototics}
			(\beta_1)_{kl} = -c_{kl} \Ai (s_k+s_l)+\Boh \left(\sqrt{S} e^{-\frac{4}{3}(2S-2\epsilon S)^\frac{3}{2}}\right),\qquad S\to+\infty,
            %^{\footnotemark}.
            %^\ddag.
		\end{align} 
        %\setcounter{footnote}{3}
        %\footnotetext{In \cite{MM12}, $\epsilon S$ is replaced by the same constant $m$.}
with $\lvert \delta_j \rvert \leq \epsilon S$,\footnote{In \cite{MM12}, the result is stated for $|\delta_j|\leq m$ with the error bound in  \eqref{+infty asympototics} given by $\Boh \left(\sqrt{S} e^{-\frac{4}{3}(2S-2m)^\frac{3}{2}}\right)$. The same proof methodology therein, however, can be adapted to relax the constraint on $\delta_j$, which leads to a more general conclusion as stated in \eqref{+infty asympototics}.} where $\epsilon \in [0,1)$ is an arbitrary real number and $(\beta_1)_{kl}$ stands for the $(k,l)$-th entry of $\beta_1$. Clearly, this conclusion generalizes that in \cite{HM1980}, which corresponds to $n=1$. In addition, it was also shown in \cite{MM12} that if the singular values of $C$ lie in $[0, 1]$, then the associated solution is pole free for $\vec{s} \in \mathbb{R}^n$.  

In view of the strikingly different asymptotic behaviors of $q$ as shown in \eqref{HM}--\eqref{def:tildephi}, a natural and challenging problem is then to establish the asymptotics of $\beta_1(\vec{s};C)$ as $S\to -\infty$. It is the aim of the present work to resolve this problem for a class of structured matrices $C$, which particularly gives the so-called connection formulas for this family of special solutions. Our results are stated in the next section.

 \section{Main results}
 We start with our assumption on the $n\times n$ matrix $C$ in \eqref{+infty asympototics}, which will be used throughout this paper. 
 \begin{assumption}\label{assump}
It is assumed that $C= \Lambda P$, where $\Lambda= \diag (\mu_1,\ldots,\mu_n)$ with $\mu_i\in \mathbb{C}$, $\lvert \mu_i \rvert \leq 1, i=1,\ldots,n$, and $P$ is a permutation matrix such that $C^2$ is a diagonal matrix. 
 \end{assumption}
 
Under the above assumption, the matrix $C$ can be viewed as a block matrix, where some blocks are diagonal matrices and the others are off-diagonal matrices. Thus, the  diagonal matrices and the off-diagonal matrices satisfy Assumption \ref{assump}. If $n=3$, $C$ takes one of the following forms:
            \begin{align}
                \begin{pmatrix}
                    \mu_1 & 0& 0\\
                    0 & \mu_2 & 0\\
                    0 & 0&\mu_3
                \end{pmatrix}, \quad \begin{pmatrix}
                    0 & 0& \mu_1\\
                    0 & \mu_2 & 0\\
                    \mu_3 & 0&0
                \end{pmatrix}, \quad \begin{pmatrix}
                    0 & \mu_1& 0\\
                    \mu_2 & 0 & 0\\
                    0 & 0&\mu_3
                \end{pmatrix}, \quad\begin{pmatrix}
                    \mu_1 & 0& 0\\
                    0 & 0 & \mu_2 \\
                    0 & \mu_3 & 0
                \end{pmatrix}.
            \end{align}
% In , $C$ can be viewed as a block matrix, where some blocks are diagonal matrices and others are off-diagonal matrices.

The condition that $C^2$ is a diagonal matrix implies that $P^2$ is an identity matrix.  Together with $\lvert \mu_i \rvert \leq 1$, $i=1,\ldots,n$, it follows that the singular values of $C$ lie in $[0, 1]$, which shows the associated solution $\beta_1(\vec{s};C)$ is pole free for $\vec{s} \in \mathbb{R}^n$. 

% then $CC^*$ is also a diagonal matrix, so the singular values of $C$ lie in $[0, 1]$ for $\lvert \mu_i \rvert \leq 1, i=1, \cdots,n$. Thus, $\beta_1(\vec{s})$ is pole free for $\vec{s} \in \mathbb{R}^n$. 

We are now ready to state large negative $S$ asymptotics of $\beta_1$.

	%Here are our results:
	\begin{theorem}\label{-infty asympototics results}
		% Assume $C= \Lambda P$, where $\Lambda= \diag (\mu_1,\cdots,\mu_n)$, $\lvert \mu_i \rvert \leq 1, i=1, \cdots,n$; $P$ is a permutation matrix and $C^2$ is a diagonal matrix. 
        Let the $n\times n$ matrix-valued function $\beta_1(\vec{s})= \beta_1(\vec{s};C)$ be the unique solution of the noncommutative PII equation \eqref{noncommutative equation} characterized by the asymptotics \eqref{+infty asympototics}. With $S$ and $\delta_i$ defined in \eqref{def:S}, we have, under Assumption \ref{assump} on $C$,  
		\begin{align}\label{-infty asympototics}
			(\beta_1)_{kl}= \begin{cases}
				\sqrt{\frac{-s_k-s_l}{2}}c_{kl}+\Boh(S^{-1}), \quad &c_{kl}c_{lk}=1,\\
                \frac{ (-s_k-s_l)^{-\frac 14}}{\sqrt{\pi }} \cos \left(\ii \left(\what \theta_k\left(\sqrt{\frac{s_k+s_l}{S}}\right)+\what \theta_l\left(\sqrt{\frac{s_k+s_l}{S}}\right)\right)- \frac{\pi}{4} \right)c_{kl}+ \Boh (S^{-1}), \quad &c_{kl}c_{lk}=0,\\
				(-s_k - s_l)^{-\frac 14} \sqrt{\frac{-\ln \left(1-c_{kl}c_{lk}\right)}{\pi c_{kl}c_{lk}}}\cos \psi(s_k,s_l) c_{kl} + \Boh (S^{-1}), \quad &c_{kl}c_{lk}\neq 0,1,
			\end{cases}
		\end{align}
		if $S \to -\infty$ and $\delta_i = \epsilon_i S$ with $\epsilon_i \in (-1,1)$ being fixed, where the function $\psi(s_k,s_l)$ is related to the parameters $c_{kl}$ and $c_{lk}$ through the connection formula
        \begin{align}\label{def:psi}
           \psi(s_k,s_l) &:=  \ii \left(\what \theta_k\left(\sqrt{\frac{s_k+s_l}{S}}\right)+\what \theta_l\left(\sqrt{\frac{s_k+s_l}{S}}\right)\right) +\frac{3}{4\pi}\ln (1-c_{kl}c_{lk}) \ln \left(-4(s_k+s_l)\right)\nonumber\\
           &\quad +\frac{\ii}{2}\ln\frac{\Gamma\left(-\frac{\ln(1-c_{kl}c_{lk})}{2 \pi \ii}\right)}{\Gamma\left(\frac{\ln(1-c_{kl}c_{lk})}{2 \pi \ii}\right)}+ \frac{\pi}{4}
           %&\quad + \arg \Gamma \left(\frac{\ln (1- c_{kl}c_{lk})}{2 \pi \ii}\right) + \frac{\pi}{4}
        \end{align}
        with 
        \begin{equation}\label{def:what-theta}
            \what \theta_k (z) := \ii (-S)^{\frac 32} \left(\frac{z^3}{6} - \frac{s_k}{S}z\right).
        \end{equation}
	\end{theorem}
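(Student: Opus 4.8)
\emph{Strategy and block reduction.} The plan is a Deift--Zhou steepest-descent analysis of the Riemann--Hilbert (RH) problem attached to $\beta_1(\vec s;C)$, exploiting the block structure forced by Assumption~\ref{assump}. First I would recall from \cite{MM12} the RH representation of $\beta_1(\vec s;C)$: a piecewise-analytic $2n\times 2n$ matrix $\Psi(\zeta)=\Psi(\zeta;\vec s)$ with a $2\times 2$ block structure, constant jumps across a finite union of rays issuing from the origin whose jump matrices are built from $I_n$ and the Stokes data encoded by $C$, and the normalization $\Psi(\zeta)\,e^{\ii\Theta(\zeta)\sigma_3}\to I$ as $\zeta\to\infty$, where $\Theta(\zeta)=\diag(\theta_1(\zeta),\dots,\theta_n(\zeta))$ collects the scalar PII phases, the $k$-th of which depends on $s_k$; then $\beta_1$ is read off from the $1/\zeta$-coefficient of $\Psi$ at infinity. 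Under Assumption~\ref{assump} the permutation $P$ is an involution, so relabelling the $n$ channels by the cycle structure of $P$ renders $C$, and hence the whole RH problem, block diagonal: each fixed point $k$ of $P$ gives a decoupled $2\times 2$ sub-problem, namely the RH problem for scalar homogeneous PII with parameter $\mu_k$, so that by \cite{HM1980,DZ95,BCI2016} the diagonal entry $(\beta_1)_{kk}$ obeys a Hastings--McLeod ($|\mu_k|=1$) or Ablowitz--Segur ($|\mu_k|<1$) law, matching the diagonal ($k=l$) lines of \eqref{-infty asympototics}; each $2$-cycle $(k,l)$ of $P$ gives a decoupled $4\times 4$ sub-problem with $C$-block $\bigl(\begin{smallmatrix}0&\mu_k\\ \mu_l&0\end{smallmatrix}\bigr)$, governing jointly the off-diagonal pair $(\beta_1)_{kl},(\beta_1)_{lk}$. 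Only this $4\times 4$ block is genuinely new.

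\emph{Rescaling, $g$-function, contour deformation.} In each sub-problem I would set $\zeta=(-S)^{1/2}z$, so that the phases become $(-S)^{3/2}$ times fixed functions; for the $4\times 4$ block the phase combination controlling the nontrivial off-diagonal jumps is exactly $\what\theta_k+\what\theta_l$ with $\what\theta_k$ as in \eqref{def:what-theta}, whose stationary points are $z=\pm z_0$ with $z_0=\sqrt{(s_k+s_l)/S}=\sqrt{2+\epsilon_k+\epsilon_l}\in(0,2)$ --- bounded away from $0$ uniformly since the $\epsilon_i$ are fixed in $(-1,1)$. After introducing the appropriate $g$-function and opening lenses, every exponential off a fixed neighbourhood of $\pm z_0$ is exponentially small, and the three cases of \eqref{-infty asympototics} are distinguished by the ``reflection'' parameter $c_{kl}c_{lk}$ carried by the residual jump joining $-z_0$ to $z_0$: when $c_{kl}c_{lk}=1$ that jump degenerates and, as in the Hastings--McLeod analysis, one is left with an algebraically growing outer solution producing the $\sqrt{(-s_k-s_l)/2}\,c_{kl}$ term; when $c_{kl}c_{lk}=0$ (which under Assumption~\ref{assump} forces $\mu_l=0$, so the $4\times 4$ block is triangular) there is no reflection and the model problem is elementary, of Airy type, giving the middle line; when $0\ne c_{kl}c_{lk}\ne 1$ one is in the Ablowitz--Segur regime.

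\emph{Local parametrices, matching, extraction.} At $z=\pm z_0$ I would construct local parametrices from parabolic cylinder (Weber) functions with parameter $\nu=-\tfrac{1}{2\pi\ii}\ln(1-c_{kl}c_{lk})$ (the argument of the first Gamma factor in \eqref{def:psi}); the connection formulae for the Weber equation supply the term $\tfrac{\ii}{2}\ln\bigl(\Gamma(-\nu)/\Gamma(\nu)\bigr)$ in \eqref{def:psi}, while pulling the $z^{\pm 2\ii\nu}$ behaviour back through the $(-S)^{3/2}$-scaling produces the logarithmic phase $\tfrac{3}{4\pi}\ln(1-c_{kl}c_{lk})\ln(-4(s_k+s_l))$. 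I would then patch the global (outer) parametrix with the two local ones, show that the ratio with $\Psi$ solves a small-norm RH problem with jump $I+\Boh(S^{-1})$ uniformly in the $\epsilon_i$, invert it, and extract the $1/\zeta$-coefficient of $\Psi$ at infinity: the outer/parametrix data give the leading terms of \eqref{-infty asympototics}, the small-norm correction the $\Boh(S^{-1})$ error. The amplitude $\sqrt{-\ln(1-c_{kl}c_{lk})/(\pi c_{kl}c_{lk})}$ is symmetric in $k$ and $l$ because it is carried by the $2\times 2$ off-diagonal block, and it is split between the $(k,l)$- and $(l,k)$-entries by the asymmetric prefactors $c_{kl}$ and $c_{lk}$ --- precisely the advertised phenomenon that the $(k,l)$-asymptotics also feels $c_{lk}$.

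\emph{Main obstacle.} The crux is the coupled $4\times 4$ block: unlike the scalar case it does not collapse to a single scalar PII RH problem, since the distinct phases $\what\theta_k\ne\what\theta_l$ are entangled by the off-diagonal Stokes data. One must (i) pin down the correct $g$-function and contour geometry so that the effective exponent is exactly $\pm(\what\theta_k+\what\theta_l)$ with everything else subdominant, uniformly in the $\epsilon_i$; (ii) control the Weber parametrices at $\pm z_0$, which in general differ from the stationary points $\pm\sqrt{2s_k/S}$, $\pm\sqrt{2s_l/S}$ of the individual phases, and ensure that $\pm z_0$ collide neither with each other, nor with the origin, nor with the Stokes rays; and (iii) keep every estimate uniform as $S\to-\infty$ so the error is a genuine $\Boh(S^{-1})$. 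A secondary difficulty is the Hastings--McLeod-type block $c_{kl}c_{lk}=1$, where the outer problem is nontrivial and the small-norm step is as delicate as in the scalar $s\to-\infty$ Hastings--McLeod analysis of \cite{DZ95}.
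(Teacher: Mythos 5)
Your strategy is sound and the essential analytic ingredients (Deift--Zhou steepest descent, $g$-function for the Hastings--McLeod regime, lens opening and parabolic-cylinder local parametrices with $\nu=-\tfrac{1}{2\pi\ii}\ln(1-c_{kl}c_{lk})$ for the Ablowitz--Segur regime, small-norm conclusion) coincide with what the paper does. Where you diverge, genuinely, is the preliminary decomposition. You propose to reorder the channels by the cycle structure of the involution $\sigma$ underlying $P$, so that the rescaled RH problem $\Psi$ block-diagonalizes into independent $2\times2$ (fixed-point) and $4\times4$ (transposition) sub-problems. The paper instead uses a coarser splitting: it sets $\mathcal{I}=\{i:c_{i\sigma(i)}c_{\sigma(i)i}=1\}$, $\mathcal{J}=\{1,\dots,n\}\setminus\mathcal{I}$, writes $\Psi=\Psi_{\mathcal{I}}\cdot\mathrm{proj}_{\mathcal{I}}+\Psi_{\mathcal{J}}\cdot\mathrm{proj}_{\mathcal{J}}$ using the commutation relations \eqref{IC}, and then runs a single Airy-type analysis for $\Psi_{\mathcal{I}}$ and a single parabolic-cylinder analysis for $\Psi_{\mathcal{J}}$, dealing simultaneously with several turning points $\pm\widetilde r_k$ (respectively $\pm\what r_k$). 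Your finer, cycle-by-cycle decomposition has the merit of working with fixed block size ($4\times4$ at most) and exactly two turning points $\pm z_0$ per block, entirely avoiding the $m_1/m_2$ bookkeeping, the nested jump matrices $Q(z)$, $\what Q(z)$, and the multi-point global parametrices $N$, $\what N$; the price is that inside each $4\times4$ block you still have to perform both the HM and AS steepest descents and fuse the results, whereas the paper's split keeps each of its two analyses homogeneous in type. Both routes reach the same local models and yield the connection formula \eqref{def:psi} in the same way.

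Two small but real slips. First, for the $c_{kl}c_{lk}=0$ line you claim the local ``model problem is elementary, of Airy type''; this contradicts your own (correct) statement two lines later that the local parametrices at $\pm z_0$ are Weber functions with $\nu=-\tfrac{1}{2\pi\ii}\ln(1-c_{kl}c_{lk})$. When $c_{kl}c_{lk}=0$ that parameter is $\nu=0$, the resulting model is the degenerate parabolic-cylinder (error-function) parametrix, not Airy, and the paper handles the apparent $0/0$ via the limit $h_{1j}\mu_{\sigma(j)}^{-1}\to\ii\mu_j$ in \eqref{hij=0-1}; an Airy model would not produce the oscillatory $\cos$ law in the middle line of \eqref{-infty asympototics}. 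Second, for fixed points you classify HM vs.\ AS by $|\mu_k|=1$ vs.\ $|\mu_k|<1$; the correct dividing condition under Assumption~\ref{assump} is $\mu_k^2=1$ vs.\ $\mu_k^2\neq 1$ (i.e.\ $\mu_k=\pm1$ or not), since $\mathcal{I}$ is defined by the product $c_{i\sigma(i)}c_{\sigma(i)i}=1$ and complex $\mu_k$ with $|\mu_k|=1$ but $\mu_k\neq\pm1$ still fall in the Ablowitz--Segur regime.
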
 
For $n=1$, the formulas \eqref{-infty asympototics} are equivalent to those given in \eqref{HM}, \eqref{AS} and \eqref{Complex} by noting the relation \eqref{eq:beta1}. For general $n>1$, our result reveals special features of $\beta_1$ under the present setting. On the one hand, it comes out that $(\beta_1)_{kl}$ corresponds to either an extension of the Hastings–McLeod solution or an extension of the Ablowitz--Segur solution for the PII equation, depending on the value of the product $c_{kl}c_{lk}$. Consequently, $\beta_1$ in general exhibits a hybrid behavior, combining both features of Hastings--McLeod type and Ablowitz--Segur type solutions. On the other hand, it is worthwhile to emphasize the asymptotic behavior of $(\beta_1)_{kl}$ as $S \to -\infty$ cannot be deduced solely from its behavior as $S \to +\infty$ in the Ablowitz--Segur case, which actually also depends on the positive infinity asymptotics of $(\beta_1)_{lk}$. This new and intriguing phenomenon disappears in the scalar case.

It remains open to establish the large negative $S$ asymptotics of $\beta_1$ without Assumption \ref{assump} on $C$. Even for $n=2$, this question is highly nontrivial. If $C$ is a $2 \times 2$ Hermitian matrix with eigenvalues in $(-1,1)$, a recent work of Du, Xu and Zhao \cite{DXZ25} shows that the asymptotics of $\beta_1$ are expressed in terms of one-parameter family of special solutions of the Painlev\'{e} V equation in a different asymptotic regime from us. We believe the asymptotics of $\beta_1$ will exhibit rich asymptotic behaviors for general matrices $C$, and leave this problem for future investigations.

\paragraph{Strategy of proof}
To prove Theorem \ref{-infty asympototics results}, our starting point is a profound representation of $\beta_1$ in terms of the Fredholm determinant of a matrix version of the Airy-convolution operator $\mathcal{A}\boldsymbol{\mathit{i}}
_{\vec{s}}$ acting on $L^2\left(\mathbb{R}_{+}, \mathbb{C}^n\right)$. More precisely, this operator reads 
\begin{align}\label{def:Ai}
    \left(\mathcal{A}\boldsymbol{\mathit{i}}_{\vec{s}} \vec{f}\right)(x) & :=\int_{\mathbb{R}_{+}} \mathbf{A i}(x+y ; \vec{s}) \vec{f}(y) d y
\end{align}
with $\vec{f}:=\left(f_1, \cdots, f_n\right)^T$ and
\begin{align}
    \mathbf{A i}(x ; \vec{s}) & :=\int_{\gamma_{+}} \mathrm{e}^{\theta(\mu)} C \mathrm{e}^{\theta(\mu)} \mathrm{e}^{\ii x \mu} \frac{d \mu}{2 \pi}=\left(c_{j k} \operatorname{Ai}\left(x+s_j+s_k\right)\right)_{j, k=1}^n,
\end{align}
where 
\begin{align}\label{def:theta}
    \theta(\mu) & :=\ii\operatorname{diag}\left(\frac{\mu^3}{6}+s_1\mu, \frac{\mu^3}{6}+s_2\mu, \ldots, \frac{\mu^3}{6}+s_n\mu\right)
\end{align}
and $\gamma_{+}$ is an infinite contour in the upper half plane that is asymptotic to the straight lines with arguments  $\frac{\pi}{2} \pm \frac{\pi}{3}$ at infinity.  It was shown in \cite{MM12} that
\begin{align}\label{det:Ai}
\operatorname{det}\left(I-\mathcal{A}\boldsymbol{\mathit{i}}
_{\vec{s}}^2\right)=\exp \left(-4 \int_s^{\infty}(t-s) \operatorname{Tr}\left(\beta_1(t+\vec{\delta})^2\right) d t\right),
\end{align}
where $\beta_1$ is the unique solution of the noncommutative PII equation \eqref{noncommutative equation} characterized by the asymptotics \eqref{+infty asympototics},
$\vec{\delta}:=\left(\delta_1, \ldots, \delta_n\right)$
with $\delta_i$ defined in \eqref{def:S}, and $t+\vec{\delta}:=(t+\delta_1,\ldots,t+\delta_n)$. This relation provides a noncommutative (matrix) version of the Tracy–Widom distribution \eqref{eq:TW}. 

By extending the general theory of integrable operators of Its-Izergin-Korepin-Slavnov \cite{IIKS} to operators of Hankel form, Bertola and Cafasso in \cite{MM12} constructed a $2n \times 2n$ Riemann–Hilbert (RH) problem to compute the resolvent operator of the matrix convolution operator $\mathcal{A}\boldsymbol{\mathit{i}}_{\vec{s}}$. This in turn yields an RH characterization of $\beta_1$; see \eqref{def:beta1} below. While it is straightforward to show \eqref{+infty asympototics} from this relation, one encounters significant obstacles when performing Deift-Zhou steepest descent analysis \cite{Deift1999,Deift1993} to the associated RH problem for large negative $S$ and a general constant matrix $C$. Indeed, even in the scalar case, the analytical techniques to deal with the Hastings--McLeod and Ablowitz--Segur solutions are quite different. The hybrid feature of $\beta_1$ and large size of the RH problem thus leads to the difficulty of the present work. To address the technical challenges, a key observation here is that  Assumption \ref{assump} on $C$ allows us to decompose the original RH problem into two RH problems by introducing proper index sets associated with the permutation matrix $P$; see \eqref{IJ} below. The advantage of this decomposition is that it decouples the different features of the solution, which are ready to analyze separately.

The rest of this paper is organized as follows. In Section \ref{sec:3}, we recall the $2n \times 2n$ RH problem formulated in \cite{MM12}, which characterizes $\beta_1$. After the decomposition of the rescaled RH problem, we split the entire analysis into two parts: the asymptotic analysis corresponding to a higher-dimension extension of the Hastings–McLeod solution in Section \ref{sec:I} and that for a higher-dimension extension of the Ablowitz--Segur solution in Section \ref{sec:II}. The proof of Theorem \ref{-infty asympototics} is presented Section \ref{proof}, as outcomes of RH analysis.

    %Although there are some subtle differences in the Riemann–Hilbert analysis between these two parts, the idea of handling higher-dimension RH problems remains essentially the same.
    
    \paragraph{Notations} Throughout this paper, the following notations are frequently used.
	\begin{itemize}
		\item If $M$ is a matrix, then $(M)_{ij}$ stands for its $(i,j)$-th entry. A $2n \times 2n$ matrix $M$ can also be viewed as a $2 \times 2$ block matrix and each block is an $n\times n$ matrix. The $(i,j)$-th block is denoted by $[M]_{ij}$ for $i,j = 1,2$.	
        
        \item 
        We use $I$ to denote an identity matrix, and the size might differ in different contexts. To emphasize a $k\times k$ identity matrix, we also use the notation $I_k$.
		\item  It is notationally convenient to denote by $E_{jk}$ the $n \times n$ elementary matrix
		whose entries are all $0$, except for the $(j,k)$-entry, which is $1$, that is,
		\begin{equation}\label{def:Eij}
			E_{jk}=\left( \delta_{l,j}\delta_{k,m} \right)_{l,m=1}^n,
		\end{equation}
		where $\delta_{j,k}$ is the Kronecker delta.

		\item We denote by $D(z_0, r)$ the open disc centered at $z_0$ with radius $r > 0$, i.e.,
		\begin{equation}\label{def:dz0r}
			D(z_0, r) := \{ z\in \mathbb{C} \mid |z-z_0|<r \},
		\end{equation}
		and by $\partial D(z_0, r)$ its boundary. The orientation of $\partial D(z_0, r)$ is taken in a clockwise manner.
        % \item A $2n \times 2n$ matrix $M$ can be viewed as a $2 \times 2$ block matrix, where the $(i,j)$-th block is denoted by $[M]_{ij}$ for $i,j = 1,2$.		
        \item As usual, the three Pauli matrices $\{\sigma_j\}_{j=1}^3$ are defined by
		\begin{equation}\label{def:Pauli}
			\sigma_1=\begin{pmatrix}
				0 & 1 \\
				1 & 0
			\end{pmatrix},
			\qquad
			\sigma_2=\begin{pmatrix}
				0 & -\ii \\
				\ii & 0
			\end{pmatrix},
			\qquad
			\sigma_3=
			\begin{pmatrix}
				1 & 0 \\
				0 & -1
			\end{pmatrix}.
		\end{equation}
        \item  For an arbitrary $n \times n$ matrix $M$, we denote the tensor product
        \begin{align}\label{def:tensor}
            M \otimes \sigma_3 := \begin{pmatrix}
                M & 0\\
                0 & -M
            \end{pmatrix}.
        \end{align}
	\end{itemize}
	
	%\section{Statement of results}

\section{The RH problem setup}\label{sec:3}
\subsection{An RH characterization of $\beta_1$}
Let $\beta_1(\vec{s})= \beta_1(\vec{s};C)$, $\vec{s}\in\mathbb{R}^n$, be the unique solution of the noncommutative PII equation \eqref{noncommutative equation} characterized by the asymptotics \eqref{+infty asympototics}. By \cite[Theorem 5.3]{MM12}, it follows that 
\begin{align}\label{def:beta1}
			\beta_1(\vec s) = -\ii \lim_{\lambda\to \infty} \lambda \left[\Xi\right]_{12}(\lambda;\vec{s}),
		\end{align}
where $\Xi$ is a $2n\times 2n$ matrix-valued function solving the following RH problem and $\left[\Xi\right]_{12}$ stands for its $(1,2)$-th block.
	\begin{rhp}\label{rhp:pole-free}
		\hfill
		\begin{itemize}
			\item[\rm (a)] $\Xi(\lambda):= \Xi(\lambda;\vec {s},C)$ is defined and analytic in $ \mathbb{C} \setminus (\gamma_+ \cup \gamma_-)$, 
			where the contours $\gamma_+$ and $\gamma_-$ are illustrated in Figure \ref{fig:pole-free}. 
            %$n$ real parameters introduced earlier, see Figure \ref{fig:pole-free} for an illustration of the contour $\gamma_+ \cup \gamma_-$ and the orientation.
		
			\item[\rm (b)] For  $\lambda \in\gamma_+ \cup \gamma_-$, we have
			\begin{align}
				\Xi_+(\lambda) = \Xi_-(\lambda) \begin{pmatrix}
					I_n & e^{\theta(\lambda)}Ce^{\theta(\lambda)}\, \chi_{\gamma_+} \\
					e^{-\theta(\lambda)}Ce^{-\theta(\lambda)}\,\chi_{\gamma_-} & I_n
				\end{pmatrix}, 
			\end{align}
			% \begin{align}
			% 	M(\lambda):= \begin{pmatrix}
			% 		I_n & e^{\theta(\lambda)}Ce^{\theta(\lambda)}\, \chi_{\gamma_+} \\
			% 		e^{-\theta(\lambda)}Ce^{-\theta(\lambda)}\,\chi_{\gamma_-} & I_n
			% 	\end{pmatrix},
			% \end{align}
			where $\chi_{\scalebox{0.6} X}$ denotes the indicator function of set $X$ and $\theta$ is defined in \eqref{def:theta}.  
   %          matrix $C$ is a constant $n\times n$ matrix with
			% \begin{align}
			% 	\theta(\lambda):&=\frac{\ii \lambda^3}{6} I_n+\ii \mathbf{s} \lambda, 
			% \end{align}
   %          where $\mathbf{s}$ is defined in \eqref{noncommutative equation}.
			\item[\rm (c)] As $\lambda \to \infty$ with $\lambda \in \mathbb{C} \setminus (\gamma_+ \cup \gamma_-)$, we have
			\begin{align}\label{asy:Xi}
				\Xi(\lambda) = I_{2n}+ \frac{\Xi_1}{\lambda}+\Boh(\lambda^{-2}),
			\end{align}
			where $\Xi_1$ is independent of $\lambda$.
		\end{itemize}
	\end{rhp}
	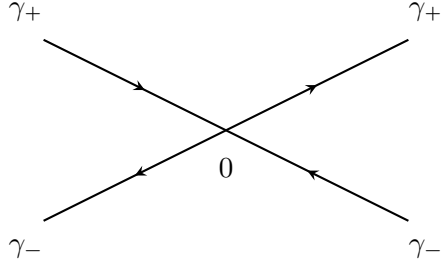
\begin{figure}[t]
				\begin{center}
					\begin{tikzpicture}[scale=1.2, >=stealth]
						
						% Draw the crossing lines with arrows in the middle
						\draw[thick, postaction={decorate, decoration={markings, mark=at position 0.5 with {\arrow{>}}}}] 
						(0,0) -- (2,1);
						\draw[thick, postaction={decorate, decoration={markings, mark=at position 0.5 with {\arrow{>}}}}] 
						(0,0) -- (-2,-1);
						\draw[thick, postaction={decorate, decoration={markings, mark=at position 0.5 with {\arrow{<}}}}] 
						(0,0) -- (-2,1);
						\draw[thick, postaction={decorate, decoration={markings, mark=at position 0.5 with {\arrow{<}}}}] 
						(0,0) -- (2,-1);
						
						% Add label for the origin
						\node[below] at (0, -0.2) {0};
						
						% Add labels \gamma_+ and \gamma_- symmetrically
						\node[above right=10pt] at (1.6, 0.8) {$\gamma_+$};
						\node[above left=10pt] at (-1.6, 0.8) {$\gamma_+$};
						\node[below right=10pt] at (1.6, -0.8) {$\gamma_-$};
						\node[below left=10pt] at (-1.6, -0.8) {$\gamma_-$};
						
					\end{tikzpicture}
					\caption{The jump contours $\gamma_+$ and $  \gamma_-$  in the RH problem for $\Xi$.}
					\label{fig:pole-free}
					
				\end{center}
			\end{figure}
            
	% In \cite[Theorem 5.3]{MM12}, Bertola and Cafasso established the connection between a family of specific solutions to the noncommutative PII equation and the aforementioned Riemann–Hilbert problem. 
 %    More preciseley, let $\Xi(\lambda)= \Xi(\lambda;\vec {s})$ be the solution of RH problem \ref{rhp:pole-free}, then we have 	\begin{align}\label{def:beta1}
	% 		\beta_1(\vec s) = -\ii \lim_{\lambda\to \infty} \lambda \left[\Xi\right]_{12}(\lambda;\vec{s}),
	% 	\end{align}
	% 	where $\beta_1(\vec s)$ is defined in \eqref{noncommutative equation}.

\subsection{Rescaling and decomposition of $\Xi$}
\label{sec:XitoPsi}
	We start with a rescaling of the $\mathrm{RH}$ problem for $\Xi$, which is defined by
	\begin{equation}\label{def:XitoPsi}
		\Psi(z)=\Xi(\sqrt{-S}z)e^{\what \theta(z)\otimes \sigma_3},
	\end{equation}
	where the tensor product $\otimes$ is defined in \eqref{def:tensor}, 
	\begin{align}\label{def:thetahat}
		\what \theta(z) :=\ii(-S)^{\frac{3}{2}} \left[\frac{1}{6}z^3 I_n-\frac{z}{S}\mathbf{s}\right]
	\end{align}
	with $\mathbf{s}$ defined in \eqref{noncommutative equation} and $\sigma_3$ is defined in \eqref{def:Pauli}.  In view of $\mathrm{RH}$ problem for $\Xi$, it is readily seen that $\Psi$ satisfies the following $\mathrm{RH}$ problem. 
	\begin{rhp}\label{rhp:Psi}
		\hfill
		\begin{itemize}
			\item[\rm (a)] $\Psi(z)$ is defined and analytic in $ \mathbb{C} \setminus (\gamma_+ \cup \gamma_-)$.
            %, see Figure \ref{fig:pole-free} for an illustration of the contour $\gamma_+ \cup \gamma_-$ and the orientation.
			\item[\rm (b)] For  $z \in\gamma_+ \cup \gamma_-$, we have
			\begin{align}\label{eq:Psijump}
				\Psi_+(z) = \Psi_-(z)\begin{pmatrix}
					I_n & C\, \chi_{\gamma_+} \\
					C\, \chi_{\gamma_-} & I_n
				\end{pmatrix}.
			\end{align}
			% with jump matrix
			% \begin{align}\label{def:hatM}
			% 	\what M:= \begin{pmatrix}
			% 		I_n & C\, \chi_{\gamma_+} \\
			% 		C\, \chi_{\gamma_-} & I_n
			% 	\end{pmatrix}.
			% \end{align}
			\item[\rm (c)] As $z \to \infty$ with $z \in \mathbb{C} \setminus (\gamma_+ \cup \gamma_-)$, we have
			\begin{align}\label{eq:asyPsi}
				\Psi(z) = \left(I_{2n}+ \frac{\Xi_1}{\sqrt{-S}z}+\Boh\left(z^{-2}\right)\right) e^{\what \theta(z)\otimes \sigma_3},
			\end{align}
			where $\Xi_1$ is defined in \eqref{asy:Xi} and $\what \theta(z)$ is defined in \eqref{def:thetahat}.
		\end{itemize}
	\end{rhp}

We next show that, under Assumption \ref{assump}, $\Psi$ can be written as a sum of two functions, and each function is related to a specific RH problem. This decomposition is crucial in our further analysis. To proceed, recall the canonical one-to-one correspondence between the permutation matrices of order $n$ and the symmetric group $S_n$, we denote by $\sigma\in S_n$ the permutation associated with the matrix $P$ in Assumption \ref{assump}, that is, 
        \begin{align}\label{def:sigma}
        P = \sum_{i=1}^{n} E_{i\sigma(i)},
        \end{align}
where $E_{ij}$ is defined in \eqref{def:Eij}. Note that $P^2=I_n$, it then follows $\sigma^2 = \mathrm{id}$, i.e., $\sigma$ is an involution. Since $C^2$ is a diagonal matrix, the condition $c_{kl} \neq 0$ implies $l = \sigma(k)$, thus, 
  \begin{align} \label{expansion}
      C= \sum_{i=1}^{n} c_{i\sigma(i)} E_{i\sigma(i)}. 
  \end{align}
Based on \eqref{expansion}, we distinguish between the cases $c_{kl}c_{lk}=1$ and $c_{kl}c_{lk}\neq 1$ by introducing the index sets:
\begin{align}\label{def:calIJ}
      \mathcal{I}:=\{i:c_{i\sigma(i)}c_{\sigma(i)i}=1\}, \qquad  \mathcal{J}=\{1,\ldots,n\} \setminus \mathcal{I}.
\end{align}
  
\begin{remark}
      When $\mathcal{I} = \emptyset$, this is equivalent to the condition $\det(I_n - C^2) \neq 0$; whereas when $\mathcal{J} = \emptyset$, it corresponds to $C^2 = I_n$.
\end{remark}

With the aid of the index sets $\I$ and $\J$ defined in \eqref{def:calIJ}, it is readily seen that $\Psi$ admits the following decomposition:
\begin{align}\label{IJ}
        \Psi (z) = \Psi_{\mathcal I}(z) \begin{pmatrix}
            \sum_{i \in \mathcal{I}} E_{ii} & 0\\
            0 & \sum_{i \in \mathcal{I}} E_{ii}
        \end{pmatrix} + \Psi_{\mathcal J}(z) \begin{pmatrix}
            \sum_{j \in \mathcal{J}} E_{jj} & 0\\
            0 & \sum_{j \in \mathcal{J}} E_{jj}
        \end{pmatrix},
    \end{align}
where
    \begin{align}
        \Psi_{\mathcal I}(z) &:= \Psi (z)\begin{pmatrix}
            \sum_{i \in \mathcal{I}} E_{ii} & 0\\
            0 & \sum_{i \in \mathcal{I}} E_{ii}
        \end{pmatrix} + \begin{pmatrix}
            \sum_{j \in \mathcal{J}} E_{jj} & 0\\
            0 & \sum_{j \in \mathcal{J}} E_{jj}
        \end{pmatrix},\label{def-I}\\
        \Psi_{\mathcal J}(z) &:=  \Psi (z)\begin{pmatrix}
            \sum_{j \in \mathcal{J}} E_{jj} & 0\\
            0 & \sum_{j \in \mathcal{J}} E_{jj}
              \end{pmatrix}+\begin{pmatrix}
            \sum_{i \in \mathcal{I}} E_{ii} & 0\\
            0 & \sum_{i \in \mathcal{I}} E_{ii}
        \end{pmatrix}.\label{def-J}
    \end{align}

A key observation now is the sets $\I$ and $\J$ are invariant under the involution  $\sigma$, that is, if $i \in \I$ or $j\in \J$, then $\sigma(i) \in \I$ or $\sigma(j) \in \J$, respectively. This invariance ensures the  commutation relations
  \begin{align}\label{IC}
      \sum_{i \in \mathcal{I}} E_{ii}C=C\sum_{i \in \mathcal{I}} E_{ii},
  \qquad
      \sum_{j \in \mathcal{J}} E_{jj}C=C\sum_{j \in \mathcal{J}} E_{jj},
  \end{align}
hold. 
This pivotal commutation property allows us to establish RH problems for $\Psi_{\mathcal I}$ and $\Psi_{\mathcal J}$, respectively.   
   %Accordingly, we decompose $\Psi(z)$ into the sum of two 

The function $\Psi_{\mathcal I}$ defined in \eqref{def-I} satisfies the following RH problem.
    \begin{rhp}\label{rhp:PsiI}
		\hfill
		\begin{itemize}
			\item[\rm (a)] $\Psi_{\mathcal I}(z)$ is defined and analytic in $ \mathbb{C} \setminus (\gamma_+ \cup \gamma_-)$.
            %, see Figure \ref{fig:pole-free} for an illustration of the contour $\gamma_+ \cup \gamma_-$ and the orientation.
			\item[\rm (b)] For  $z \in\gamma_+ \cup \gamma_-$, we have
			\begin{align}\label{jump:PsiI}
				\Psi_{\mathcal I, +}(z) = \Psi_{\mathcal I,-}(z) \begin{pmatrix}
					I_n & \sum_{i \in \mathcal{I}} E_{ii}C\, \chi_{\gamma_+} \\
					\sum_{i \in \mathcal{I}} E_{ii}C\, \chi_{\gamma_-} & I_n
				\end{pmatrix}.
			\end{align}
			\item[\rm (c)] As $z \to \infty$ with $z \in \mathbb{C} \setminus (\gamma_+ \cup \gamma_-)$, we have
			\begin{align}\label{asy:PsiI}
				\Psi_{\mathcal I}(z) = \left(I_{2n}+ \frac{\Psi_{\mathcal I,1}}{\sqrt{-S}z}+\Boh\left(z^{-2}\right)\right) e^{\sum_{i \in \mathcal{I}} E_{ii} \what \theta(z)\otimes \sigma_3},
			\end{align}
			where 
            \begin{align}\label{I1}
                \Psi_{\mathcal I,1} = \Xi_1 \begin{pmatrix}
            \sum_{i \in \mathcal{I}} E_{ii} & 0\\
            0 & \sum_{i \in \mathcal{I}} E_{ii}
        \end{pmatrix}
            \end{align}
            with $\Xi_1$ defined in \eqref{asy:Xi}.
		\end{itemize}
	\end{rhp}
    \begin{proof}
    It is easily seen that $\Psi_{\I}$ is analytic in $\mathbb{C}\setminus (\gamma_+\cup\gamma_-)$. To show the jump condition, we see from \eqref{eq:Psijump} and \eqref{def-I}   that for $z\in\gamma_+$,
         % Here, we show that $\Psi_{\I}$ defined in \eqref{def-I} indeed solves RH problem \ref{rhp:PsiI}. We start with the jump condition \eqref{jump:PsiI}. For $z \in \gamma_+$, by applying \eqref{IC} to the fourth equation, we obtain
        \begin{align}\label{com:PsiI}
            &\Psi_{\mathcal I,-}(z) \begin{pmatrix}
					I_n & \sum_{i \in \mathcal{I}} E_{ii}C \\
					0 & I_n
				\end{pmatrix}\nonumber\\
                &=\left[\Psi_- (z)\begin{pmatrix}
            \sum_{i \in \mathcal{I}} E_{ii} & 0\\
            0 & \sum_{i \in \mathcal{I}} E_{ii}
        \end{pmatrix} + \begin{pmatrix}
            \sum_{j \in \mathcal{J}} E_{jj} & 0\\
            0 & \sum_{j \in \mathcal{J}} E_{jj}
        \end{pmatrix}\right]\begin{pmatrix}
					I_n & \sum_{i \in \mathcal{I}} E_{ii}C \\
					0 & I_n
				\end{pmatrix}\nonumber\\
                &=\Psi_- (z)\begin{pmatrix}
            \sum_{i \in \mathcal{I}} E_{ii} & \sum_{i \in \mathcal{I}} E_{ii}C\\
            0 & \sum_{i \in \mathcal{I}} E_{ii}
        \end{pmatrix} + \begin{pmatrix}
            \sum_{j \in \mathcal{J}} E_{jj} & 0\\
            0 & \sum_{j \in \mathcal{J}} E_{jj}
        \end{pmatrix}\nonumber\\
        &=\Psi_+ (z)\begin{pmatrix}
					I_n & -C \\
					 0& I_n
				\end{pmatrix}\begin{pmatrix}
            \sum_{i \in \mathcal{I}} E_{ii} & \sum_{i \in \mathcal{I}} E_{ii}C\\
            0 & \sum_{i \in \mathcal{I}} E_{ii}
        \end{pmatrix} + \begin{pmatrix}
            \sum_{j \in \mathcal{J}} E_{jj} & 0\\
            0 & \sum_{j \in \mathcal{J}} E_{jj}
        \end{pmatrix}\nonumber\\
        &=\Psi_+ (z)\begin{pmatrix}
            \sum_{i \in \mathcal{I}} E_{ii} & \sum_{i \in \mathcal{I}} E_{ii}C-C\sum_{i \in \mathcal{I}} E_{ii}\\
            0 & \sum_{i \in \mathcal{I}} E_{ii}
        \end{pmatrix} + \begin{pmatrix}
            \sum_{j \in \mathcal{J}} E_{jj} & 0\\
            0 & \sum_{j \in \mathcal{J}} E_{jj}
        \end{pmatrix}\nonumber\\
        &=\Psi_+ (z)\begin{pmatrix}
            \sum_{i \in \mathcal{I}} E_{ii} & 0\\
            0 & \sum_{i \in \mathcal{I}} E_{ii}
        \end{pmatrix} + \begin{pmatrix}
            \sum_{j \in \mathcal{J}} E_{jj} & 0\\
            0 & \sum_{j \in \mathcal{J}} E_{jj}
        \end{pmatrix} =\Psi_{\mathcal I, +}(z),
        \end{align}
        where we have made use of the first equality of \eqref{IC} in the fourth equality. For $z \in \gamma_-$, the computation proceeds in a similar way and we omit the details here.
    
    The large-$z$ behavior of $\Psi$ in \eqref{asy:PsiI} follows from a direct calculation. More precisely, as $z \to \infty$ with $z \in \mathbb{C} \setminus (\gamma_+ \cup \gamma_-)$, we have from \eqref{eq:asyPsi}, \eqref{def-I} 
and the facts 
$$ \sum_{i \in \mathcal{I}} E_{ii}+\sum_{j \in \mathcal{J}} E_{jj}=I_n, 
\quad \left(\sum_{i \in \mathcal{I}} E_{ii}\right)\cdot\left(\sum_{j \in \mathcal{J}} E_{jj}\right)=0_{n\times n},$$
that
\begin{align}
        \Psi_{\I}(z)&= \Psi (z)\begin{pmatrix}
            \sum_{i \in \mathcal{I}} E_{ii} & 0\\
            0 & \sum_{i \in \mathcal{I}} E_{ii}
        \end{pmatrix} + \begin{pmatrix}
            \sum_{j \in \mathcal{J}} E_{jj} & 0\\
            0 & \sum_{j \in \mathcal{J}} E_{jj}
        \end{pmatrix} \nonumber\\
        &= \left(I_{2n}+ \frac{\Xi_1}{\sqrt{-S}z}+\Boh\left(z^{-2}\right)\right) e^{\what \theta(z)\otimes \sigma_3}\begin{pmatrix}
            \sum_{i \in \mathcal{I}} E_{ii} & 0\\
            0 & \sum_{i \in \mathcal{I}} E_{ii}
        \end{pmatrix} + \begin{pmatrix}
            \sum_{j \in \mathcal{J}} E_{jj} & 0\\
            0 & \sum_{j \in \mathcal{J}} E_{jj}
        \end{pmatrix} \nonumber\\
        &= \left(I_{2n}+ \frac{\Xi_1}{\sqrt{-S}z}+\Boh\left(z^{-2}\right)\right) 
        \begin{pmatrix}
            \sum_{i \in \mathcal{I}} E_{ii} & 0\\
            0 & \sum_{i \in \mathcal{I}} E_{ii}
        \end{pmatrix}
        e^{\what \theta(z)\otimes \sigma_3}\begin{pmatrix}
            \sum_{i \in \mathcal{I}} E_{ii} & 0\\
            0 & \sum_{i \in \mathcal{I}} E_{ii}
        \end{pmatrix} \nonumber\\
       &\quad + \begin{pmatrix}
            \sum_{j \in \mathcal{J}} E_{jj} & 0\\
            0 & \sum_{j \in \mathcal{J}} E_{jj}
        \end{pmatrix}e^{\sum_{i \in \mathcal{I}} E_{ii} \what \theta(z)\otimes \sigma_3} \nonumber\\
        &=\left(I_{2n}+ \frac{\Psi_{\mathcal I,1}}{\sqrt{-S}z}+\Boh\left(z^{-2}\right)\right) e^{\sum_{i \in \mathcal{I}} E_{ii} \what \theta(z)\otimes \sigma_3},
    \end{align}
    where $\Psi_{\mathcal I,1}$ is defined in \eqref{I1}.
    \end{proof}
           
Similarly, by using the second equality of \eqref{IC}, one has that $\Psi_{\mathcal J}$ in \eqref{def-J} satisfies the following RH problem.
    \begin{rhp}\label{rhp:PsiJ}
		\hfill
		\begin{itemize}
			\item[\rm (a)] $\Psi_{\mathcal J}(z)$ is defined and analytic for $z \in \mathbb{C} \setminus (\gamma_+ \cup \gamma_-)$. 
			\item[\rm (b)] For  $z \in\gamma_+ \cup \gamma_-$, we have
			\begin{align}
				\Psi_{\mathcal J, +}(z) = \Psi_{\mathcal J,-}(z) \begin{pmatrix}
					I_n & \sum_{j \in \mathcal{J}} E_{jj}C\, \chi_{\gamma_+} \\
					\sum_{j \in \mathcal{J}} E_{jj}C\, \chi_{\gamma_-} & I_n
				\end{pmatrix}.
			\end{align}
			\item[\rm (c)] As $z \to \infty$ with $z \in \mathbb{C} \setminus (\gamma_+ \cup \gamma_-)$, we have
			\begin{align}
				\Psi_{\mathcal J}(z) = \left(I_{2n}+ \frac{\Psi_{\mathcal J,1}}{\sqrt{-S}z}+\Boh\left(z^{-2}\right)\right) e^{\sum_{j \in \mathcal{J}} E_{jj} \what \theta(z)\otimes \sigma_3},
			\end{align}
			where 
            \begin{align}\label{J1}
                \Psi_{\mathcal J,1} = \Xi_1 \begin{pmatrix}
            \sum_{j \in \mathcal{J}} E_{jj} & 0\\
            0 & \sum_{j \in \mathcal{J}} E_{jj}
        \end{pmatrix}
            \end{align}
            with $\Xi_1$ defined in \eqref{asy:Xi}.
		\end{itemize}
	\end{rhp}

The decomposition \eqref{IJ} facilitates a separate treatment of the cases $c_{kl}c_{lk}=1$ and $c_{kl}c_{lk}\neq 1$, which will be presented in the next two sections.  

%as presented in the following two sections, thereby simplifying the whole analysis.
    
\section{Asymptotic analysis of the RH problem for $\Psi_\mathcal{I}$} \label{sec:I}
In this section, we analyze RH problem \ref{rhp:PsiI} for $\Psi_\mathcal{I}$ as $S \to -\infty$.
\subsection{First transformation: $\Psi_\mathcal{I} \to T$}\label{sec:PsiI-T}
In this transformation we normalize RH problem \ref{rhp:PsiI} for $\Psi_{\I}$ at infinity. For this purpose, we introduce $g$-functions
	\begin{align}\label{def:gi}
		g_i(z):=\frac{1}{6}\left(z^2-4r_i\right)^{\frac{3}{2}}, \qquad z \in \mathbb{C} \backslash[-2\sqrt{r_i},2\sqrt{r_i}],\quad i \in \mathcal{I},
	\end{align}
where
	\begin{align}\label{def:ri}
		r_i:=\frac{s_{i}+s_{\sigma(i)}}{2S},\quad i \in \mathcal{I},
	\end{align}
and $\sigma$ is defined in \eqref{def:sigma}. Recall that $\delta_i = s_i-S=\epsilon_i S$, $-1<\epsilon_i<1$, one has $r_i = r_{\sigma(i)} = \frac{\epsilon_i+\epsilon_{\sigma(i)}}{2}+1>0$.
%with $\epsilon_i$ defined in Theorem \ref{-infty asympototics results}.
As $z\to \infty$, it is readily seen that
	\begin{align}\label{asy:g_i}
		g_i(z)=\frac{1}{6}(z^2-4r_i)^{\frac{3}{2}}=\frac{1}{6}z^3-r_i z+\Boh(z^{-1}).
	\end{align}

It is possible that $r_i$ may coincide with each other.	For clarity, we select distinct elements of $2\sqrt{r_i}$, arrange them in ascending order, and denote them as	
	\begin{align}
		\widetilde{r}_1< \widetilde{r}_2<\cdots< \widetilde{r}_{m_1}, \quad 1\le m_1 \le \lvert \mathcal{I} \rvert,
	\end{align}
where $\lvert \mathcal{I} \rvert$ denotes the cardinality of $\mathcal{I}$. For later use, we also set 
\begin{align}\label{def:J_k}
		J_k:=\{i: 2\sqrt{r_i}=\widetilde r_k\}, \,  1\leq k \leq m_1,  \quad J_0 := \{1,\ldots,n\} \setminus \bigcup_{k=1}^{m_1} J_k= \mathcal{J}, 
\end{align}
and define
\begin{align}\label{def:C_k}
		C_{k} := \sum_{i\in \bigcup_{l=k+1}^{m_1}J_l}E_{ii} C, \quad 0\leq k \leq m_1-1, \qquad C_{m_1} := 0_{n\times n}. 
	\end{align}
\begin{remark}
	For any $i\in J_k$, the involution  $\sigma$ defined in \eqref{def:sigma} preserves the subset invariance, i.e., $\sigma(i) \in J_k$. This invariance property implies the commutation relations
\begin{align}
	\sum_{i \in J_k} E_{ii}C=\sum_{i \in J_k}c_{i\sigma(i)}E_{i\sigma(i)} = \sum_{i \in J_k}c_{\sigma(i)i} E_{\sigma(i)i}= C\sum_{i \in J_k} E_{ii}.
\end{align}
Moreover, we have 
\begin{align}\label{commutation}
    e^{\sum_{i\in \mathcal{I}} g_i(z) E_{ii}} C = C e^{\sum_{i\in \mathcal{I}} g_i(z) E_{ii}}.
\end{align}
	\end{remark}

    % \begin{remark}
    %     It is worth noting that, apart from the requirement $1\le m_1 \le \lvert \mathcal{I} \rvert$, where $\lvert \mathcal{I} \rvert$ denotes the cardinality of $\mathcal{I}$, there are no additional constraints on $m_1$. Moreover, the specific value of $m_1$ has no impact on the final asymptotic result.
    % \end{remark}
The first transformation is then defined by
	\begin{align}\label{def:PsitoT}
		T(z)=\Psi_{\I}(z)Q(z) e^{-G(z)\otimes \sigma_3},
	\end{align}
where
    \begin{align}\label{def:G}
		G(z):=\ii(-S)^{\frac{3}{2}}\sum_{i \in \mathcal{I}} \left(g_i(z)-t_i z\right)E_{ii},
	\end{align}
    with
	\begin{align}\label{def:ti}
		 t_i:=\frac{s_{i}-s_{\sigma(i)}}{2S},\quad i\in \mathcal{I},
	\end{align}
	and 
	\begin{align}\label{def:Q}
		Q(z):= \begin{cases}
			\begin{pmatrix}
				I_n & C_{k-1} \\
				0 & I_n
			\end{pmatrix}, & \quad z \in \Omega_{2k-1} \cup \Omega_{2m_1+2k-1},\, 1\leq k \leq m_1, \\
			\begin{pmatrix}
				I_n & 0 \\
				C_{k-1} & I_n
			\end{pmatrix}, & \quad z \in \Omega_{2k} \cup \Omega_{2m_1+2k}, \, 1\leq k \leq m_1,\\
			I_{2n}, & \quad \textrm{elsewhere},
		\end{cases} 
	\end{align}
 where the regions $\Omega_j$, $j=1, \ldots, 4m_1$, are shown in Figure \ref{fig:T} and the matrix $C_k$ is defined in \eqref{def:C_k}. We have that $T$ satisfies the following RH problem.
	% with
	% \begin{align}\label{def:J_k}
	% 	J_k:=\{i: 2\sqrt{r_i}=\widetilde r_k\}, \,  1\leq k \leq m_1,  \quad J_0 := \{1,\cdots,n\} \setminus \bigcup_{k=1}^{m_1} J_k= \mathcal{J}. 
	% \end{align}
	\begin{figure}[t]
\begin{center}
\tikzset{every picture/.style={line width=0.75pt}} 
        \begin{tikzpicture}[x=0.75pt,y=0.75pt,yscale=-1,xscale=0.9]
        \tikzstyle{every node}=[scale=0.9]
%uncomment if require: \path (0,300); %set diagram left start at 0, and has height of 300

%Straight Lines [id:da8489016544296536] 
\draw    (229.17,163.95) -- (287.84,163.95) ;
\draw [shift={(261.1,163.95)}, rotate = 180] [fill={rgb, 255:red, 0; green, 0; blue, 0 }  ][line width=0.08]  [draw opacity=0] (7.14,-3.43) -- (0,0) -- (7.14,3.43) -- (4.74,0) -- cycle    ;
%Straight Lines [id:da6796983681136035] 
\draw    (275.84,163.95) -- (336.24,163.95) ;
%Straight Lines [id:da613185432630844] 
\draw    (336.24,163.95) -- (384.65,163.95) ;
%Straight Lines [id:da7504332571185158] 
\draw    (381.17,163.95) -- (439.84,163.95) ;
\draw [shift={(413.11,163.95)}, rotate = 180] [fill={rgb, 255:red, 0; green, 0; blue, 0 }  ][line width=0.08]  [draw opacity=0] (7.14,-3.43) -- (0,0) -- (7.14,3.43) -- (4.74,0) -- cycle    ;
%Straight Lines [id:da4143333786505953] 
\draw    (82.19,113.17) -- (135.17,163.95) ;
\draw [shift={(110.56,140.36)}, rotate = 223.79] [fill={rgb, 255:red, 0; green, 0; blue, 0 }  ][line width=0.08]  [draw opacity=0] (7.14,-3.43) -- (0,0) -- (7.14,3.43) -- (4.74,0) -- cycle    ;
%Straight Lines [id:da05382685170363366] 
\draw    (234.86,113.17) -- (287.84,163.95) ;
\draw [shift={(263.23,140.36)}, rotate = 223.79] [fill={rgb, 255:red, 0; green, 0; blue, 0 }  ][line width=0.08]  [draw opacity=0] (7.14,-3.43) -- (0,0) -- (7.14,3.43) -- (4.74,0) -- cycle    ;
%Straight Lines [id:da8321812744651109] 
\draw    (381.17,163.95) -- (434.15,214.74) ;
\draw [shift={(409.54,191.15)}, rotate = 223.79] [fill={rgb, 255:red, 0; green, 0; blue, 0 }  ][line width=0.08]  [draw opacity=0] (7.14,-3.43) -- (0,0) -- (7.14,3.43) -- (4.74,0) -- cycle    ;
%Straight Lines [id:da6699101040648437] 
\draw    (439.84,163.95) -- (492.81,214.74) ;
\draw [shift={(468.2,191.15)}, rotate = 223.79] [fill={rgb, 255:red, 0; green, 0; blue, 0 }  ][line width=0.08]  [draw opacity=0] (7.14,-3.43) -- (0,0) -- (7.14,3.43) -- (4.74,0) -- cycle    ;
%Straight Lines [id:da4246408662572354] 
\draw    (434.67,113.17) -- (381.17,163.95) ;
\draw [shift={(410.89,135.74)}, rotate = 136.49] [fill={rgb, 255:red, 0; green, 0; blue, 0 }  ][line width=0.08]  [draw opacity=0] (7.14,-3.43) -- (0,0) -- (7.14,3.43) -- (4.74,0) -- cycle    ;
%Straight Lines [id:da45611561092034236] 
\draw    (493.33,113.17) -- (439.84,163.95) ;
\draw [shift={(469.56,135.74)}, rotate = 136.49] [fill={rgb, 255:red, 0; green, 0; blue, 0 }  ][line width=0.08]  [draw opacity=0] (7.14,-3.43) -- (0,0) -- (7.14,3.43) -- (4.74,0) -- cycle    ;
%Straight Lines [id:da788272488071564] 
\draw    (135.17,163.95) -- (81.68,214.74) ;
\draw [shift={(111.4,186.52)}, rotate = 136.49] [fill={rgb, 255:red, 0; green, 0; blue, 0 }  ][line width=0.08]  [draw opacity=0] (7.14,-3.43) -- (0,0) -- (7.14,3.43) -- (4.74,0) -- cycle    ;
%Straight Lines [id:da23407888485874495] 
\draw    (287.84,163.95) -- (234.34,214.74) ;
\draw [shift={(264.06,186.52)}, rotate = 136.49] [fill={rgb, 255:red, 0; green, 0; blue, 0 }  ][line width=0.08]  [draw opacity=0] (7.14,-3.43) -- (0,0) -- (7.14,3.43) -- (4.74,0) -- cycle    ;
%Straight Lines [id:da3316896512135067] 
\draw    (535.84,163.95) -- (594.51,163.95) ;
\draw [shift={(567.77,163.95)}, rotate = 180] [fill={rgb, 255:red, 0; green, 0; blue, 0 }  ][line width=0.08]  [draw opacity=0] (7.14,-3.43) -- (0,0) -- (7.14,3.43) -- (4.74,0) -- cycle    ;
%Straight Lines [id:da8239056950451331] 
\draw    (594.51,163.95) -- (647.48,214.74) ;
\draw [shift={(622.87,191.15)}, rotate = 223.79] [fill={rgb, 255:red, 0; green, 0; blue, 0 }  ][line width=0.08]  [draw opacity=0] (7.14,-3.43) -- (0,0) -- (7.14,3.43) -- (4.74,0) -- cycle    ;
%Straight Lines [id:da06292186634830166] 
\draw    (648,113.17) -- (594.51,163.95) ;
\draw [shift={(624.23,135.74)}, rotate = 136.49] [fill={rgb, 255:red, 0; green, 0; blue, 0 }  ][line width=0.08]  [draw opacity=0] (7.14,-3.43) -- (0,0) -- (7.14,3.43) -- (4.74,0) -- cycle    ;
%Straight Lines [id:da7100279824601728] 
\draw    (75.78,163.95) -- (134.45,163.95) ;
\draw [shift={(107.71,163.95)}, rotate = 180] [fill={rgb, 255:red, 0; green, 0; blue, 0 }  ][line width=0.08]  [draw opacity=0] (7.14,-3.43) -- (0,0) -- (7.14,3.43) -- (4.74,0) -- cycle    ;
%Straight Lines [id:da02099420361186699] 
\draw    (22.8,113.17) -- (75.78,163.95) ;
\draw [shift={(51.17,140.36)}, rotate = 223.79] [fill={rgb, 255:red, 0; green, 0; blue, 0 }  ][line width=0.08]  [draw opacity=0] (7.14,-3.43) -- (0,0) -- (7.14,3.43) -- (4.74,0) -- cycle    ;
%Straight Lines [id:da0006363063981379424] 
\draw    (75.78,163.95) -- (22.28,214.74) ;
\draw [shift={(52,186.52)}, rotate = 136.49] [fill={rgb, 255:red, 0; green, 0; blue, 0 }  ][line width=0.08]  [draw opacity=0] (7.14,-3.43) -- (0,0) -- (7.14,3.43) -- (4.74,0) -- cycle    ;
%Straight Lines [id:da21306869660108763] 
\draw  [dash pattern={on 0.84pt off 2.51pt}]  (153.17,163.95) -- (211.84,163.95) ;
%Straight Lines [id:da7270609534987692] 
\draw    (175.86,113.17) -- (228.84,163.95) ;
\draw [shift={(204.23,140.36)}, rotate = 223.79] [fill={rgb, 255:red, 0; green, 0; blue, 0 }  ][line width=0.08]  [draw opacity=0] (7.14,-3.43) -- (0,0) -- (7.14,3.43) -- (4.74,0) -- cycle    ;
%Straight Lines [id:da10193383976883319] 
\draw    (228.84,163.95) -- (175.34,214.74) ;
\draw [shift={(205.06,186.52)}, rotate = 136.49] [fill={rgb, 255:red, 0; green, 0; blue, 0 }  ][line width=0.08]  [draw opacity=0] (7.14,-3.43) -- (0,0) -- (7.14,3.43) -- (4.74,0) -- cycle    ;
%Straight Lines [id:da5569065677997533] 
\draw  [dash pattern={on 0.84pt off 2.51pt}]  (456.17,163.95) -- (514.84,163.95) ;
%Straight Lines [id:da16872551302732608] 
\draw    (534.84,163.95) -- (587.81,214.74) ;
\draw [shift={(563.2,191.15)}, rotate = 223.79] [fill={rgb, 255:red, 0; green, 0; blue, 0 }  ][line width=0.08]  [draw opacity=0] (7.14,-3.43) -- (0,0) -- (7.14,3.43) -- (4.74,0) -- cycle    ;
%Straight Lines [id:da44917248327018044] 
\draw    (588.33,113.17) -- (534.84,163.95) ;
\draw [shift={(564.56,135.74)}, rotate = 136.49] [fill={rgb, 255:red, 0; green, 0; blue, 0 }  ][line width=0.08]  [draw opacity=0] (7.14,-3.43) -- (0,0) -- (7.14,3.43) -- (4.74,0) -- cycle    ;
%Straight Lines [id:da8184641288475145] 
\draw    (135.17,163.95) -- (151.84,163.95) ;
%Straight Lines [id:da969878565615637] 
\draw    (212.17,163.95) -- (228.84,163.95) ;
%Straight Lines [id:da8915694131977353] 
\draw    (439.84,163.95) -- (456.51,163.95) ;
%Straight Lines [id:da1971574127165706] 
\draw    (518.17,163.95) -- (534.84,163.95) ;
%Straight Lines [id:da5240366992467177] 
\draw  [dash pattern={on 4.5pt off 4.5pt}]  (389.74,113.17) -- (336.24,163.95) ;
%Straight Lines [id:da529353822773218] 
\draw  [dash pattern={on 4.5pt off 4.5pt}]  (336.24,163.95) -- (282.75,214.74) ;
%Straight Lines [id:da14886372622578725] 
\draw  [dash pattern={on 4.5pt off 4.5pt}]  (283.27,113.17) -- (336.24,163.95) ;
%Straight Lines [id:da8128449260499224] 
\draw  [dash pattern={on 4.5pt off 4.5pt}]  (336.24,163.95) -- (389.22,214.74) ;

\draw  [dash pattern={on 1.2pt off 3pt}]  (140,135) -- (166,135) ;
\draw  [dash pattern={on 1.2pt off 3pt}]  (140,200) -- (166,200) ;

\draw  [dash pattern={on 1.2pt off 3pt}]  (500,135) -- (526,135) ;
\draw  [dash pattern={on 1.2pt off 3pt}]  (500,200) -- (526,200) ;
% Text Node
\draw (219.96,94.95) node [anchor=north west][inner sep=0.75pt]  [font=\scriptsize] [align=left] {$\displaystyle {\Gamma }_{2m_1+1}$};
% Text Node
\draw (64.33,167.65) node [anchor=north west][inner sep=0.75pt]  [font=\scriptsize] [align=left] {$\displaystyle -\widetilde{r}_{m_1}$};
% Text Node
\draw (156.96,93.95) node [anchor=north west][inner sep=0.75pt]  [font=\scriptsize] [align=left] {$\displaystyle {\Gamma }_{2m_1+3}$};
% Text Node
\draw (64.96,94.95) node [anchor=north west][inner sep=0.75pt]  [font=\scriptsize] [align=left] {$\displaystyle {\Gamma }_{4m_1-3}$};
% Text Node
\draw (7.96,93.95) node [anchor=north west][inner sep=0.75pt]  [font=\scriptsize] [align=left] {$\displaystyle {\Gamma }_{4m_1-1}$};
% Text Node
\draw (425.96,96.95) node [anchor=north west][inner sep=0.75pt]  [font=\scriptsize] [align=left] {$\displaystyle {\Gamma }_{1}$};
% Text Node
\draw (487.96,95.95) node [anchor=north west][inner sep=0.75pt]  [font=\scriptsize] [align=left] {$\displaystyle {\Gamma }_{3}$};
% Text Node
\draw (571.96,93.95) node [anchor=north west][inner sep=0.75pt]  [font=\scriptsize] [align=left] {$\displaystyle {\Gamma }_{2m_1-3}$};
% Text Node
\draw (634.96,93.95) node [anchor=north west][inner sep=0.75pt]  [font=\scriptsize] [align=left] {$\displaystyle {\Gamma }_{2m_1-1}$};
% Text Node
\draw (636.96,211.95) node [anchor=north west][inner sep=0.75pt]  [font=\scriptsize] [align=left] {$\displaystyle {\Gamma }_{2m_1}$};
% Text Node
\draw (572.96,210.95) node [anchor=north west][inner sep=0.75pt]  [font=\scriptsize] [align=left] {$\displaystyle {\Gamma }_{2m_1-2}$};
% Text Node
\draw (483.96,212.95) node [anchor=north west][inner sep=0.75pt]  [font=\scriptsize] [align=left] {$\displaystyle {\Gamma }_{4}$};
% Text Node
\draw (423.96,212.95) node [anchor=north west][inner sep=0.75pt]  [font=\scriptsize] [align=left] {$\displaystyle {\Gamma }_{2}$};
% Text Node
\draw (220.96,210.95) node [anchor=north west][inner sep=0.75pt]  [font=\scriptsize] [align=left] {$\displaystyle {\Gamma }_{2m_1+2}$};
% Text Node
\draw (156.96,211.95) node [anchor=north west][inner sep=0.75pt]  [font=\scriptsize] [align=left] {$\displaystyle {\Gamma }_{2m_1+4}$};
% Text Node
\draw (64.96,211.95) node [anchor=north west][inner sep=0.75pt]  [font=\scriptsize] [align=left] {$\displaystyle {\Gamma }_{4m_1-2}$};
% Text Node
\draw (7.96,213.95) node [anchor=north west][inner sep=0.75pt]  [font=\scriptsize] [align=left] {$\displaystyle {\Gamma }_{4m_1}$};
% Text Node
\draw (125.78,166.95) node [anchor=north west][inner sep=0.75pt]  [font=\scriptsize] [align=left] {$\displaystyle -\widetilde{r}_{m_1-1}$};
% Text Node
\draw (216.84,166.95) node [anchor=north west][inner sep=0.75pt]  [font=\scriptsize] [align=left] {$\displaystyle -\widetilde{r}_{2}$};
% Text Node
\draw (280.17,167.95) node [anchor=north west][inner sep=0.75pt]  [font=\scriptsize] [align=left] {$\displaystyle -\widetilde{r}_{1}$};
% Text Node
\draw (378.17,167.95) node [anchor=north west][inner sep=0.75pt]  [font=\scriptsize] [align=left] {$\displaystyle \widetilde{r}_{1}$};
% Text Node
\draw (436.17,167.95) node [anchor=north west][inner sep=0.75pt]  [font=\scriptsize] [align=left] {$\displaystyle \widetilde{r}_{2}$};
% Text Node
\draw (507,165.95) node [anchor=north west][inner sep=0.75pt]  [font=\scriptsize] [align=left] {$\displaystyle \widetilde{r}_{m_1-1}$};
% Text Node
\draw (582,167.65) node [anchor=north west][inner sep=0.75pt]  [font=\scriptsize] [align=left] {$\displaystyle \widetilde{r}_{m_1}$};
% Text Node
\draw (333,168.07) node [anchor=north west][inner sep=0.75pt]  [font=\scriptsize]  {$0$};
% Text Node
\draw (373,132.73) node [anchor=north west][inner sep=0.75pt]  [font=\scriptsize]  {${\Omega }_{1}$};
% Text Node
\draw (430,131.4) node [anchor=north west][inner sep=0.75pt]  [font=\scriptsize]  {${\Omega }_{3}$};
% Text Node
\draw (375.33,183.4) node [anchor=north west][inner sep=0.75pt]  [font=\scriptsize]  {${\Omega }_{2}$};
% Text Node
\draw (428,184.07) node [anchor=north west][inner sep=0.75pt]  [font=\scriptsize]  {${\Omega }_{4}$};
% Text Node
\draw (581.33,182.73) node [anchor=north west][inner sep=0.75pt]  [font=\scriptsize]  {${\Omega }_{2m_1}$};
% Text Node
\draw (575.33,134.07) node [anchor=north west][inner sep=0.75pt]  [font=\scriptsize]  {${\Omega }_{2m_1-1}$};
% Text Node
\draw (269.33,128.73) node [anchor=north west][inner sep=0.75pt]  [font=\scriptsize]  {${\Omega }_{2m_1+1}$};
% Text Node
\draw (264.67,184.73) node [anchor=north west][inner sep=0.75pt]  [font=\scriptsize]  {${\Omega }_{2m_1+2}$};
% Text Node
\draw (214.33,128.07) node [anchor=north west][inner sep=0.75pt]  [font=\scriptsize]  {${\Omega }_{2m_1+3}$};
% Text Node
\draw (212,184.73) node [anchor=north west][inner sep=0.75pt]  [font=\scriptsize]  {${\Omega }_{2m_1+4}$};
% Text Node
\draw (64,131.4) node [anchor=north west][inner sep=0.75pt]  [font=\scriptsize]  {${\Omega }_{4m_1-1}$};
% Text Node
\draw (69.33,185.4) node [anchor=north west][inner sep=0.75pt]  [font=\scriptsize]  {${\Omega }_{4m_1}$};

\end{tikzpicture}
			\caption{Regions $\Omega_i$ and the jump contours $\Gamma_i$, $i=1,\ldots,4m_1$, in the RH problem for $T$.
		The dashed lines represent the contour $\gamma_+ \cup \gamma_-$.}
			\label{fig:T}
			
		\end{center}
	\end{figure}

	\begin{rhp}\label{rhp:T}
		\hfill
		\begin{enumerate}
			\item[\rm (a)] $T(z)$ is defined and analytic in $\mathbb{C} \setminus \Gamma_T$, where 
			\begin{equation}\label{def:gammaT}
				\Gamma_T:=\cup_{i=1}^{4m_1} \Gamma_i \cup[-\widetilde r_{m_1}, \widetilde r_{m_1}];
			\end{equation}
see Figure \ref{fig:T} for the orientations of $\Gamma_T$.
			\item[\rm (b)] For $z \in \Gamma_T$, we have
            \begin{align}
                T_{+}(z)=T_{-}(z) J_T(z) 
            \end{align} 
            where
			\begin{equation}\label{def:JT}
				J_T(z):= \begin{cases}
					\begin{pmatrix}
						I_n & \sum_{i\in J_{k}} e^{2\ii (-S)^\frac{3}{2}g_i(z)}E_{ii}C \\
						0 & I_n
					\end{pmatrix}, & \quad z \in \Gamma_{2k-1} \cup \Gamma_{2m_1+2k-1},  \\
					\begin{pmatrix}
						I_n & 0 \\
						-\sum_{i\in J_{k}} e^{-2\ii (-S)^\frac{3}{2}g_i(z)}E_{ii}C & I_n
					\end{pmatrix}, & \quad z \in \Gamma_{2k} \cup \Gamma_{2m_1+2k},  \\
					\begin{pmatrix}
				e^{G_-(z) - G_+(z)} & \sum_{i\in \bigcup_{l=k}^{m_1}J_l}E_{ii} C\\
				  -\sum_{i\in \bigcup_{l=k}^{m_1}J_l}E_{ii} C  & \sum_{i\in \bigcup_{l=0}^{k-1}J_l}E_{ii}
			\end{pmatrix}
            , & \quad z \in (-\widetilde r_k, -\widetilde r_{k-1}) \cup  (\widetilde r_{k-1}, \widetilde r_k),
				\end{cases} 
			\end{equation}
			for $1 \leq k \leq m_1$ with $\widetilde r_0:=0$.
			\item[\rm (c)]As $z \to \infty$ with $z \in \mathbb{C} \setminus  \Gamma_T$, we have
			\begin{align}\label{eq:asyT}
				T(z) = I_{2n}+ \frac{\Psi_{\I,1}}{\sqrt{-S}z}+\Boh\left(z^{-2}\right) ,
			\end{align}
			where $\Psi_{\I,1}$ is defined in \eqref{I1}. 		
		\end{enumerate}
	\end{rhp}
	\begin{proof}
All the items follow directly from \eqref{def:PsitoT}--\eqref{def:Q} and RH problem \ref{rhp:PsiI} for $\Psi_{\I}$.  For the convenience of the reader, we check the jump condition of $T$ on $\Gamma_{2k-1}$ and $(\widetilde r_{k-1}, \widetilde r_{k})$, and the verification of the jump conditions on other contours is similar. For $z \in \Gamma_{2k-1}$, we see from \eqref{def:PsitoT} that 
		\begin{align}\label{check-JT}
			& T_{-}(z)^{-1}T_+(z)=e^{G(z)\otimes \sigma_3}Q_-(z)^{-1}Q_+(z)e^{-G(z)\otimes \sigma_3}
			= e^{G(z)\otimes \sigma_3} 	\begin{pmatrix}
				I_n & C_{k-1}-C_{k} \\
				0 & I_n
			\end{pmatrix}
			e^{-G(z)\otimes \sigma_3} \nonumber\\
		&=	\begin{pmatrix}
				I_n & e^{G(z)}\sum_{i\in J_{k}}E_{ii}C e^{G(z)} \\
				0 & I_n
			\end{pmatrix} 
            = \begin{pmatrix}
				I_n &\sum_{i\in J_{k}} e^{2\ii (-S)^\frac{3}{2}g_i(z)}E_{ii}C \\
				0 & I_n
			\end{pmatrix},
		\end{align}
where we have made use of \eqref{commutation} and the fact $t_i(z)=-t_{\sigma(i)}(z)$ in the last equality. For $z \in (\widetilde r_{k-1}, \widetilde r_{k})$, from \eqref{commutation} and the relation $g_{i,+}(z) = \begin{cases}
		    g_{i,-}(z), &\quad i \in \bigcup_{l=1}^{k-1}J_l \\
                -g_{i,-}(z), &\quad i \in \bigcup_{l=k}^{m_1}J_l 
		\end{cases}$, it follows that
		\begin{align}
			&T_{-}(z)^{-1}T_+(z)=e^{G_{-}(z)\otimes \sigma_3}Q_-(z)^{-1}Q_+(z)e^{-G_{+}(z)\otimes \sigma_3}\nonumber\\
			&= e^{G_{-}(z)\otimes \sigma_3} 	
			\begin{pmatrix}
				I_n & C_{k-1} \\
				-C_{k-1} & I_n-C_{k-1}^2
			\end{pmatrix}
			e^{-G_{+}(z)\otimes \sigma_3} \nonumber\\
			 &=\begin{pmatrix}
				e^{G_-(z) - G_+(z)} & e^{G_-(z)}C_{k-1}e^{G_+(z)}\\
				-e^{-G_-(z)}C_{k-1}e^{-G_+(z)} & e^{-G_-(z)}(I_n-C_{k-1}^2)e^{G_+(z)}
			\end{pmatrix}\nonumber\\
            &= \begin{pmatrix}
				e^{G_-(z) - G_+(z)} & e^{G_-(z)} \sum_{i\in \bigcup_{l=k}^{m_1}J_l}E_{ii} Ce^{G_+(z)}\\
				-e^{-G_-(z)} \sum_{i\in \bigcup_{l=k}^{m_1}J_l}E_{ii} C e^{-G_+(z)} & e^{-G_-(z)}\sum_{i\in \bigcup_{l=0}^{k-1}J_l}E_{ii}e^{G_+(z)}
			\end{pmatrix}\nonumber\\
            &=\begin{pmatrix}
				e^{G_-(z) - G_+(z)} & \sum_{i\in \bigcup_{l=k}^{m_1}J_l}E_{ii}e^{2\ii (-S)^\frac{3}{2}\left(g_{i,-(z)}+g_{i,+}(z)\right)} C\\
				  -\sum_{i\in \bigcup_{l=k}^{m_1}J_l}E_{ii} e^{-2\ii (-S)^\frac{3}{2}\left(g_{i,-(z)}+g_{i,+}(z)\right)}C  & \sum_{i\in \bigcup_{l=0}^{k-1}J_l}E_{ii}e^{2\ii (-S)^\frac{3}{2}\left(g_{i,+(z)}-g_{i,-}(z)\right)}
			\end{pmatrix}\nonumber\\
            &=\begin{pmatrix}
				e^{G_-(z) - G_+(z)} & \sum_{i\in \bigcup_{l=k}^{m_1}J_l}E_{ii} C\\
				  -\sum_{i\in \bigcup_{l=k}^{m_1}J_l}E_{ii} C  & \sum_{i\in \bigcup_{l=0}^{k-1}J_l}E_{ii}
			\end{pmatrix}.
		\end{align}
Finally, the large-$z$ behavior of $T$ is easy to check by combining \eqref{asy:PsiI}, \eqref{asy:g_i} and \eqref{def:PsitoT}, and we omit the details here.
	\end{proof}

	\subsection{Global parametrix}
In view of the signature of $\Re \ii g_i(z)$, $i \in \mathcal{I}$, illustrated in Figure \ref{real-g}, one has, as $S \to -\infty$, the jump matrix $J_T(z)$ of $T$ in \eqref{def:JT} tends to the identity matrix exponentially fast for $z$ bounded away from the interval $[-\widetilde r_{m_1}, \widetilde r_{m_1}]$.

\begin{figure}[t]
		\begin{center}
			\includegraphics[scale=0.6]{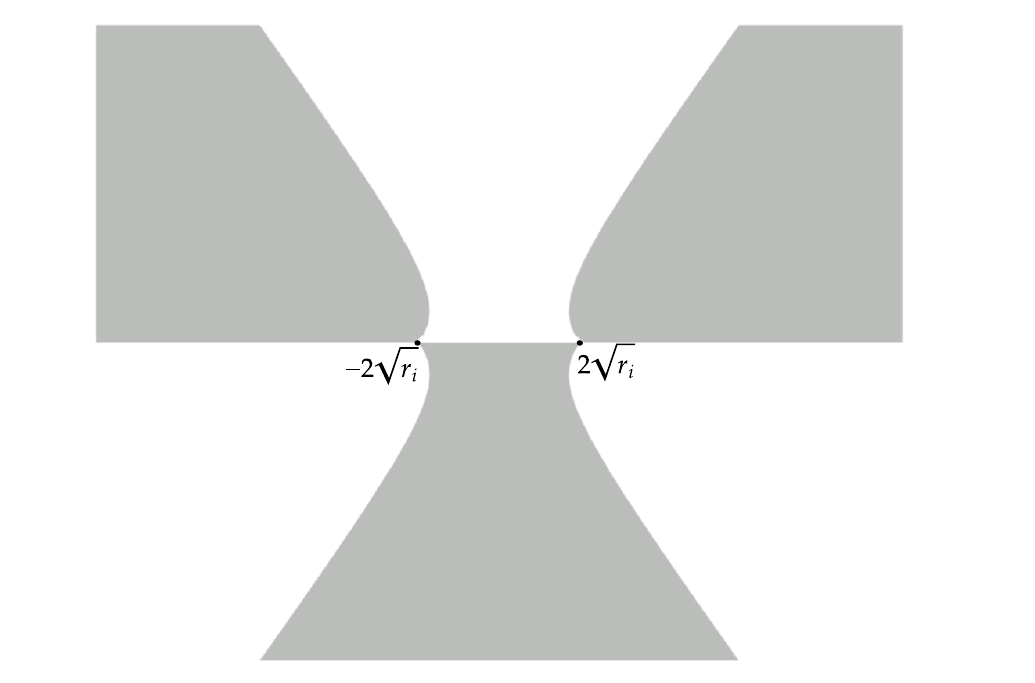}
			\caption{The signature of $\Re \ii g_i(z)$, $i \in \mathcal{I}$. The shaded areas indicate the regions where $\Re \ii g_i(z)< 0$, while the white areas indicate the regions where $\Re \ii g_i(z)> 0$.}
			\label{real-g}
		\end{center}
	\end{figure}
    
For $z \in (-\widetilde r_{k}, -\widetilde r_{k-1}) \cup  (\widetilde r_{k-1}, \widetilde r_{k})$, $1\le k \le m_1$, we note that
    \begin{align}
        J_T(z)& = \begin{pmatrix}
				e^{G_-(z) - G_+(z)} & \sum_{i\in \bigcup_{l=k}^{m_1}J_l}E_{ii} C\\
				  -\sum_{i\in \bigcup_{l=k}^{m_1}J_l}E_{ii} C  & \sum_{i\in \bigcup_{l=0}^{k-1}J_l}E_{ii}
			\end{pmatrix}\nonumber\\
            &=\begin{pmatrix}
				\sum_{i\in \bigcup_{l=0}^{k-1}J_l}E_{ii} +\sum_{i\in \bigcup_{l=k}^{m_1}J_l}e^{2\ii (-S)^\frac{3}{2}g_{i,-}(z)}E_{ii} & \sum_{i\in \bigcup_{l=k}^{m_1}J_l}E_{ii}                   
                                                           C\\
				  -\sum_{i\in \bigcup_{l=k}^{m_1}J_l}E_{ii} C  & \sum_{i\in \bigcup_{l=0}^{k-1}J_l}E_{ii}
			\end{pmatrix},
    \end{align}
    where
    \begin{align}
        g_{i,-}(z) = \frac{\ii}{6} \left(\widetilde r_l^2-z^2\right)^{\frac{3}{2}}, \quad i \in J_l, \quad k \le l \le m_1,
    \end{align}
    for $z\in (-\widetilde r_l, \widetilde r_l)$. Thus, by taking $S \to -\infty$, we are led to consider the following global parametrix.
	\begin{rhp}\label{def:gloN}
		\hfill
		\begin{enumerate}
			\item[\rm (a)] $N(z)$ is defined and analytic in $\mathbb{C} \setminus [-\widetilde r_{m_1}, \widetilde r_{m_1}]$.
			\item[\rm (b)] 
			 For $x\in (-\widetilde r_{m_1}, \widetilde r_{m_1})$, we have
            \begin{align}\label{eq:Njump}
                N_{+}(x)=N_{-}(x) J_N(x)
            \end{align}
			where 
			\begin{align}
				J_N(x):= \begin{pmatrix}
					\sum_{i\in \bigcup_{l=0}^{k-1}J_l}E_{ii} & \sum_{i\in \bigcup_{l=k}^{{m_1}}J_l}E_{ii} C\\
					-\sum_{i\in \bigcup_{l=k}^{m_1}J_l}E_{ii} C & \sum_{i\in \bigcup_{l=0}^{k-1}J_l}E_{ii}
				\end{pmatrix}, & \quad z \in  (-\widetilde r_{k}, -\widetilde r_{k-1}) \cup  (\widetilde r_{k-1}, \widetilde r_{k}),
			\end{align}
			with $1\leq k \leq m_1$.
			\item[\rm (c)] As $z \to \infty$ with $z \in \mathbb{C} \setminus  [-\widetilde r_{m_1}, \widetilde r_{m_1}]$, we have
			\begin{align}\label{eq:asyN}
				N(z)&= I_{2n}+ \frac{N_1}{z}+\Boh\left(z^{-2}\right),
			\end{align}
			where $N_1$ is independent of $z$. 
		\end{enumerate}
	\end{rhp}
    To solve the above RH problem, we set
    \begin{align}\label{def:gamma}
		\gamma_i(z):=\left(\frac{z-\widetilde r_{k}}{z+\widetilde r_{k}}\right)^{\frac{1}{4}}, \quad z \in \mathbb{C} \backslash[-\widetilde r_{k},\widetilde r_{k}],\quad 
        i\in J_k,    
	\end{align}
    and define
     \begin{align}\label{def:Ni}
        N_1=N_4:= \sum_{i\in \mathcal{I}} \frac{\gamma_i(z)+\gamma_i(z)^{-1}}{2} E_{ii} + \sum_{j\in \mathcal{J}} E_{jj}, \\   
        N_2=-N_3:=-\ii \sum_{i\in \mathcal{I}} \frac{\gamma_i(z)-\gamma_i(z)^{-1}}{2} E_{ii} C. \label{def:N23}
    \end{align}.

    \begin{lemma}
     A solution of RH problem \ref{def:gloN} is given by 
    \begin{align}\label{def:N}
		N(z)= \begin{pmatrix}
			N_1(z) & N_2(z)\\
			N_3(z) & N_4(z)
		\end{pmatrix},
	\end{align}
where the matrix-valued functions $N_i$, $i=1,\ldots,4$, are defined in \eqref{def:Ni} and \eqref{def:N23}. 
    \end{lemma}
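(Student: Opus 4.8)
The plan is to verify directly that the matrix $N$ in \eqref{def:N}, with blocks \eqref{def:Ni}--\eqref{def:N23}, meets conditions (a)--(c) of RH problem \ref{def:gloN}. Analyticity off the cut is immediate: for $i\in J_k$ the scalar $\gamma_i$ in \eqref{def:gamma} is analytic in $\mathbb{C}\setminus[-\widetilde r_k,\widetilde r_k]$, and since $[-\widetilde r_k,\widetilde r_k]\subseteq[-\widetilde r_{m_1},\widetilde r_{m_1}]$ for every $k$, each entry of $N_1,\dots,N_4$ --- a linear combination of the $\gamma_i^{\pm1}$ with matrix coefficients built from $C$ --- is analytic in $\mathbb{C}\setminus[-\widetilde r_{m_1},\widetilde r_{m_1}]$.

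For the jump condition I would first record two elementary facts. (i) With the principal branch in \eqref{def:gamma} (so that $\gamma_i(\infty)=1$, which is forced by requirement (c)), for $i\in J_k$ one has $\gamma_{i,+}(x)=\ii\,\gamma_{i,-}(x)$ on $(-\widetilde r_k,\widetilde r_k)$, while $\gamma_i$ continues analytically across $\mathbb{R}\setminus[-\widetilde r_k,\widetilde r_k]$; also $r_i=r_{\sigma(i)}$, so $\gamma_i$ depends only on the class $J_k$ containing $i$. Hence on the subinterval $(-\widetilde r_k,-\widetilde r_{k-1})\cup(\widetilde r_{k-1},\widetilde r_k)$ exactly the $\gamma_i$ with $i\in\bigcup_{l\ge k}J_l$ acquire the factor $\ii$, and the data attached to $\bigcup_{l\le k-1}J_l$ (in particular the contributions from $\mathcal{J}=J_0$) are continuous. (ii) Because every $J_l$ is $\sigma$-invariant, $C$ commutes with $\sum_{i\in\bigcup_{l\ge k}J_l}E_{ii}$ and with $\sum_{i\in\bigcup_{l\le k-1}J_l}E_{ii}$, and $C^2=I_n$ on the indices $\mathcal{I}=\bigcup_{l\ge1}J_l$; see the remark containing \eqref{commutation}.

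Next I would exploit the block structure: writing $A:=\sum_{i\in\bigcup_{l\le k-1}J_l}E_{ii}$ and $B:=\sum_{i\in\bigcup_{l\ge k}J_l}E_{ii}C$, both $N=\left(\begin{smallmatrix}N_1&N_2\\-N_2&N_1\end{smallmatrix}\right)$ (since $N_4=N_1$, $N_3=-N_2$) and $J_N=\left(\begin{smallmatrix}A&B\\-B&A\end{smallmatrix}\right)$ have the form $\left(\begin{smallmatrix}X&Y\\-Y&X\end{smallmatrix}\right)$, a form preserved under multiplication; consequently $N_+=N_-J_N$ reduces to the two block identities $N_{1,+}=N_{1,-}A-N_{2,-}B$ and $N_{2,+}=N_{1,-}B+N_{2,-}A$. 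On the slots $i\in\bigcup_{l\le k-1}J_l$ nothing jumps and these hold because $A$ acts as the identity there and $B$ vanishes. On the slots $i\in\bigcup_{l\ge k}J_l$ one has $A=0$, and using (ii) one computes $C\,B=\sum_{i\in\bigcup_{l\ge k}J_l}E_{ii}$, so that $-N_{2,-}B=\ii\sum_{i\in\bigcup_{l\ge k}J_l}\tfrac{\gamma_{i,-}-\gamma_{i,-}^{-1}}{2}E_{ii}$ and $N_{1,-}B=\sum_{i\in\bigcup_{l\ge k}J_l}\tfrac{\gamma_{i,-}+\gamma_{i,-}^{-1}}{2}E_{ii}C$; substituting $\gamma_{i,+}=\ii\gamma_{i,-}$ into $\tfrac{\gamma_{i,+}+\gamma_{i,+}^{-1}}{2}$ and $-\ii\tfrac{\gamma_{i,+}-\gamma_{i,+}^{-1}}{2}$ returns precisely these, while the remaining terms coming from $\bigcup_{l=1}^{k-1}J_l$ are unaffected and match via (ii). This establishes (b).

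For (c) I would expand, for $i\in J_k$, $\gamma_i(z)=\bigl(1-\tfrac{2\widetilde r_k}{z}+\Boh(z^{-2})\bigr)^{1/4}=1-\tfrac{\widetilde r_k}{2z}+\Boh(z^{-2})$, so $\gamma_i+\gamma_i^{-1}=2+\Boh(z^{-2})$ and $\gamma_i-\gamma_i^{-1}=-\tfrac{\widetilde r_k}{z}+\Boh(z^{-2})$; hence $N_1=N_4=I_n+\Boh(z^{-2})$ and $N_2=-N_3=\Boh(z^{-1})$, which is exactly \eqref{eq:asyN}, with the coefficient matrix read off from the $z^{-1}$ term of the off-diagonal blocks. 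As a byproduct I would note $\det J_N=\det(A^2+B^2)=\det I_n=1$ (here $AB=BA=0$ and $B^2=\sum_{i\in\bigcup_{l\ge k}J_l}E_{ii}$), whence $\det N$ continues to an entire function with limit $1$ at infinity, is identically $1$, and $N$ is invertible off the cut with at worst fourth-root singularities at $\pm\widetilde r_k$. No serious obstacle is expected: the argument is a finite computation, and the only delicate points are the bookkeeping of which $\gamma_i$ jump on a given subinterval, the branch normalization producing the factor $\ii$ (a wrong sign would break the matching), and the repeated use of the $\sigma$-invariance of the $J_l$ --- i.e. the commutation relations around \eqref{commutation} --- whenever $C$ is moved past a diagonal matrix.
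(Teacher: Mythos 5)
Your proof is correct and follows essentially the same route as the paper: direct verification of conditions (a)--(c) of RH problem \ref{def:gloN}, using the local jump $\gamma_{i,+}=\ii\,\gamma_{i,-}$ across $(-\widetilde r_k,\widetilde r_k)$, the $\sigma$-invariance of each $J_l$ (hence the commutation of $C$ with the associated projections), and the fact that $C^2$ restricts to the identity on the $\mathcal{I}$-slots. Your observation that both $N$ and $J_N$ have the multiplicatively closed block form $\left(\begin{smallmatrix}X&Y\\-Y&X\end{smallmatrix}\right)$, reducing the jump verification to the two identities $N_{1,+}=N_{1,-}A-N_{2,-}B$ and $N_{2,+}=N_{1,-}B+N_{2,-}A$, is a clean way of making explicit what the paper delegates to ``similar calculations'' after treating only the $(1,1)$-block.
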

	
	% \begin{align}
	% 	N_1=N_4:=\diag\left(\frac{\gamma_1(z)+\gamma_1(z)^{-1}}{2}, \frac{\gamma_2(z)+\gamma_2(z)^{-1}}{2},\cdots,\frac{\gamma_n(z)+\gamma_n(z)^{-1}}{2}\right);\\
	% 	N_2=-N_3:=-\ii\diag\left(\frac{\gamma_1(z)-\gamma_1(z)^{-1}}{2}, \frac{\gamma_2(z)-\gamma_2(z)^{-1}}{2},\cdots,\frac{\gamma_n(z)-\gamma_n(z)^{-1}}{2}\right)C 
	% \end{align}\
	\begin{proof}
It is clear that $N$ is defined and analytic in $\mathbb{C} \setminus [-\widetilde r_{m_1}, \widetilde r_{m_1}]$. To show the jump condition \eqref{eq:Njump}, we restrict our discussion for $x\in (\widetilde r_{k-1}, \widetilde r_{k})$, as the proof for $x\in (-\widetilde r_{k}, -\widetilde r_{k-1})$ follows in a similar manner. If $x\in (\widetilde r_{k-1}, \widetilde r_{k})$, we have
		\begin{align}\label{eq:Njump1}
			&\begin{pmatrix}
				N_{1,-}(z) & N_{2,-}(z)\\
				N_{3,-}(z) & N_{4,-}(z)
			\end{pmatrix} \begin{pmatrix}
					\sum_{i\in \bigcup_{l=0}^{k-1}J_l}E_{ii} & \sum_{i\in \bigcup_{l=k}^{m_1}J_l}E_{ii} C\\
					-\sum_{i\in \bigcup_{l=k}^{m_1}J_l}E_{ii} C & \sum_{i\in \bigcup_{l=0}^{k-1}J_l}E_{ii}
				\end{pmatrix} \nonumber\\
			&=\begin{small}
            \begin{pmatrix}
				N_{1,-}(z)\sum_{i\in \bigcup_{l=0}^{k-1}J_l}E_{ii}-N_{2,-}(z)\sum_{i\in \bigcup_{l=k}^{m_1}J_l}E_{ii} C & N_{1,-}(z)\sum_{i\in \bigcup_{l=k}^{m_1}J_l}E_{ii} C+N_{2,-}\sum_{i\in \bigcup_{l=0}^{k-1}J_l}E_{ii} \\
				N_{3,-}(z)\sum_{i\in \bigcup_{l=0}^{k-1}J_l}E_{ii}-N_{4,-}(z)\sum_{i\in \bigcup_{l=k}^{m_1}J_l}E_{ii} C & N_{3,-}(z)\sum_{i\in \bigcup_{l=k}^{m_1}J_l}E_{ii} C+N_{4,-}(z)\sum_{i\in \bigcup_{l=0}^{k-1}J_l} E_{ii}
			\end{pmatrix}.
            \end{small}
   %          \nonumber\\
			% &=\begin{pmatrix}
			% 	N_{1,+}(z) & N_{2,+}(z)\\
			% 	N_{3,+}(z) & N_{4,+}(z)
			% \end{pmatrix},
		\end{align}
% while the final equality derived from the facts that $\gamma_{i,+}(z)=\ii \gamma_{i,-}(z), \, i \in \bigcup_{l=k}^{m}J_l$, and recalling that $J_0 = \J$. 
For the $(1,1)$-block of the matrix on the right-hand side of the above equation, one has
% As an illustrative example, we compute the $(1,1)$-block explicitly (the other blocks follow by similar computations): 
        \begin{align*}
             & N_{1,-}(z)\sum_{i\in \bigcup_{l=0}^{k-1}J_l}E_{ii}-N_{2,-}(z)\sum_{i\in \bigcup_{l=k}^{m_1}J_l}E_{ii} C \nonumber\\
            &= \sum_{i\in \bigcup_{l=1}^{k-1}J_l}  \frac{\gamma_i(z)+\gamma_i(z)^{-1}}{2}E_{ii} +\sum_{j\in \J} E_{jj} +\ii  \sum_{i\in \bigcup_{l=k}^{m_1}J_l}  \frac{\gamma_{i,-}(z)-\gamma_{i,-}(z)^{-1}}{2}E_{ii}  \nonumber\\
            &=\sum_{i\in \bigcup_{l=1}^{k-1}J_l}  \frac{\gamma_i(z)+\gamma_i(z)^{-1}}{2}E_{ii} +\sum_{j\in \J} E_{jj} +  \sum_{i\in \bigcup_{l=k}^{m_1}J_l}  \frac{\gamma_{i,+}(z)+\gamma_{i,+}(z)^{-1}}{2}E_{ii} \nonumber\\
            &=N_{1,+}(z)\sum_{i\in \bigcup_{l=1}^{k-1}J_l}E_{ii}+\sum_{j\in \J} E_{jj}+N_{1,+}(z)\sum_{i\in \bigcup_{l=k}^{m_1}J_l}E_{ii} = N_{1,+}.
        \end{align*}
By similar calculations, one can show that the right-hand side of \eqref{eq:Njump1} is indeed equal to $\begin{pmatrix}
			 	N_{1,+}(z) & N_{2,+}(z)\\
			 	N_{3,+}(z) & N_{4,+}(z)
			 \end{pmatrix} $, as required.

Finally, the large-$z$ behavior of $N$ can be readily verified from the asymptotic expansions of $\gamma_i(z)$ for $i \in \I$, and we therefore omit the details.
	\end{proof}
    
\subsection{Local parametrix near $-\widetilde{r}_k$}
In a small disc centered at $ -\widetilde{r}_k$, $1\le k \le m_1$, we need to construct a local parametrix as an approximations of $T$, which reads as follows. 
	\begin{rhp}\label{rhp:P-k}
		\hfill
		\begin{enumerate}
			\item[\rm (a)] $ P^{(-k)}(z)$ is defined and analytic in $D(-\widetilde{r}_k, \varepsilon)\setminus \Gamma_T$, where $D(-\widetilde{r}_k, \varepsilon)$ with $\varepsilon$ being a small positive constant and $\Gamma_T$ are defined in \eqref{def:dz0r} and \eqref{def:gammaT}, respectively.
			
			\item[\rm (b)] For $z \in D(-\widetilde{r}_k, \varepsilon) \cap \Gamma_T$, we have 
			\begin{equation}
				P^{(-k)}_+(z)=P^{(-k)}_-(z)	J_{P^{(-k)}}(z),
			\end{equation}
			where 
			\begin{equation}\label{def:JP-k}
				J_{P^{(-k)}}(z):= \begin{cases}
					\begin{pmatrix}
						I_n & \sum_{i\in J_k} e^{2\ii (-S)^\frac{3}{2}g_i(z)}E_{ii}C \\
						0 & I_n
					\end{pmatrix}, & \quad z \in D(-\widetilde{r}_k, \varepsilon)\cap \Gamma_{2m_1+2k-1},  \\
					\begin{pmatrix}
						I_n & 0 \\
						-\sum_{i\in J_k} e^{-2\ii (-S)^\frac{3}{2}g_i(z)}E_{ii}C & I_n
					\end{pmatrix}, & \quad z \in  D(-\widetilde{r}_k, \varepsilon)\cap  \Gamma_{2m_1+2k},  \\
					\begin{pmatrix}
						e^{G_-(z) - G_+(z)} & \sum_{i\in \bigcup_{l=k+1}^{m_1}J_l}E_{ii} C\\
						-\sum_{i\in \bigcup_{l=k+1}^{m_1}J_l}E_{ii} C & \sum_{i \in \bigcup_{l=0}^{k}J_l}E_{ii}
					\end{pmatrix}, & \quad z \in  D(-\widetilde{r}_k, \varepsilon)\cap  (-\widetilde r_{k+1}, -\widetilde r_k),\\
					\begin{pmatrix}
						e^{G_-(z) - G_+(z)} & \sum_{i\in \bigcup_{l=k}^{m_1}J_l}E_{ii} C\\
						-\sum_{i\in \bigcup_{l=k}^{m_1}J_l}E_{ii} C & \sum_{i \in \bigcup_{l=0}^{k-1}J_l}E_{ii}
					\end{pmatrix}, & \quad z \in  D(-\widetilde{r}_k, \varepsilon)\cap  (-\widetilde r_k, -\widetilde r_{k-1}),
				\end{cases} 
			\end{equation}
			with $1\leq k \leq m_1$ and $-\widetilde r_{m_1+1}:=-\infty$\footnote{Note that $P^{(-m_1)}$ is analytic on the interval $ D(-\widetilde{r}_k, \varepsilon)\cap (-\infty, -\widetilde r_{m_1})$. Thus, the definition of $-\widetilde r_{m_1+1}$ is justified and is introduced solely for the sake of formal consistency.}. Here, we adopt the convention that when the subscript surpasses the superscript, the resulting set is empty and the summation is a zero matrix.
			
			\item[\rm (c)]As $S \to -\infty$,  we have the matching condition
			\begin{equation}\label{eq:matchhatP-k}
				P^{(-k)}(z)=\left( I+ \Boh\left((-S)^{-\frac{3}{2}}\right) \right)  N(z),\quad z \in \partial D(-\widetilde{r}_k, \varepsilon),
			\end{equation}
			where $N(z)$ is given in \eqref{def:N}.
		\end{enumerate}
	\end{rhp}
  %     \begin{remark}
  % 	For the union operator, we adopt the convention that when the subscript surpasses the superscript, the resulting set is empty and the summation is zero matrix.
  % \end{remark}
	The above RH problem can be solved by using the Airy parametrix $\Phi^{({\Ai})}$ introduced in Appendix \ref{airy}. To this end, we introduce the function 
	\begin{align}\label{def:fk}
		f^{(-k)}(z)&:=\begin{cases}
			\left(\frac{3\ii}{2}g_i(z)\right)^{\frac{2}{3}}, \quad & \Im z >0,\\
			\left(-\frac{3\ii}{2}g_i(z)\right)^{\frac{2}{3}}, \quad & \Im z <0,
		\end{cases}\nonumber\\
		&=  2^{-\frac 13} \widetilde{r}_k (z+\widetilde{r}_k) +\Boh\left((z+\widetilde{r}_k)^2\right), \quad \textrm{ $z \to -\widetilde{r}_k$},
	\end{align}
	where $i\in J_k$ and $g_i(z)$ is defined in \eqref{def:gi}. Clearly, $f^{(-k)}(z)$ is analytic in $D(-\widetilde{r}_k, \varepsilon)$ and is a conformal mapping.

%      And we define
% 	\begin{align}
% 		\widetilde{g}_i(z)=\begin{cases}
% 			-g_i(z), \quad & \Im z >0,\\
% 			g_i(z), \quad & \Im z <0,
% 		\end{cases}
% 	\end{align}
% where $i\in \bigcup_{l=k+1}^{m_1} J_l$, then each $\widetilde{g}_i(z)$ is analytic in $D(-\widetilde{r}_k, \varepsilon)$.
	
We now define 
	\begin{align}\label{def:P-k}
			P^{(-k)}(z)=E^{(-k)}(z)\Phi^{(-k)}(z) U^{(-k)}(z) e^{-\sum_{i\in \bigcup_{l=k}^{m_1} J_l}\ii (-S)^{\frac{3}{2}}g_i(z)E_{ii}\otimes \sigma_3},
	\end{align}
	where \begin{align}\label{def:E-k}
		E^{(-k)}(z)&:=N(z)U^{(-k)}(z)^{-1}\times\nonumber\\
		&\left[\begin{pmatrix}
			\sum_{i\in \bigcup_{l=0}^{k-1} J_l}E_{ii}+\sum_{i\in \bigcup_{l=k+1}^{m_1} J_l}e^{-\ii (-S)^{\frac{3}{2}}\widetilde{g}_i(z)}E_{ii} &0\\
				0& \sum_{i\in \bigcup_{l=0}^{k-1} J_l} E_{ii}+\sum_{i\in \bigcup_{l=k+1}^{m_1} J_l}e^{\ii (-S)^{\frac{3}{2}}\widetilde{g}_i(z)}E_{ii}
		\end{pmatrix}\right.\nonumber\\
	&\left.+\frac{1}{\sqrt{2}}\begin{pmatrix}
		\left(-S f^{(-k)}(z)\right)^{\frac{1}{4}}\sum_{i \in J_k}E_{ii} & -\ii \left(-S f^{(-k)}(z)\right)^{-\frac{1}{4}}\sum_{i \in J_k}E_{ii}\\
		-\ii \left(-S f^{(-k)}(z)\right)^{\frac{1}{4}}\sum_{i \in J_k}E_{ii} & \left(-S f^{(-k)}(z)\right)^{-\frac{1}{4}}\sum_{i \in J_k}E_{ii}
	\end{pmatrix}
	\right],
	\end{align}  
    with 
    \begin{align}
		\widetilde{g}_i(z):=\begin{cases}
			-g_i(z), \quad & \Im z >0,\\
			g_i(z), \quad & \Im z <0,
		\end{cases}\qquad i\in \bigcup_{l=k+1}^{m_1} J_l,
	\end{align}
	\begin{align}\label{def:Phi-k}
		\Phi^{(-k)}(z):=\begin{pmatrix}
			 \sum_{i\notin J_k}E_{ii}+\sum_{i\in J_k}\Phi_{11}^{({\Ai})}(-Sf^{(-k)}(z)) E_{ii} & \sum_{i\in J_k}\Phi_{12}^{({\Ai})}(-Sf^{(-k)}(z))E_{ii} \\
			\sum_{i\in J_k}\Phi_{21}^{({\Ai})}(-Sf^{(-k)}(z))E_{ii} & \sum_{i\notin J_k}E_{ii}+\sum_{i\in J_k}\Phi_{22}^{({\Ai})}(-Sf^{(-k)}(z)) E_{ii}
		\end{pmatrix},
	\end{align}
and
	\begin{align}
		U^{(-k)}(z):=\begin{cases}
			\begin{pmatrix}
				\sum_{i\notin J_k}E_{ii} & \sum_{i\in J_k}E_{ii}C \\
				\sum_{i\in J_k}E_{ii} & \sum_{i\notin J_k}E_{ii}
			\end{pmatrix}, &\quad \Im z>0, \\
			\begin{pmatrix}
				I_n-C_k^2 & -C_k \\
				C_k & \sum_{i\notin J_k}E_{ii}-C_k^2-\sum_{i\in J_k}E_{ii}C
			\end{pmatrix}, &\quad \Im z<0.
		\end{cases}
	\end{align}
For later use, we note that  
    \begin{align} \label{eq:Ukinverse}
        U^{(-k)}(z)^{-1}=\begin{cases}
           \begin{pmatrix}
				\sum_{i\notin J_k}E_{ii} & \sum_{i\in J_k}E_{ii}C \\
				\sum_{i\in J_k}E_{ii} & \sum_{i\notin J_k}E_{ii}
			\end{pmatrix}, &\quad \Im z>0, \\
            \begin{pmatrix}
				I_n-C_k^2 & C_k \\
				-C_k & \sum_{i\notin J_k}E_{ii}-C_k^2-\sum_{i\in J_k}E_{ii}C
			\end{pmatrix}, &\quad \Im z<0,
        \end{cases}
     \end{align}
where $i\notin J_k$ means that $i \in \{1,\cdots,n\} \setminus J_k$.
    
	\begin{proposition}\label{pro:P-k}
		The function $P^{(-k)}(z)$ defined in \eqref{def:P-k} solves RH problem \ref{rhp:P-k}.
	\end{proposition}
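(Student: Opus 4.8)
The plan is to verify the three defining properties of RH problem \ref{rhp:P-k} for the explicit construction \eqref{def:P-k}, exploiting the well-known jump and asymptotic properties of the Airy parametrix $\Phi^{(\Ai)}$ collected in Appendix \ref{airy}. \textbf{Analyticity.} Since $f^{(-k)}$ is a conformal map on $D(-\widetilde r_k,\varepsilon)$ and the $g_i$, $\widetilde g_i$ appearing in $E^{(-k)}$ are analytic across the real line in the relevant part of the disc (the branch cuts of $g_i$ for $i\in J_k$ end at $-\widetilde r_k$, and for $i\in\bigcup_{l>k}J_l$ the cut does not enter the disc), the only possible jumps of $P^{(-k)}$ inside the disc come from those of $\Phi^{(-k)}$, that is, from the Airy jumps on the rays mapped under $-Sf^{(-k)}$. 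One must check that $E^{(-k)}$ itself is analytic in $D(-\widetilde r_k,\varepsilon)$: the apparent singularity at $z=-\widetilde r_k$ coming from the factors $(-Sf^{(-k)}(z))^{\pm1/4}$ cancels against the matching fractional powers inside $N(z)$ built from $\gamma_i(z)$, exactly as in the standard edge-point analysis, and $E^{(-k)}$ has no jump across $(-\widetilde r_{k+1},-\widetilde r_{k-1})$ because $N$ is chosen precisely to cancel the constant part of $J_T$ there and the remaining diagonal factors match up; I would record this but not belabor it.

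\textbf{Jump condition.} This is the computational core. On each of the four pieces of $D(-\widetilde r_k,\varepsilon)\cap\Gamma_T$ one computes $P^{(-k)}_-(z)^{-1}P^{(-k)}_+(z)$. The factor $E^{(-k)}$ has no jump on these rays and drops out; the conjugating factor $e^{-\sum_{i}\ii(-S)^{3/2}g_iE_{ii}\otimes\sigma_3}$ converts the constant Airy jumps into the exponential jumps $e^{\pm2\ii(-S)^{3/2}g_i}$ appearing in \eqref{def:JP-k}; and the $U^{(-k)}$ factor, which jumps only across the real line, supplies the matrix $C$ (respectively $C_k$, $I_n-C_k^2$) decorations. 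The key identities to invoke are (i) the jump relations of $\Phi^{(\Ai)}$ on its four rays — upper/lower triangular with off-diagonal entry $1$ on the two outer rays and the real axis — and (ii) the block structure: since $\Phi^{(-k)}$ acts as the identity on $\mathrm{span}\{E_{ii}:i\notin J_k\}$, only the $J_k$-block genuinely transforms, while the commutation relation \eqref{commutation} together with $\sum_{i\in J_k}E_{ii}C=C\sum_{i\in J_k}E_{ii}$ lets one move $C$ past the diagonal exponentials freely. The real-axis jump requires using the discontinuity $\Phi^{(\Ai)}_+ = \Phi^{(\Ai)}_-\begin{pmatrix}0&1\\-1&0\end{pmatrix}$ together with the two different branches of $U^{(-k)}$ for $\Im z\gtrless0$; multiplying these out should reproduce the two $2\times2$-block jump matrices on $(-\widetilde r_{k+1},-\widetilde r_k)$ and $(-\widetilde r_k,-\widetilde r_{k-1})$, where the appearance of $I_n-C_k^2$ and of the reduced sums over $\bigcup_{l=k+1}^{m_1}J_l$ versus $\bigcup_{l=k}^{m_1}J_l$ is dictated by whether $J_k$ sits above or below the current point. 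I would carry out one of the two outer rays and the left real subinterval in full and assert the rest by symmetry.

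\textbf{Matching condition.} On $\partial D(-\widetilde r_k,\varepsilon)$ one has $-Sf^{(-k)}(z)\to\infty$, so one substitutes the large-argument expansion of the Airy parametrix, $\Phi^{(\Ai)}(\zeta)=\zeta^{-\sigma_3/4}\tfrac{1}{\sqrt2}\begin{pmatrix}1&\ii\\\ii&1\end{pmatrix}\bigl(I+O(\zeta^{-3/2})\bigr)e^{-\tfrac23\zeta^{3/2}\sigma_3}$ appropriately adapted to the block form, into \eqref{def:P-k}. The leading term combines with $U^{(-k)}$, the $(-Sf^{(-k)})^{\pm1/4}$-prefactor inside $E^{(-k)}$ and the conjugating exponential to collapse to exactly $N(z)$ — this is the reason $E^{(-k)}$ was defined the way it was — while the $O(\zeta^{-3/2})$ error, with $\zeta=-Sf^{(-k)}(z)\asymp (-S)$ uniformly on the circle, produces the $O((-S)^{-3/2})$ in \eqref{eq:matchhatP-k}. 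The off-diagonal blocks involving $\widetilde g_i$, $i\in\bigcup_{l>k}J_l$, are exponentially small on the circle and are absorbed harmlessly.

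\emph{Main obstacle.} The delicate point is bookkeeping the block decomposition: keeping straight which index sets ($J_k$, $\bigcup_{l\ge k}J_l$, $\bigcup_{l\le k-1}J_l$, and their shifts by one) appear in $U^{(-k)}$, in $E^{(-k)}$, and in the target $J_{P^{(-k)}}$, and checking that $E^{(-k)}$ is genuinely analytic at $-\widetilde r_k$ — i.e.\ that the quarter-power singularities in $N$ and in the Airy asymptotics cancel — rather than merely bounded. Everything else reduces, block by block, to the classical scalar Airy-parametrix verification.
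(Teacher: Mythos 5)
Your proposal follows essentially the same route as the paper's proof: establish analyticity of the prefactor $E^{(-k)}$ across $(-\widetilde r_{k+1},-\widetilde r_{k-1})$ and at $-\widetilde r_k$ (where the quarter-power singularities of the $(-Sf^{(-k)})^{\pm1/4}$ factors cancel against those in $N$), verify the jump condition by combining the constant Airy jumps from $\Phi^{(\Ai)}$, the conjugation by $e^{-\ii(-S)^{3/2}g_i E_{ii}\otimes\sigma_3}$, and the two branches of $U^{(-k)}$, and obtain the matching condition from the large-argument expansion of $\Phi^{(\Ai)}$ with $\zeta=-Sf^{(-k)}(z)\asymp -S$. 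The computations you flag as the "core" and defer are exactly the ones the paper carries out explicitly, and no step in your plan would fail.
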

\begin{proof}
We start with showing that $E^{(-k)}(z)$ is analytic in $D(-\widetilde{r}_k, \varepsilon)$. From \eqref{def:E-k}, the only possible jump is on $(-\widetilde{r}_k-\varepsilon, -\widetilde{r}_k+\varepsilon)$. For $z\in (-\widetilde{r}_k-\varepsilon, -\widetilde{r}_k)$, it follows from   \eqref{def:E-k} that
	\begin{align*}
		&E^{(-k)}_-(z)^{-1}E^{(-k)}_+(z) \nonumber \\
		&=\begin{pmatrix}
			\sum_{i\in \bigcup_{l=0}^{k-1} J_l}E_{ii} & 0\\
			0 & \sum_{i\in \bigcup_{l=0}^{k-1} J_l}E_{ii}
		\end{pmatrix}+\frac{1}{2}\begin{pmatrix}
	\left(-S\right)^{-\frac{1}{4}}	f^{(-k)}_-(z)^{-\frac{1}{4}}\sum_{i \in J_k}E_{ii} &  \ii \left(-S\right)^{-\frac{1}{4}} f^{(-k)}_-(z)^{-\frac{1}{4}}\sum_{i \in J_k}E_{ii}\\
		\ii \left(-S\right)^{\frac{1}{4}} f^{(-k)}_-(z)^{\frac{1}{4}}\sum_{i \in J_k}E_{ii} & \left(-S\right)^{\frac{1}{4}} f^{(-k)}_-(z)^{\frac{1}{4}}\sum_{i \in J_k}E_{ii}
	\end{pmatrix}\nonumber \\
&\quad\times\begin{pmatrix}
	\sum_{i \in J_k}E_{ii} & 0\\
	0 & -\sum_{i \in J_k}E_{ii}C
\end{pmatrix} \begin{pmatrix}
	0 & \sum_{i\in J_k}E_{ii}\\
	\sum_{i\in J_k}E_{ii}C &0
\end{pmatrix} \nonumber\\
 &\quad\times\begin{pmatrix}
	\left(-S\right)^{\frac{1}{4}} f^{(-k)}_+(z)^{\frac{1}{4}}\sum_{i \in J_k}E_{ii} & -\ii \left(-S\right)^{-\frac{1}{4}} f^{(-k)}_+(z)^{-\frac{1}{4}}\sum_{i \in J_k}E_{ii}\\
	-\ii \left(-S\right)^{\frac{1}{4}} f^{(-k)}_+(z)^{\frac{1}{4}}\sum_{i \in J_k}E_{ii} & \left(-S\right)^{-\frac{1}{4}} f^{(-k)}_+(z)^{-\frac{1}{4}}\sum_{i \in J_k}E_{ii}
\end{pmatrix}\nonumber\\
&\quad+\begin{pmatrix}
	\sum_{i\in \bigcup_{l=k+1}^{m_1} J_l}e^{\ii (-S)^{\frac{3}{2}}\widetilde{g}_{i}(z)}E_{ii} & 0\\
	0 & \sum_{i\in \bigcup_{l=k+1}^{m_1} J_l}e^{-\ii (-S)^{\frac{3}{2}}\widetilde{g}_{i}(z)}E_{ii}
\end{pmatrix}\begin{pmatrix}
0 & -C_k \\
C_k & 0
\end{pmatrix}\begin{pmatrix}
0 & C_k \\
-C_k & 0
\end{pmatrix}\nonumber\\
&\quad\times \begin{pmatrix}
	\sum_{i\in \bigcup_{l=k+1}^{m_1} J_l}e^{-\ii (-S)^{\frac{3}{2}}\widetilde{g}_{i}(z)}E_{ii} & 0\\
	0 & \sum_{i\in \bigcup_{l=k+1}^{m_1} J_l}e^{\ii (-S)^{\frac{3}{2}}\widetilde{g}_{i}(z)}E_{ii}
\end{pmatrix}\nonumber\\
&= \begin{pmatrix}
			\sum_{i\in \bigcup_{l=0}^{k-1} J_l}E_{ii} & 0\\
			0 & \sum_{i\in \bigcup_{l=0}^{k-1} J_l}E_{ii}
		\end{pmatrix}+\frac{1}{2}\begin{pmatrix}
		\left(-S\right)^{-\frac{1}{4}} f^{(-k)}_-(z)^{-\frac{1}{4}}\sum_{i \in J_k}E_{ii} & \ii \left(-S\right)^{-\frac{1}{4}} f^{(-k)}_-(z)^{-\frac{1}{4}}\sum_{i \in J_k}E_{ii}\\
		\ii \left(-S\right)^{\frac{1}{4}} f^{(-k)}_-(z)^{\frac{1}{4}}\sum_{i \in J_k}E_{ii} & \left(-S\right)^{\frac{1}{4}} f^{(-k)}_-(z)^{\frac{1}{4}}\sum_{i \in J_k}E_{ii}
	\end{pmatrix}\nonumber \\
&\quad\times \begin{pmatrix}
	0 & \sum_{i\in J_k}E_{ii}\\
	-\sum_{i\in J_k}E_{ii} &0
\end{pmatrix} \begin{pmatrix}
	\left(-S\right)^{\frac{1}{4}} f^{(-k)}_+(z)^{\frac{1}{4}}\sum_{i \in J_k}E_{ii} & -\ii \left(-S\right)^{-\frac{1}{4}} f^{(-k)}_+(z)^{-\frac{1}{4}}\sum_{i \in J_k}E_{ii}\\
	-\ii \left(-S\right)^{\frac{1}{4}} f^{(-k)}_+(z)^{\frac{1}{4}}\sum_{i \in J_k}E_{ii} & \left(-S\right)^{-\frac{1}{4}} f^{(-k)}_+(z)^{-\frac{1}{4}}\sum_{i \in J_k}E_{ii}
\end{pmatrix}\nonumber\\ 
&\quad +\begin{pmatrix}
			\sum_{i\in \bigcup_{l=k+1}^{m_1} J_l}E_{ii} & 0\\
			0 & \sum_{i\in \bigcup_{l=k+1}^{m_1} J_l}E_{ii}
		\end{pmatrix} \nonumber\\
&= \begin{pmatrix}
			\sum_{i\in \bigcup_{l=0}^{k-1} J_l}E_{ii} & 0\\
			0 & \sum_{i\in \bigcup_{l=0}^{k-1} J_l}E_{ii}
		\end{pmatrix} +\begin{pmatrix}
			\sum_{i\in  J_k}E_{ii} & 0\\
			0 & \sum_{i\in J_k}E_{ii}
		\end{pmatrix} +\begin{pmatrix}
			\sum_{i\in \bigcup_{l=k+1}^{m_1} J_l}E_{ii} & 0\\
			0 & \sum_{i\in \bigcup_{l=k+1}^{m_1} J_l}E_{ii}
		\end{pmatrix}\nonumber\\
        \nonumber\\
&=I_{2n},
\end{align*}
where we have made use of the fact $f^{(-k)}_+(z)^{\frac{1}{4}} = \ii f^{(-k)}_-(z)^{\frac{1}{4}}$, $z\in (-\widetilde{r}_k-\varepsilon, -\widetilde{r}_k)$, in the third equality.

For $z\in (-\widetilde{r}_k, -\widetilde{r}_k+\varepsilon)$, one has
\begin{align*}
	&E^{(-k)}_-(z)^{-1}E^{(-k)}_+(z) \nonumber \\
		&=\begin{pmatrix}
		\sum_{i\in \bigcup_{l=0}^{k-1} J_l}E_{ii} & 0\\
		0 & \sum_{i\in \bigcup_{l=0}^{k-1} J_l}E_{ii}
	\end{pmatrix}+\frac{1}{2}\begin{pmatrix}
		\left(-S\right)^{-\frac{1}{4}}f^{(-k)}(z)^{-\frac{1}{4}}\sum_{i \in J_k}E_{ii} & \ii \left(-S\right)^{-\frac{1}{4}} f^{(-k)}(z)^{-\frac{1}{4}}\sum_{i \in J_k}E_{ii}\\
		\ii \left(-S\right)^{\frac{1}{4}} f^{(-k)}(z)^{\frac{1}{4}}\sum_{i \in J_k}E_{ii} & \left(-S\right)^{\frac{1}{4}} f^{(-k)}(z)^{\frac{1}{4}}\sum_{i \in J_k}E_{ii}
	\end{pmatrix}\nonumber\\
&\quad \times\begin{pmatrix}
	\sum_{i \in J_k}E_{ii} & 0\\
	0 & -\sum_{i \in J_k}E_{ii}C
\end{pmatrix}
\begin{pmatrix}
	0& \sum_{i \in J_k}E_{ii}C\\
	-\sum_{i \in J_k}E_{ii}C& 0
\end{pmatrix}
\begin{pmatrix}
	0 & \sum_{i\in J_k}E_{ii}\\
	\sum_{i\in J_k}E_{ii}C &0
\end{pmatrix}\nonumber\\
&\quad \times \begin{pmatrix}
	\left(-S\right)^{\frac{1}{4}} f^{(-k)}(z)^{\frac{1}{4}}\sum_{i \in J_k}E_{ii} & -\ii \left(-S\right)^{-\frac{1}{4}} f^{(-k)}(z)^{-\frac{1}{4}}\sum_{i \in J_k}E_{ii}\\
	-\ii \left(-S\right)^{\frac{1}{4}} f^{(-k)}(z)^{\frac{1}{4}}\sum_{i \in J_k}E_{ii} & \left(-S\right)^{-\frac{1}{4}} f^{(-k)}(z)^{-\frac{1}{4}}\sum_{i \in J_k}E_{ii}
\end{pmatrix}\nonumber\\
&\quad +\begin{pmatrix}
	\sum_{i\in \bigcup_{l=k+1}^{m_1} J_l}e^{\ii (-S)^{\frac{3}{2}}\widetilde{g}_{i}(z)}E_{ii} & 0\\
	0 & \sum_{i\in \bigcup_{l=k+1}^{m_1} J_l}e^{-\ii (-S)^{\frac{3}{2}}\widetilde{g}_{i}(z)}E_{ii}
\end{pmatrix}\begin{pmatrix}
	0 & -C_k \\
	C_k & 0
\end{pmatrix}\begin{pmatrix}
	0 & C_k \\
	-C_k & 0
\end{pmatrix}\nonumber\\
&\quad \times \begin{pmatrix}
	\sum_{i\in \bigcup_{l=k+1}^{m_1} J_l}e^{-\ii (-S)^{\frac{3}{2}}\widetilde{g}_{i}(z)}E_{ii} & 0\\
	0 & \sum_{i\in \bigcup_{l=k+1}^{m_1} J_l}e^{\ii (-S)^{\frac{3}{2}}\widetilde{g}_{i}(z)}E_{ii}
\end{pmatrix}\nonumber\\
&= \begin{pmatrix}
			\sum_{i\in \bigcup_{l=0}^{k-1} J_l}E_{ii} & 0\\
			0 & \sum_{i\in \bigcup_{l=0}^{k-1} J_l}E_{ii}
		\end{pmatrix} + \frac{1}{2}\begin{pmatrix}
		\left(-S\right)^{-\frac{1}{4}} f^{(-k)}(z)^{-\frac{1}{4}}\sum_{i \in J_k}E_{ii} & \ii \left(-S\right)^{-\frac{1}{4}} f^{(-k)}(z)^{-\frac{1}{4}}\sum_{i \in J_k}E_{ii}\\
		\ii \left(-S\right)^{\frac{1}{4}} f^{(-k)}(z)^{\frac{1}{4}}\sum_{i \in J_k}E_{ii} & \left(-S\right)^{\frac{1}{4}} f^{(-k)}(z)^{\frac{1}{4}}\sum_{i \in J_k}E_{ii}
	\end{pmatrix}\nonumber\\
&\quad \times \begin{pmatrix}
	\left(-S\right)^{\frac{1}{4}} f^{(-k)}(z)^{\frac{1}{4}}\sum_{i \in J_k}E_{ii} & -\ii \left(-S\right)^{-\frac{1}{4}} f^{(-k)}(z)^{-\frac{1}{4}}\sum_{i \in J_k}E_{ii}\\
	-\ii \left(-S\right)^{\frac{1}{4}} f^{(-k)}(z)^{\frac{1}{4}}\sum_{i \in J_k}E_{ii} & \left(-S\right)^{-\frac{1}{4}} f^{(-k)}(z)^{-\frac{1}{4}}\sum_{i \in J_k}E_{ii}
\end{pmatrix}+ \begin{pmatrix}
			\sum_{i\in \bigcup_{l=k+1}^{m_1} J_l}E_{ii} & 0\\
			0 & \sum_{i\in \bigcup_{l=k+1}^{m_1} J_l}E_{ii}
		\end{pmatrix}\nonumber\\
&= \begin{pmatrix}
			\sum_{i\in \bigcup_{l=0}^{k-1} J_l}E_{ii} & 0\\
			0 & \sum_{i\in \bigcup_{l=0}^{k-1} J_l}E_{ii}
		\end{pmatrix} +\begin{pmatrix}
			\sum_{i\in  J_k}E_{ii} & 0\\
			0 & \sum_{i\in J_k}E_{ii}
		\end{pmatrix} +\begin{pmatrix}
			\sum_{i\in \bigcup_{l=k+1}^{m_1} J_l}E_{ii} & 0\\
			0 & \sum_{i\in \bigcup_{l=k+1}^{m_1} J_l}E_{ii}
		\end{pmatrix}\nonumber\\
&=I_{2n}.
\end{align*}
Thus,  $E^{(-k)}(z)$ is analytic in $D(-\widetilde{r}_k, \varepsilon)\setminus \{-\widetilde{r}_k\}$. Moreover, from \eqref{def:N} and \eqref{def:fk}, we have 
\begin{align}
    E^{(-k)}(-\widetilde{r}_k)&=\begin{pmatrix}
        \sum_{j \in \J} E_{jj} & 0\\
        0 & \sum_{j \in \J} E_{jj} &
    \end{pmatrix} +\frac{(-S)^{\frac{1}{4}}\widetilde{r}_k^{\frac{1}{2}}}{2^{\frac{1}{3}}}\begin{pmatrix}
        \sum_{i\in J_k}e^{-\frac{\pi \ii}{4}} E_{ii} &0 \\
            \sum_{i\in J_k}e^{\frac{\pi \ii}{4}} E_{ii} &0
    \end{pmatrix} \nonumber\\
    &\quad +\begin{pmatrix}
        \sum_{i\notin J_k\cup \J}\frac{\gamma_i(-\widetilde{r}_k)+\gamma_i(-\widetilde{r}_k)^{-1}}{2}E_{ii} & -\ii \sum_{i\notin J_k\cup \J}\frac{\gamma_i(-\widetilde{r}_k)+\gamma_i(-\widetilde{r}_k)^{-1}}{2}E_{ii} \\
        \ii \sum_{i\notin J_k\cup \J}\frac{\gamma_i(-\widetilde{r}_k)+\gamma_i(-\widetilde{r}_k)^{-1}}{2}E_{ii} & \sum_{i\notin J_k\cup \J}\frac{\gamma_i(-\widetilde{r}_k)+\gamma_i(-\widetilde{r}_k)^{-1}}{2}E_{ii}
    \end{pmatrix},
\end{align}
where $\gamma_i(z)$ is defined in \eqref{def:gamma}. As a consequence,  $E^{(-k)}(z)$ is analytic in $D(-\widetilde{r}_k, \varepsilon)$. 

The jump condition of $P^{(-k)}$ in \eqref{def:JP-k} can be verified from the analyticity of $E^{(-k)}(z)$ and the jump condition of $\Phi^{(\Ai)}$ in \eqref{jump:Airy}. Indeed, one can check
\begin{align*}
	&\Phi^{(-k)}_-(z)^{-1}\Phi^{(-k)}_+(z)
    \\
    &=
    \begin{cases}
		\begin{pmatrix}
			I_n & 0 \\
			\sum_{i\in J_k} E_{ii} & I_n
		\end{pmatrix}, & \quad z \in D(-\widetilde{r}_k, \varepsilon)\cap  \Gamma_{2m_1+2k-1}\cup \Gamma_{2m_1+2k},  \\
		\begin{pmatrix}
			\sum_{i\notin J_k}E_{ii} & \sum_{i\in J_k}E_{ii}\\
			-\sum_{i\in J_k}E_{ii} &\sum_{i\notin J_k}E_{ii}
		\end{pmatrix}, & \quad z \in D(-\widetilde{r}_k, \varepsilon)\cap  (-\widetilde r_{k+1}, -\widetilde r_k),\\
		\begin{pmatrix}
			I_n & \sum_{i\in J_k}E_{ii}\\
			0 & I_n
		\end{pmatrix}, & \quad z \in D(-\widetilde{r}_k, \varepsilon)\cap  (-\widetilde r_k, -\widetilde r_{k-1}).
	\end{cases} 
\end{align*}
This, together with \eqref{eq:Ukinverse}, implies that
\begin{align*}
	& U^{(-k)}_-(z)^{-1}	\Phi^{(-k)}_-(z)^{-1}\Phi^{(-k)}_+(z)U^{(-k)}_+(z) \\
    & =\begin{cases}
		\begin{pmatrix}
			I_n & \sum_{i\in J_k} E_{ii}C \\
			0 & I_n
		\end{pmatrix}, & \quad z \in  D(-\widetilde{r}_k, \varepsilon)\cap  \Gamma_{2m_1+2k-1},  \\
		\begin{pmatrix}
			I_n & 0 \\
			-\sum_{i\in J_k} E_{ii}C & I_n
		\end{pmatrix}, & \quad z \in D(-\widetilde{r}_k, \varepsilon)\cap  \Gamma_{2m_1+2k},  \\
		\begin{pmatrix}
			I_n & C_k\\
			-C_k & I_n-C_k^2
		\end{pmatrix}, & \quad z \in D(-\widetilde{r}_k, \varepsilon)\cap (-\widetilde r_{k+1}, -\widetilde r_k),\\
		\begin{pmatrix}
			I_n & C_{k-1}\\
			-C_{k-1} & I_n-C_{k-1}^2
		\end{pmatrix}, & \quad z \in D(-\widetilde{r}_k, \varepsilon)\cap  (-\widetilde r_k, -\widetilde r_{k-1}).
	\end{cases} 
\end{align*} 
The jump condition now follows directly from the definition of $P^{(-k)}$ in \eqref{def:P-k}.

Finally, from \eqref{def:N} and \eqref{def:fk}, the matching condition \eqref{eq:matchhatP-k} can be verified directly using asymptotic behavior of the Airy parametrix at infinity in \eqref{infty:Ai}. We only check \eqref{def:fk} for $\Im z>0$ in what follows. If $S\to -\infty$, one has 
\begin{align}
	&P^{(-k)}(z) N(z)^{-1}= E^{(-k)}(z)\Phi^{(-k)}(z) U^{(-k)}(z) e^{-\sum_{i\in \bigcup_{l=k}^{m_1} J_l}\ii (-S)^{\frac{3}{2}}g_i(z)E_{ii}\otimes \sigma_3} N(z)^{-1}\nonumber\\
	&=N(z)U^{(-k)}(z)^{-1}\nonumber\\
	&\quad \times\left[\begin{pmatrix}
		\sum_{i\in \bigcup_{l=0}^{k-1} J_l}E_{ii}+\sum_{i\in \bigcup_{l=k+1}^{m_1} J_l}e^{-\ii (-S)^{\frac{3}{2}}\widetilde{g}_i(z)}E_{ii} &0\\
		0& \sum_{i\in \bigcup_{l=0}^{k-1} J_l} E_{ii}+\sum_{i\in \bigcup_{l=k+1}^{m_1} J_l}e^{\ii (-S)^{\frac{3}{2}}\widetilde{g}_i(z)}E_{ii}
	\end{pmatrix}\right.\nonumber\\
	&\left. \quad +\frac{1}{\sqrt{2}}\begin{pmatrix}
		\left(-S\right)^{\frac{1}{4}} f^{(-k)}(z)^{\frac{1}{4}}\sum_{i \in J_k}E_{ii} & -\ii \left(-S\right)^{\frac{1}{4}} f^{(-k)}(z)^{-\frac{1}{4}}\sum_{i \in J_k}E_{ii}\\
		-\ii \left(-S\right)^{\frac{1}{4}} f^{(-k)}(z)^{\frac{1}{4}}\sum_{i \in J_k}E_{ii} & \left(-S\right)^{\frac{1}{4}} f^{(-k)}(z)^{-\frac{1}{4}}\sum_{i \in J_k}E_{ii}
	\end{pmatrix}
	\right]\nonumber\\
	& \quad \times \begin{pmatrix}
		\sum_{i \notin J_k}E_{ii}+\sum_{i\in J_k}\Phi_{11}^{({\Ai})}(-S f^{(-k)}(z)) E_{ii} & \sum_{i\in J_k}\Phi_{12}^{({\Ai})}(-S f^{(-k)}(z))E_{ii} \\
		\sum_{i\in J_k}\Phi_{21}^{({\Ai})}(-S f^{(-k)}(z))E_{ii} & \sum_{i \notin J_k}E_{ii}+\sum_{i\in J_k}\Phi_{22}^{({\Ai})}(-S f^{(-k)}(z)) E_{ii}
	\end{pmatrix} U^{(-k)}(z) \nonumber\\
&\quad \times e^{\left(-\sum_{i\in \bigcup_{l=k}^{m_1} J_l}\ii (-S)^{\frac{3}{2}}g_i(z)E_{ii}\right)\otimes \sigma_3} N(z)^{-1}\nonumber\\
&=N(z)U^{(-k)}(z)^{-1} \nonumber\\
&\quad \times \left[\begin{pmatrix}
	\sum_{i\in \bigcup_{l=0}^{k-1} J_l}E_{ii}+\sum_{i\in \bigcup_{l=k+1}^{m_1} J_l}e^{-\ii (-S)^{\frac{3}{2}}\widetilde{g}_i(z)}E_{ii} &0\\
	0& \sum_{i\in \bigcup_{l=0}^{k-1} J_l} E_{ii}+\sum_{i\in \bigcup_{l=k+1}^{m_1} J_l}e^{\ii (-S)^{\frac{3}{2}}\widetilde{g}_i(z)}E_{ii}
\end{pmatrix}\right.\nonumber\\
&\left. \quad +\left(I_{2n}+\Boh(S^{-\frac{3}{2}})\right)\begin{pmatrix}
	e^{-\frac{2}{3} \left(-S f^{(-k)}(z)\right)^{3/2}}\sum_{i \in J_k}E_{ii} & 0\\
0 & e^{\frac{2}{3} \left(-S f^{(-k)}(z)\right)^{3/2}}\sum_{i \in J_k}E_{ii}
\end{pmatrix}
\right] U^{(-k)}(z) \nonumber\\
&\quad \times e^{\left(-\sum_{i\in \bigcup_{l=k}^{m_1} J_l}\ii (-S)^{\frac{3}{2}}g_i(z)E_{ii}\right)\otimes \sigma_3} N(z)^{-1}\nonumber\\
&=N(z) \left[\begin{pmatrix}
    \sum_{i \notin J_k} E_{ii} & 0\\
    0 & \sum_{i \notin J_k} E_{ii}
\end{pmatrix} + \begin{pmatrix}
    0 & \sum_{i \in J_k} E_{ii}\\
    \sum_{i \in J_k}E_{ii}C & 0
\end{pmatrix} \right.\nonumber\\
&\left. \quad \times  \left( I_{2n}+ \Boh(S^{-\frac{3}{2}}) \right)
\begin{pmatrix}
	e^{-\frac{2}{3} \left(-S f^{(-k)}(z)\right)^{3/2}}\sum_{i \in J_k}E_{ii} & 0\\
0 & e^{\frac{2}{3} \left(-S f^{(-k)}(z)\right)^{3/2}}\sum_{i \in J_k}E_{ii}
\end{pmatrix} \right.\nonumber\\
&\left. \quad \times \begin{pmatrix}
    0 & \sum_{i \in J_k} E_{ii}C\\
    \sum_{i \in J_k}E_{ii} & 0
\end{pmatrix} e^{\left(-\sum_{i\in J_k}\ii (-S)^{\frac{3}{2}}g_i(z)E_{ii}\right)\otimes \sigma_3} \right]N(z)^{-1}\nonumber\\
&=N(z) \left( I_{2n}+ \Boh\left((-S)^{-\frac{3}{2}}\right) \right) N(z)^{-1} =I_{2n}+ \Boh\left((-S)^{-\frac{3}{2}}\right),
\end{align}
as required.
% where the last equation follows from the definition of $N$ in \eqref{def:N}.

This completes the proof of Proposition \ref{pro:P-k}.
\end{proof}

\subsection{Local parametrix near $\widetilde{r}_k$}
Near each $\widetilde{r}_k$, $1\le k \le m_1$, we need to construct the following local parametrix. 
	% We then move to the construction of the local parametrix near $\widetilde{r}_k$, $1\le k \le m_1$.
	\begin{rhp}\label{rhp:Pk}
		\hfill
		\begin{enumerate}
			\item[\rm (a)] $ P^{(k)}(z)$ is defined and analytic in $D(\widetilde{r}_k, \varepsilon)\setminus \Gamma_T$, where $\Gamma_T$ is defined in \eqref{def:gammaT}.
			\item[\rm (b)] For $z \in D(\widetilde{r}_k, \varepsilon) \cap \Gamma_T$, we have 
			\begin{equation}
				P^{(k)}_+(z)=P^{(k)}_-(z)J_{P^{(k)}}(z),
			\end{equation}
			where 
			\begin{equation*}\label{def:JPk}
				J_{P^{(k)}}(z):= \begin{cases}
					\begin{pmatrix}
						I_n & \sum_{i\in J_k} e^{2\ii (-S)^\frac{3}{2}g_i(z)}E_{ii}C \\
						0 & I_n
					\end{pmatrix}, & \quad z \in D(\widetilde{r}_k, \varepsilon) \cap \Gamma_{2k-1},  \\
					\begin{pmatrix}
						I_n & 0 \\
						-\sum_{i\in J_k} e^{-2\ii (-S)^\frac{3}{2}g_i(z)}E_{ii}C & I_n
					\end{pmatrix}, & \quad z \in D(\widetilde{r}_k, \varepsilon) \cap \Gamma_{2k},  \\
					\begin{pmatrix}
						e^{G_-(z) - G_+(z)} & e^{G_-(z)}C_ke^{G_+(z)}\\
						-e^{-G_-(z)}C_ke^{-G_+(z)} & e^{-G_-(z)+G_+(z)}(I_n-C_k^2)
					\end{pmatrix}, & \quad z \in D(\widetilde{r}_k, \varepsilon) \cap (\widetilde r_k, \widetilde r_{k+1}), \\
					\begin{pmatrix}
						e^{G_-(z) - G_+(z)} & e^{G_-(z)}C_{k-1}e^{G_+(z)}\\
						-e^{-G_-(z)}C_{k-1}e^{-G_+(z)} & e^{-G_-(z)+G_+(z)}(I_n-C_{k-1}^2)
					\end{pmatrix}, & \quad z \in D(\widetilde{r}_k, \varepsilon) \cap (\widetilde r_{k-1}, \widetilde r_k),
				\end{cases} 
			\end{equation*}
			with $1\leq k \leq m_1$ and $\widetilde r_{m_1+1}:=\infty$. 
			\item[\rm (c)]As $S \to -\infty$,  we have the matching condition
			\begin{equation}\label{eq:matchPk}
				P^{(k)}(z)=\left( I+ \Boh\left((-S)^{-\frac{3}{2}}\right) \right)    N(z),\quad z \in \partial D(\widetilde{r}_k, \varepsilon),
			\end{equation}
			where $N(z)$ is given in \eqref{def:N}.
		\end{enumerate}
	\end{rhp}
	One can directly verify that 
\begin{equation}\label{rk}
		P^{(k)}(z) : =\what \sigma_1 P^{(-k)}(-z)\what \sigma_1,
	\end{equation}
where 
\begin{equation}\label{def:hatsigma1}
    \what \sigma_1:=\begin{pmatrix}
		    0 & I_n \\
                I_n & 0
		\end{pmatrix}
\end{equation}
and the function
$P^{(k)}(z)$  defined in \eqref{def:P-k} solves RH problem \ref{rhp:Pk}, we omit the details here.  
    
	\subsection{Final transformation}
	The final transformation is defined by
	\begin{align}\label{def:R}
		R(z) = \begin{cases}
			T(z) P^{(-k)}(z)^{-1}, & \quad z \in D(-\widetilde{r}_k, \varepsilon),\\
			T(z) P^{(k)}(z)^{-1}, & \quad z \in D(\widetilde{r}_k, \varepsilon),\\
			T(z) N(z)^{-1}, & \quad \textrm{elsewhere}.
		\end{cases}
	\end{align}
	Based on the RH problems for $T$, $N$, and $P^{(\pm k)}$, it can be verified  that $\pm \widetilde{r}_k$ are actually removable singularities of $R$. Consequently, $R$ satisfies the following RH problem.
	
	\begin{rhp}\label{rhp:R}
		\hfill
		\begin{itemize}
			\item [\rm{(a)}] $R(z)$ is defined and analytic in $\mathbb{C} \setminus \Gamma_{R}$, where
			\begin{equation}
				\Gamma_{R}:=\Gamma_T \cup\bigcup_{k=1}^{m_1}\left( \partial D(-\widetilde{r}_k,\varepsilon) \cup \partial D(\widetilde{r}_k,\varepsilon)\right) \setminus \{ \bigcup_{k=1}^{m_1}\left(D(-\widetilde{r}_k,\varepsilon) \cup D(\widetilde{r}_k,\varepsilon)\right) \}.		\end{equation}
			%and the orientation of $\partial D(\pm \widetilde{r}_k,\varepsilon)$ are counterclockwise.
			\item [\rm{(b)}] For $z \in \Gamma_{R}$, we have
			\begin{equation}\label{eq:Rjump}
				R_+(z) = R_-(z) J_R (z),
			\end{equation}
			where 
			\begin{equation}\label{def:JR}
				J_R(z) := \begin{cases}
					P^{(\pm k)}(z) N(z)^{-1}, & \quad z \in \partial D(\pm\widetilde{r}_k, \varepsilon), \quad k=1,\ldots,m_1, \\
					% P^{(k)}(z) N(z)^{-1}, & \quad z \in \partial D(\widetilde{r}_k, \varepsilon),\\
					N_{-}(z) J_T(z) N_+(z)^{-1}, & \quad z \in \Gamma_{R} \setminus \{\bigcup_{k=1}^{m_1}\left(\partial D(-\widetilde{r}_k, \varepsilon) \cup \partial D(\widetilde{r}_k, \varepsilon)\right)\},
				\end{cases}
			\end{equation}
			 with $J_T$ defined in \eqref{def:JT}.
			\item [\rm{(c)}] As $z \to \infty$, we have
			\begin{equation}\label{eq:asyR}
				R(z) = I + \frac{R_1}{z} + \Boh (z^{-2}),
			\end{equation}
			where $R_1$ is independent of $z$.
		\end{itemize}
	\end{rhp}
	
	As $S\to -\infty$, we have the following estimate of $J_R(z)$ in \eqref{def:JR}. For $z \in \Gamma_R \setminus \left\{\bigcup_{k=1}^{m_1}\left(\partial D(-\widetilde{r}_k, \varepsilon) \cup  \partial D(\widetilde{r}_k, \varepsilon)\right) \right\}$, it is readily seen from \eqref{def:JT} and \eqref{def:N} that there exists a positive constant $c$ such that
	\begin{equation}
		J_R(z) = I + \Boh\left(e^{-c (-S)^{3/2}}\right).
	\end{equation}
	For $z \in \partial D(-\widetilde{r}_k, \varepsilon) \cup  \partial D(\widetilde{r}_k, \varepsilon)$, it follows from \eqref{eq:matchhatP-k} and \eqref{rk} that
	\begin{equation}
		J_R(z) = I + \Boh \left((-S)^{-\frac 32}\right).
	\end{equation}
	By a standard argument \cite{Deift1999, Deift1993}, we conclude that
	\begin{equation}\label{def:RR}
		R(z) = I + \Boh \left((-S)^{-\frac 32}\right), \qquad S \to -\infty,
	\end{equation}
	uniformly for $z\in \mathbb{C}\setminus \Gamma_{R}$.

\section{Asymptotic analysis of the RH problem for $\Psi_\mathcal{J}$}\label{sec:II}    
In this case, one has $c_{j\sigma(j)}c_{\sigma(j)j}\neq 1$, $j\in \mathcal{J}$. The transformation \eqref{def:PsitoT} with $\Psi_\mathcal{I}$ replaced by $\Psi_\mathcal{J}$, however, will cause a difficulty in the sense that the $(2,2)$-block of the jump matrix $J_T$ on $(-\widetilde r_k, -\widetilde r_{k-1}) \cup (\widetilde r_{k-1},\widetilde r_k)$ cannot be reduced to the form $\sum_{i \in \bigcup_{l=0}^{k-1}J_l}E_{ii}$. This leads to a completely different asymptotic analysis of the RH problem for $\Psi_\mathcal{J}$, which will be performed in this section.
    
\subsection{First transformation: $\Psi_{\mathcal{J}} \to \what T$}
    In order to normalize the RH problem \ref{rhp:PsiJ} for $\Psi_{\mathcal{J}}$ at infinity, we introduce the  $\tilde \theta$-functions
    \begin{equation}\label{def:tilde-theta}
        \tilde \theta_j(z): =\frac{\what \theta_j(z) + \what \theta_{\sigma (j)}(z)}{2} = \ii (-S)^{\frac 32} \left(\frac 16 z^3 -  r_j z\right), \quad j \in \mathcal{J}
    \end{equation}
    with $\what \theta_j(z)$ defined in \eqref{def:what-theta} and
    \begin{align}\label{def:rj}
		r_j:=\frac{s_{j}+s_{\sigma(j)}}{2S}>0,\quad j \in \mathcal{J},
	\end{align}
where $\sigma$ is defined in \eqref{def:sigma}.
%and $r_j = r_{\sigma(j)} = \frac{\epsilon_j+\epsilon_{\sigma(j)}}{2}+1$ with $\epsilon_j$ defined in Theorem \ref{-infty asympototics results}.

Similar to the strategy used in Section \ref{sec:PsiI-T}, we choose distinct elements from $\sqrt{2r_j}$, arrange them in ascending order and denote them as 
    \begin{equation}
        \what r_1 < \what r_2 < \cdots < \what r_{m_2}, \quad 1\le m_2\le |\mathcal{J}|. 
    \end{equation}
By setting
\begin{align}\label{def:what-J_k}
		\what J_k:=\{j: \sqrt{2r_j}=\widehat r_k\}, \,  1\leq k \leq m_2,  \quad \what J_0 :=\{1,\cdots,n\} \setminus \bigcup_{k=1}^{m_2} \what J_k= \I,
	\end{align} 
we define the constant matrices
\begin{align}\label{def:hatC_k}
		\what C_{k} :=\sum_{j\in \bigcup_{l=k+1}^{m_2}\what J_l}E_{jj}C, \quad 0\le k\le m_2-1, \qquad \what C_{m_2}:=0_{n\times n}
        %\left( I_n - \sum_{i\in \bigcup_{l=1}^{k}\what J_l}E_{ii}\right) C
	\end{align}

The first transformation is defined by 
    \begin{equation}\label{def:hat-T}
        \what T(z) = \Psi_{\J} (z)\what Q(z) e^{-\Theta(z)\otimes \sigma_3},
    \end{equation}
where
	 \begin{align}\label{def:Theta}
		\Theta(z):=\sum_{j\in \J}\left(\tilde \theta_j(z)-t_j z\right)E_{jj},
	 \end{align}
     with 
     \begin{align}
         t_j := \frac{s_j-s_{\sigma(j)}}{2S}, \quad j \in \J,
     \end{align}
	 and 
	\begin{align}\label{def:what-Q}
		\what Q(z):= \begin{cases}
			\begin{pmatrix}
				I_n & \what C_{k-1} \\
				0 & I_n
			\end{pmatrix}, & \quad z \in \what\Omega_{2k-1} \cup \what\Omega_{2m_2+2k-1},\, 1\leq k \leq m_2, \\
			\begin{pmatrix}
				I_n & 0 \\
				\what C_{k-1} & I_n
			\end{pmatrix}, & \quad z \in \what\Omega_{2k} \cup \what\Omega_{2m_2+2k}, \, 1\leq k \leq m_2,\\
			I_{2n}, & \quad \textrm{elsewhere},
		\end{cases} 
	\end{align}
	where the regions $\what \Omega_j$, $j=1, \dots, 4m_2$, are shown in Figure \ref{fig:what-T}. We have that $\what T$ satisfies the following RH problem.
 %    and
 %    \begin{align}\label{def:hatC_k}
	% 	\what C_{k} :=\sum_{j\in \bigcup_{l=k+1}^{m_2}\what J_l}E_{jj}C, \quad 0\le k\le m_2-1, \qquad \what C_{m_2}:=0
 %        %\left( I_n - \sum_{i\in \bigcup_{l=1}^{k}\what J_l}E_{ii}\right) C
	% \end{align}
 %    with 
	% \begin{align}\label{def:what-J_k}
	% 	\what J_k:=\{j: \sqrt{2r_j}=\widehat r_k\}, \,  1\leq k \leq m_2,  \quad \what J_0 :=\{1,\cdots,n\} \setminus \bigcup_{k=1}^{m_2} \what J_k= \I.
	% \end{align}
\begin{figure}[t]
\begin{center}
\tikzset{every picture/.style={line width=0.75pt}} %set default line width to 0.75pt        

\begin{tikzpicture}[x=0.75pt,y=0.75pt,yscale=-1,xscale=0.9]
        \tikzstyle{every node}=[scale=0.9]
%uncomment if require: \path (0,300); %set diagram left start at 0, and has height of 300

%Straight Lines [id:da8489016544296536] 
\draw    (229.17,163.95) -- (287.84,163.95) ;
\draw [shift={(261.1,163.95)}, rotate = 180] [fill={rgb, 255:red, 0; green, 0; blue, 0 }  ][line width=0.08]  [draw opacity=0] (7.14,-3.43) -- (0,0) -- (7.14,3.43) -- (4.74,0) -- cycle    ;
%Straight Lines [id:da6796983681136035] 
\draw    (275.84,163.95) -- (336.24,163.95) ;
%Straight Lines [id:da613185432630844] 
\draw    (336.24,163.95) -- (384.65,163.95) ;
%Straight Lines [id:da7504332571185158] 
\draw    (381.17,163.95) -- (439.84,163.95) ;
\draw [shift={(413.11,163.95)}, rotate = 180] [fill={rgb, 255:red, 0; green, 0; blue, 0 }  ][line width=0.08]  [draw opacity=0] (7.14,-3.43) -- (0,0) -- (7.14,3.43) -- (4.74,0) -- cycle    ;
%Straight Lines [id:da4143333786505953] 
\draw    (82.19,113.17) -- (135.17,163.95) ;
\draw [shift={(110.56,140.36)}, rotate = 223.79] [fill={rgb, 255:red, 0; green, 0; blue, 0 }  ][line width=0.08]  [draw opacity=0] (7.14,-3.43) -- (0,0) -- (7.14,3.43) -- (4.74,0) -- cycle    ;
%Straight Lines [id:da05382685170363366] 
\draw    (234.86,113.17) -- (287.84,163.95) ;
\draw [shift={(263.23,140.36)}, rotate = 223.79] [fill={rgb, 255:red, 0; green, 0; blue, 0 }  ][line width=0.08]  [draw opacity=0] (7.14,-3.43) -- (0,0) -- (7.14,3.43) -- (4.74,0) -- cycle    ;
%Straight Lines [id:da8321812744651109] 
\draw    (381.17,163.95) -- (434.15,214.74) ;
\draw [shift={(409.54,191.15)}, rotate = 223.79] [fill={rgb, 255:red, 0; green, 0; blue, 0 }  ][line width=0.08]  [draw opacity=0] (7.14,-3.43) -- (0,0) -- (7.14,3.43) -- (4.74,0) -- cycle    ;
%Straight Lines [id:da6699101040648437] 
\draw    (439.84,163.95) -- (492.81,214.74) ;
\draw [shift={(468.2,191.15)}, rotate = 223.79] [fill={rgb, 255:red, 0; green, 0; blue, 0 }  ][line width=0.08]  [draw opacity=0] (7.14,-3.43) -- (0,0) -- (7.14,3.43) -- (4.74,0) -- cycle    ;
%Straight Lines [id:da4246408662572354] 
\draw    (434.67,113.17) -- (381.17,163.95) ;
\draw [shift={(410.89,135.74)}, rotate = 136.49] [fill={rgb, 255:red, 0; green, 0; blue, 0 }  ][line width=0.08]  [draw opacity=0] (7.14,-3.43) -- (0,0) -- (7.14,3.43) -- (4.74,0) -- cycle    ;
%Straight Lines [id:da45611561092034236] 
\draw    (493.33,113.17) -- (439.84,163.95) ;
\draw [shift={(469.56,135.74)}, rotate = 136.49] [fill={rgb, 255:red, 0; green, 0; blue, 0 }  ][line width=0.08]  [draw opacity=0] (7.14,-3.43) -- (0,0) -- (7.14,3.43) -- (4.74,0) -- cycle    ;
%Straight Lines [id:da788272488071564] 
\draw    (135.17,163.95) -- (81.68,214.74) ;
\draw [shift={(111.4,186.52)}, rotate = 136.49] [fill={rgb, 255:red, 0; green, 0; blue, 0 }  ][line width=0.08]  [draw opacity=0] (7.14,-3.43) -- (0,0) -- (7.14,3.43) -- (4.74,0) -- cycle    ;
%Straight Lines [id:da23407888485874495] 
\draw    (287.84,163.95) -- (234.34,214.74) ;
\draw [shift={(264.06,186.52)}, rotate = 136.49] [fill={rgb, 255:red, 0; green, 0; blue, 0 }  ][line width=0.08]  [draw opacity=0] (7.14,-3.43) -- (0,0) -- (7.14,3.43) -- (4.74,0) -- cycle    ;
%Straight Lines [id:da3316896512135067] 
\draw    (535.84,163.95) -- (594.51,163.95) ;
\draw [shift={(567.77,163.95)}, rotate = 180] [fill={rgb, 255:red, 0; green, 0; blue, 0 }  ][line width=0.08]  [draw opacity=0] (7.14,-3.43) -- (0,0) -- (7.14,3.43) -- (4.74,0) -- cycle    ;
%Straight Lines [id:da8239056950451331] 
\draw    (594.51,163.95) -- (647.48,214.74) ;
\draw [shift={(622.87,191.15)}, rotate = 223.79] [fill={rgb, 255:red, 0; green, 0; blue, 0 }  ][line width=0.08]  [draw opacity=0] (7.14,-3.43) -- (0,0) -- (7.14,3.43) -- (4.74,0) -- cycle    ;
%Straight Lines [id:da06292186634830166] 
\draw    (648,113.17) -- (594.51,163.95) ;
\draw [shift={(624.23,135.74)}, rotate = 136.49] [fill={rgb, 255:red, 0; green, 0; blue, 0 }  ][line width=0.08]  [draw opacity=0] (7.14,-3.43) -- (0,0) -- (7.14,3.43) -- (4.74,0) -- cycle    ;
%Straight Lines [id:da7100279824601728] 
\draw    (75.78,163.95) -- (134.45,163.95) ;
\draw [shift={(107.71,163.95)}, rotate = 180] [fill={rgb, 255:red, 0; green, 0; blue, 0 }  ][line width=0.08]  [draw opacity=0] (7.14,-3.43) -- (0,0) -- (7.14,3.43) -- (4.74,0) -- cycle    ;
%Straight Lines [id:da02099420361186699] 
\draw    (22.8,113.17) -- (75.78,163.95) ;
\draw [shift={(51.17,140.36)}, rotate = 223.79] [fill={rgb, 255:red, 0; green, 0; blue, 0 }  ][line width=0.08]  [draw opacity=0] (7.14,-3.43) -- (0,0) -- (7.14,3.43) -- (4.74,0) -- cycle    ;
%Straight Lines [id:da0006363063981379424] 
\draw    (75.78,163.95) -- (22.28,214.74) ;
\draw [shift={(52,186.52)}, rotate = 136.49] [fill={rgb, 255:red, 0; green, 0; blue, 0 }  ][line width=0.08]  [draw opacity=0] (7.14,-3.43) -- (0,0) -- (7.14,3.43) -- (4.74,0) -- cycle    ;
%Straight Lines [id:da21306869660108763] 
\draw  [dash pattern={on 0.84pt off 2.51pt}]  (153.17,163.95) -- (211.84,163.95) ;
%Straight Lines [id:da7270609534987692] 
\draw    (175.86,113.17) -- (228.84,163.95) ;
\draw [shift={(204.23,140.36)}, rotate = 223.79] [fill={rgb, 255:red, 0; green, 0; blue, 0 }  ][line width=0.08]  [draw opacity=0] (7.14,-3.43) -- (0,0) -- (7.14,3.43) -- (4.74,0) -- cycle    ;
%Straight Lines [id:da10193383976883319] 
\draw    (228.84,163.95) -- (175.34,214.74) ;
\draw [shift={(205.06,186.52)}, rotate = 136.49] [fill={rgb, 255:red, 0; green, 0; blue, 0 }  ][line width=0.08]  [draw opacity=0] (7.14,-3.43) -- (0,0) -- (7.14,3.43) -- (4.74,0) -- cycle    ;
%Straight Lines [id:da5569065677997533] 
\draw  [dash pattern={on 0.84pt off 2.51pt}]  (456.17,163.95) -- (514.84,163.95) ;
%Straight Lines [id:da16872551302732608] 
\draw    (534.84,163.95) -- (587.81,214.74) ;
\draw [shift={(563.2,191.15)}, rotate = 223.79] [fill={rgb, 255:red, 0; green, 0; blue, 0 }  ][line width=0.08]  [draw opacity=0] (7.14,-3.43) -- (0,0) -- (7.14,3.43) -- (4.74,0) -- cycle    ;
%Straight Lines [id:da44917248327018044] 
\draw    (588.33,113.17) -- (534.84,163.95) ;
\draw [shift={(564.56,135.74)}, rotate = 136.49] [fill={rgb, 255:red, 0; green, 0; blue, 0 }  ][line width=0.08]  [draw opacity=0] (7.14,-3.43) -- (0,0) -- (7.14,3.43) -- (4.74,0) -- cycle    ;
%Straight Lines [id:da8184641288475145] 
\draw    (135.17,163.95) -- (151.84,163.95) ;
%Straight Lines [id:da969878565615637] 
\draw    (212.17,163.95) -- (228.84,163.95) ;
%Straight Lines [id:da8915694131977353] 
\draw    (439.84,163.95) -- (456.51,163.95) ;
%Straight Lines [id:da1971574127165706] 
\draw    (518.17,163.95) -- (534.84,163.95) ;
%Straight Lines [id:da5240366992467177] 
\draw  [dash pattern={on 4.5pt off 4.5pt}]  (389.74,113.17) -- (336.24,163.95) ;
%Straight Lines [id:da529353822773218] 
\draw  [dash pattern={on 4.5pt off 4.5pt}]  (336.24,163.95) -- (282.75,214.74) ;
%Straight Lines [id:da14886372622578725] 
\draw  [dash pattern={on 4.5pt off 4.5pt}]  (283.27,113.17) -- (336.24,163.95) ;
%Straight Lines [id:da8128449260499224] 
\draw  [dash pattern={on 4.5pt off 4.5pt}]  (336.24,163.95) -- (389.22,214.74) ;

\draw  [dash pattern={on 1.2pt off 3pt}]  (140,135) -- (166,135) ;
\draw  [dash pattern={on 1.2pt off 3pt}]  (140,200) -- (166,200) ;

\draw  [dash pattern={on 1.2pt off 3pt}]  (500,135) -- (526,135) ;
\draw  [dash pattern={on 1.2pt off 3pt}]  (500,200) -- (526,200) ;
% Text Node
\draw (219.96,94.95) node [anchor=north west][inner sep=0.75pt]  [font=\scriptsize] [align=left] {$\displaystyle \what{\Gamma }_{2m_2+1}$};
% Text Node
\draw (64.33,167.65) node [anchor=north west][inner sep=0.75pt]  [font=\scriptsize] [align=left] {$\displaystyle -\what{r}_{m_2}$};
% Text Node
\draw (156.96,93.95) node [anchor=north west][inner sep=0.75pt]  [font=\scriptsize] [align=left] {$\displaystyle \what{\Gamma }_{2m_2+3}$};
% Text Node
\draw (64.96,94.95) node [anchor=north west][inner sep=0.75pt]  [font=\scriptsize] [align=left] {$\displaystyle \what{\Gamma }_{4m_2-3}$};
% Text Node
\draw (7.96,93.95) node [anchor=north west][inner sep=0.75pt]  [font=\scriptsize] [align=left] {$\displaystyle \what{\Gamma }_{4m_2-1}$};
% Text Node
\draw (425.96,96.95) node [anchor=north west][inner sep=0.75pt]  [font=\scriptsize] [align=left] {$\displaystyle \what{\Gamma }_{1}$};
% Text Node
\draw (487.96,95.95) node [anchor=north west][inner sep=0.75pt]  [font=\scriptsize] [align=left] {$\displaystyle \what{\Gamma }_{3}$};
% Text Node
\draw (571.96,93.95) node [anchor=north west][inner sep=0.75pt]  [font=\scriptsize] [align=left] {$\displaystyle \what{\Gamma }_{2m_2-3}$};
% Text Node
\draw (634.96,93.95) node [anchor=north west][inner sep=0.75pt]  [font=\scriptsize] [align=left] {$\displaystyle \what{\Gamma }_{2m_2-1}$};
% Text Node
\draw (636.96,211.95) node [anchor=north west][inner sep=0.75pt]  [font=\scriptsize] [align=left] {$\displaystyle \what{\Gamma }_{2m_2}$};
% Text Node
\draw (572.96,210.95) node [anchor=north west][inner sep=0.75pt]  [font=\scriptsize] [align=left] {$\displaystyle \what{\Gamma }_{2m_2-2}$};
% Text Node
\draw (483.96,212.95) node [anchor=north west][inner sep=0.75pt]  [font=\scriptsize] [align=left] {$\displaystyle \what{\Gamma }_{4}$};
% Text Node
\draw (423.96,212.95) node [anchor=north west][inner sep=0.75pt]  [font=\scriptsize] [align=left] {$\displaystyle \what{\Gamma }_{2}$};
% Text Node
\draw (220.96,210.95) node [anchor=north west][inner sep=0.75pt]  [font=\scriptsize] [align=left] {$\displaystyle \what{\Gamma }_{2m_2+2}$};
% Text Node
\draw (156.96,211.95) node [anchor=north west][inner sep=0.75pt]  [font=\scriptsize] [align=left] {$\displaystyle \what{\Gamma }_{2m_2+4}$};
% Text Node
\draw (64.96,211.95) node [anchor=north west][inner sep=0.75pt]  [font=\scriptsize] [align=left] {$\displaystyle \what{\Gamma }_{4m_2-2}$};
% Text Node
\draw (7.96,213.95) node [anchor=north west][inner sep=0.75pt]  [font=\scriptsize] [align=left] {$\displaystyle \what{\Gamma }_{4m_2}$};
% Text Node
\draw (125.78,166.95) node [anchor=north west][inner sep=0.75pt]  [font=\scriptsize] [align=left] {$\displaystyle -\what{r}_{m_2-1}$};
% Text Node
\draw (216.84,166.95) node [anchor=north west][inner sep=0.75pt]  [font=\scriptsize] [align=left] {$\displaystyle -\what{r}_{2}$};
% Text Node
\draw (280.17,167.95) node [anchor=north west][inner sep=0.75pt]  [font=\scriptsize] [align=left] {$\displaystyle -\what{r}_{1}$};
% Text Node
\draw (378.17,167.95) node [anchor=north west][inner sep=0.75pt]  [font=\scriptsize] [align=left] {$\displaystyle \what{r}_{1}$};
% Text Node
\draw (436.17,167.95) node [anchor=north west][inner sep=0.75pt]  [font=\scriptsize] [align=left] {$\displaystyle \what{r}_{2}$};
% Text Node
\draw (511,165.95) node [anchor=north west][inner sep=0.75pt]  [font=\scriptsize] [align=left] {$\displaystyle \what{r}_{m_2-1}$};
% Text Node
\draw (585,167.65) node [anchor=north west][inner sep=0.75pt]  [font=\scriptsize] [align=left] {$\displaystyle \what{r}_{m_2}$};
% Text Node
\draw (333,168.07) node [anchor=north west][inner sep=0.75pt]  [font=\scriptsize]  {$0$};
% Text Node
\draw (373,132.73) node [anchor=north west][inner sep=0.75pt]  [font=\scriptsize]  {$\what{\Omega }_{1}$};
% Text Node
\draw (430,131.4) node [anchor=north west][inner sep=0.75pt]  [font=\scriptsize]  {$\what{\Omega }_{3}$};
% Text Node
\draw (375.33,183.4) node [anchor=north west][inner sep=0.75pt]  [font=\scriptsize]  {$\what{\Omega }_{2}$};
% Text Node
\draw (428,184.07) node [anchor=north west][inner sep=0.75pt]  [font=\scriptsize]  {$\what{\Omega }_{4}$};
% Text Node
\draw (581.33,182.73) node [anchor=north west][inner sep=0.75pt]  [font=\scriptsize]  {$\what{\Omega }_{2m_2}$};
% Text Node
\draw (570,134.07) node [anchor=north west][inner sep=0.75pt]  [font=\scriptsize]  {$\what{\Omega }_{2m_2-1}$};
% Text Node
\draw (267,128.73) node [anchor=north west][inner sep=0.75pt]  [font=\scriptsize]  {$\what{\Omega }_{2m_2+1}$};
% Text Node
\draw (263,184.73) node [anchor=north west][inner sep=0.75pt]  [font=\scriptsize]  {$\what{\Omega }_{2m_2+2}$};
% Text Node
\draw (217,128.07) node [anchor=north west][inner sep=0.75pt]  [font=\scriptsize]  {$\what{\Omega }_{2m_2+3}$};
% Text Node
\draw (210,184.73) node [anchor=north west][inner sep=0.75pt]  [font=\scriptsize]  {$\what{\Omega }_{2m_2+4}$};
% Text Node
\draw (64,131.4) node [anchor=north west][inner sep=0.75pt]  [font=\scriptsize]  {$\what{\Omega }_{4m_2-1}$};
% Text Node
\draw (67,185.4) node [anchor=north west][inner sep=0.75pt]  [font=\scriptsize]  {$\what{\Omega }_{4m_2}$};

\end{tikzpicture}
\caption{Regions $\what \Omega_i$ and the jump contours $\what \Gamma_i$, $i=1,\ldots,4m_2$, in the RH problem for $\what T$.
The dashed lines represent the contour $ \gamma_+ \cup  \gamma_-$.}
\label{fig:what-T}
\end{center}
\end{figure}

    % with $\Theta(z)$ and $\what Q(z)$ defined in \eqref{def:Theta} and \eqref{def:what-Q}.
    % It is easily seen that from RH problem \ref{rhp:PsiJ} for $\Psi_{\J}$ that $\what T$ satisfies the following RH problem.
    \begin{rhp}\label{rhp:what-T}
		\hfill
		\begin{enumerate}
			
			\item[\rm (a)] $\what T(z)$ is defined and analytic in $\mathbb{C} \setminus \what\Gamma_T$, where 
			\begin{equation}\label{def:what-gammaT}
				\what\Gamma_T:=\cup_{i=1}^{4m_2} \what\Gamma_i \cup[-\what r_{m_2}, \what r_{m_2}];
			\end{equation}
			see Figure \ref{fig:what-T} for an illustration.
			\item[\rm (b)] For $z \in \what \Gamma_T$, we have 
            \begin{align}
                \what T_{+}(z)=\what T_{-}(z) J_{\what T}(z),
            \end{align}
			where
			\begin{equation}\label{def:what-JT}
				J_{\what T}(z):= \begin{cases}
					\begin{pmatrix}
						I_n & \sum_{j\in \what J_{k}} e^{2\tilde \theta_j(z)}E_{jj} C \\
						0 & I_n
					\end{pmatrix}, & \quad z \in \what\Gamma_{2k-1} \cup \what\Gamma_{2m_2+2k-1},  \\
					\begin{pmatrix}
						I_n & 0 \\
						-\sum_{j\in \what J_{k}} e^{-2\tilde \theta_j(z)}E_{jj} C & I_n
					\end{pmatrix}, & \quad z \in \what\Gamma_{2k} \cup \what\Gamma_{2m_2+2k},  \\
					\begin{pmatrix}
						I_n & e^{\Theta (z)}\what C_{k-1}e^{\Theta (z)}\\
						-e^{-\Theta (z)}\what C_{k-1}e^{-\Theta (z)} & I_n-\what C_{k-1}^2
					\end{pmatrix}, & \quad z \in (-\what r_k, -\what r_{k-1}) \cup  (\what r_{k-1}, \what r_k),
				\end{cases} 
			\end{equation}
			with $1\leq k \leq m_2$ and $\what r_0:=0$.
			\item[\rm (c)]As $z \to \infty$ with $z \in \mathbb{C} \setminus  \what\Gamma_T$, we have
			\begin{align}\label{eq:asyhatT}
				\what T(z)&= I_{2n}+ \frac{\Psi_{\mathcal J,1}}{\sqrt{-S}z}+\Boh\left(z^{-2}\right),
			\end{align}
			where $\Psi_{\mathcal J,1}$ is defined in \eqref{J1}. 
		\end{enumerate}
	\end{rhp}
\begin{proof}
   Items (a) and (c) can be confirmed directly by using \eqref{def:hat-T} and RH problem \ref{rhp:PsiJ} for $\Psi_{\J}$. Consequently, our focus is on verifying the jump condition \eqref{def:what-JT}. For $z \in \what\Gamma_{2k-1} \cup \what\Gamma_{2m_2+2k-1}$ and $z \in \what\Gamma_{2k} \cup \what\Gamma_{2m_2+2k}$, the calculations closely resemble those presented in the proof of \eqref{check-JT} and we omit the details here. For $z \in (-\what r_k, -\what r_{k-1})\cup  (\what r_{k-1}, \what r_k)$, we have 
    \begin{align*}
    &\what T_-(z)^{-1} \what T_+(z) = e^{\Theta(z)\otimes \sigma_3}\what Q_-(z)^{-1}\what Q_+(z) e^{-\Theta(z)\otimes \sigma_3}\nonumber\\
    &=e^{\Theta(z)\otimes \sigma_3}\begin{pmatrix}
        I_n & \what C_{k-1}\\
        -\what C_{k-1} & I_n - C_{k-1}^2
    \end{pmatrix} e^{-\Theta(z)\otimes \sigma_3}
     = \begin{pmatrix}
						I_n & e^{\Theta (z)}\what C_{k-1}e^{\Theta (z)}\\
						-e^{-\Theta (z)}\what C_{k-1}e^{-\Theta (z)} & I_n-\what C_{k-1}^2
					\end{pmatrix},
    \end{align*}
    as required. In the last equality, we have made use of the fact that $\what  C_{k-1}^2$ is a diagonal matrix.
\end{proof}

\subsection{Second transformation: $\what T \to \what S$}
Based on the definition of $\Theta(z)$ in \eqref{def:Theta}, it is readily seen that $J_{\what T} (z)$ in \eqref{def:what-JT} tends to $I_{2n}$ exponentially fast as $S \to - \infty$ for $z \in \what \Gamma_T \setminus [-\what r_{m_2}, \what r_{m_2}]$. Moreover, the terms $e^{\Theta (z)}\what C_{k-1}e^{\Theta (z)}$ and $e^{-\Theta (z)}\what C_{k-1}e^{-\Theta (z)}$ in \eqref{def:what-JT} are highly oscillatory for large negative $S$.

    The second transformation then involves the so-called lens opening around the intervals $(-\what r_k, -\what r_{k-1}) \cup  (\what r_{k-1}, \what r_k)$, $1 \le k \le m_2$. The idea is to remove the highly oscillatory terms of $J_{\what T}$ with the cost of creating extra jumps that tend to the identity matrix on some new contours. To proceed, we observe that
    \begin{align}\label{open}
        \begin{pmatrix}
			I_n & e^{\Theta (z)}\what C_{k-1}e^{\Theta (z)}\\
			-e^{-\Theta (z)}\what C_{k-1}e^{-\Theta (z)} & I_n-\what C_{k-1}^2
		\end{pmatrix} = \what J_{k}^{(-)}(z)\begin{pmatrix}
						(I_n - \what C^2_{k-1})^{-1} & 0\\
						0& I_n-\what C_{k-1}^2
					\end{pmatrix}\what J_{k}^{(+)}(z),
                    \end{align}
        where
        \begin{equation}\label{def:Jk-}
            \what J_{k}^{(-)}(z):=\begin{pmatrix}
						I_n & e^{\Theta (z)}\what C_{k-1}(I_n - \what C^2_{k-1})^{-1}e^{\Theta (z)}\\
						0 & I_n
					\end{pmatrix}
        \end{equation}
        and
        \begin{equation}\label{def:Jk+}
            \what J_{k}^{(+)}(z) := \begin{pmatrix}
						I_n & 0\\
						-e^{-\Theta (z)}(I_n - \what C^2_{k-1})^{-1}\what C_{k-1}e^{-\Theta (z)} & I_n
					\end{pmatrix}.
        \end{equation}
        
Let $\Delta_{k,\pm}$ be the lenses on the $\pm$-side of $(-\what r_k, -\what r_{k-1})$ and $(\what r_{k-1}, \what r_k)$, as shown in Figure \ref{fig:S}. The second transformation is defined by
    \begin{equation}\label{def:hat-S}
        \what S(z) = \what T(z)\begin{cases}
             \what J_{k}^{(+)}(z)^{-1}, \quad & z \in \Delta_{k,+},\\ 
             \what J_{k}^{(-)}(z), \quad & z \in \Delta_{k,-},\\
             I_{2n}, \quad & \textrm{elsewhere.}
        \end{cases}
    \end{equation}
It is then easily seen from  RH problem \ref{rhp:what-T}  for $\what T$ and  \eqref{open} that $\what S$ satisfies the following RH problem.
    
    %see Figure \ref{fig:S} for an illustration.
    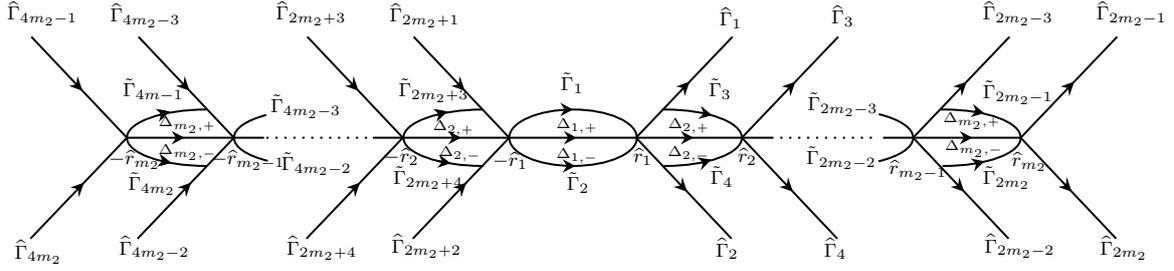
\begin{figure}[t]
\begin{center}
\tikzset{every picture/.style={line width=0.75pt}} %set default line width to 0.75pt        

\begin{tikzpicture}[x=0.75pt,y=0.75pt,yscale=-1,xscale=0.9]
        \tikzstyle{every node}=[scale=0.9]
%uncomment if require: \path (0,300); %set diagram left start at 0, and has height of 300

%Straight Lines [id:da4312890434100992] 
\draw    (252.17,163.95) -- (310.84,163.95) ;
\draw [shift={(284.1,163.95)}, rotate = 180] [fill={rgb, 255:red, 0; green, 0; blue, 0 }  ][line width=0.08]  [draw opacity=0] (7.14,-3.43) -- (0,0) -- (7.14,3.43) -- (4.74,0) -- cycle    ;
%Straight Lines [id:da7300681999452283] 
\draw    (310.84,163.95) -- (346.24,163.95) ;
%Straight Lines [id:da5761025745583493] 
\draw    (347.24,163.95) -- (381.65,163.95) ;
\draw [shift={(348.24,163.95)}, rotate = 180] [fill={rgb, 255:red, 0; green, 0; blue, 0 }  ][line width=0.08]  [draw opacity=0] (7.14,-3.43) -- (0,0) -- (7.14,3.43) -- (4.74,0) -- cycle    ;
%Straight Lines [id:da6608521868343554] 
\draw    (381.17,163.95) -- (439.84,163.95) ;
\draw [shift={(413.11,163.95)}, rotate = 180] [fill={rgb, 255:red, 0; green, 0; blue, 0 }  ][line width=0.08]  [draw opacity=0] (7.14,-3.43) -- (0,0) -- (7.14,3.43) -- (4.74,0) -- cycle    ;
%Straight Lines [id:da8367295312644816] 
\draw    (105.19,113.17) -- (158.17,163.95) ;
\draw [shift={(133.56,140.36)}, rotate = 223.79] [fill={rgb, 255:red, 0; green, 0; blue, 0 }  ][line width=0.08]  [draw opacity=0] (7.14,-3.43) -- (0,0) -- (7.14,3.43) -- (4.74,0) -- cycle    ;
%Straight Lines [id:da13912687562335346] 
\draw    (257.86,113.17) -- (310.84,163.95) ;
\draw [shift={(286.23,140.36)}, rotate = 223.79] [fill={rgb, 255:red, 0; green, 0; blue, 0 }  ][line width=0.08]  [draw opacity=0] (7.14,-3.43) -- (0,0) -- (7.14,3.43) -- (4.74,0) -- cycle    ;
%Straight Lines [id:da2945922152552972] 
\draw    (381.17,163.95) -- (434.15,214.74) ;
\draw [shift={(409.54,191.15)}, rotate = 223.79] [fill={rgb, 255:red, 0; green, 0; blue, 0 }  ][line width=0.08]  [draw opacity=0] (7.14,-3.43) -- (0,0) -- (7.14,3.43) -- (4.74,0) -- cycle    ;
%Straight Lines [id:da43741024153477326] 
\draw    (439.84,163.95) -- (492.81,214.74) ;
\draw [shift={(468.2,191.15)}, rotate = 223.79] [fill={rgb, 255:red, 0; green, 0; blue, 0 }  ][line width=0.08]  [draw opacity=0] (7.14,-3.43) -- (0,0) -- (7.14,3.43) -- (4.74,0) -- cycle    ;
%Straight Lines [id:da1776491781533711] 
\draw    (434.67,113.17) -- (381.17,163.95) ;
\draw [shift={(410.89,135.74)}, rotate = 136.49] [fill={rgb, 255:red, 0; green, 0; blue, 0 }  ][line width=0.08]  [draw opacity=0] (7.14,-3.43) -- (0,0) -- (7.14,3.43) -- (4.74,0) -- cycle    ;
%Straight Lines [id:da9308807641646986] 
\draw    (493.33,113.17) -- (439.84,163.95) ;
\draw [shift={(469.56,135.74)}, rotate = 136.49] [fill={rgb, 255:red, 0; green, 0; blue, 0 }  ][line width=0.08]  [draw opacity=0] (7.14,-3.43) -- (0,0) -- (7.14,3.43) -- (4.74,0) -- cycle    ;
%Straight Lines [id:da06215618431729153] 
\draw    (158.17,163.95) -- (104.68,214.74) ;
\draw [shift={(134.4,186.52)}, rotate = 136.49] [fill={rgb, 255:red, 0; green, 0; blue, 0 }  ][line width=0.08]  [draw opacity=0] (7.14,-3.43) -- (0,0) -- (7.14,3.43) -- (4.74,0) -- cycle    ;
%Straight Lines [id:da5460409230939466] 
\draw    (310.84,163.95) -- (257.34,214.74) ;
\draw [shift={(287.06,186.52)}, rotate = 136.49] [fill={rgb, 255:red, 0; green, 0; blue, 0 }  ][line width=0.08]  [draw opacity=0] (7.14,-3.43) -- (0,0) -- (7.14,3.43) -- (4.74,0) -- cycle    ;
%Shape: Ellipse [id:dp6317521442108273] 
\draw   (310.84,163.95) .. controls (310.84,156.11) and (326.69,149.76) .. (346.24,149.76) .. controls (365.8,149.76) and (381.65,156.11) .. (381.65,163.95) .. controls (381.65,171.79) and (365.8,178.15) .. (346.24,178.15) .. controls (326.69,178.15) and (310.84,171.79) .. (310.84,163.95) -- cycle ;
%Shape: Arc [id:dp38977514637648536] 
\draw  [draw opacity=0] (296.25,178.15) .. controls (296.22,178.15) and (296.2,178.15) .. (296.17,178.15) .. controls (271.87,178.15) and (252.17,171.79) .. (252.17,163.95) .. controls (252.17,156.11) and (271.87,149.76) .. (296.17,149.76) .. controls (296.2,149.76) and (296.22,149.76) .. (296.24,149.76) -- (296.17,163.95) -- cycle ; \draw   (296.25,178.15) .. controls (296.22,178.15) and (296.2,178.15) .. (296.17,178.15) .. controls (271.87,178.15) and (252.17,171.79) .. (252.17,163.95) .. controls (252.17,156.11) and (271.87,149.76) .. (296.17,149.76) .. controls (296.2,149.76) and (296.22,149.76) .. (296.24,149.76) ;  
%Shape: Arc [id:dp43819725503037243] 
\draw  [draw opacity=0] (395.54,149.77) .. controls (396.16,149.76) and (396.78,149.76) .. (397.41,149.76) .. controls (420.84,149.76) and (439.84,156.11) .. (439.84,163.95) .. controls (439.84,171.79) and (420.84,178.15) .. (397.41,178.15) .. controls (397.09,178.15) and (396.78,178.15) .. (396.47,178.15) -- (397.41,163.95) -- cycle ; \draw   (395.54,149.77) .. controls (396.16,149.76) and (396.78,149.76) .. (397.41,149.76) .. controls (420.84,149.76) and (439.84,156.11) .. (439.84,163.95) .. controls (439.84,171.79) and (420.84,178.15) .. (397.41,178.15) .. controls (397.09,178.15) and (396.78,178.15) .. (396.47,178.15) ;  
%Straight Lines [id:da3387319543977434] 
\draw    (346.24,149.76) ;
\draw [shift={(346.24,149.76)}, rotate = 180] [fill={rgb, 255:red, 0; green, 0; blue, 0 }  ][line width=0.08]  [draw opacity=0] (7.14,-3.43) -- (0,0) -- (7.14,3.43) -- (4.74,0) -- cycle    ;
%\draw [shift={(NaN,NaN)}, rotate = 180] [fill={rgb, 255:red, 0; green, 0; blue, 0 }  ][line width=0.08]  [draw opacity=0] (7.14,-3.43) -- (0,0) -- (7.14,3.43) -- (4.74,0) -- cycle    ;
%Straight Lines [id:da4419125471227907] 
\draw    (347.55,178.15) ;
\draw [shift={(347.55,178.15)}, rotate = 180] [fill={rgb, 255:red, 0; green, 0; blue, 0 }  ][line width=0.08]  [draw opacity=0] (7.14,-3.43) -- (0,0) -- (7.14,3.43) -- (4.74,0) -- cycle    ;
%Straight Lines [id:da5475279419137324] 
\draw    (271.39,152.21) -- (273.57,151.6) ;
\draw [shift={(276.46,150.79)}, rotate = 164.36] [fill={rgb, 255:red, 0; green, 0; blue, 0 }  ][line width=0.08]  [draw opacity=0] (5.36,-2.57) -- (0,0) -- (5.36,2.57) -- cycle    ;
\draw [shift={(276.43,150.8)}, rotate = 164.36] [fill={rgb, 255:red, 0; green, 0; blue, 0 }  ][line width=0.08]  [draw opacity=0] (7.14,-3.43) -- (0,0) -- (7.14,3.43) -- (4.74,0) -- cycle    ;
%Straight Lines [id:da5179557118749883] 
\draw    (270.66,175.63) -- (275.44,176.32) ;
\draw [shift={(278.4,176.75)}, rotate = 188.2] [fill={rgb, 255:red, 0; green, 0; blue, 0 }  ][line width=0.08]  [draw opacity=0] (7.14,-3.43) -- (0,0) -- (7.14,3.43) -- (4.74,0) -- cycle    ;
%Straight Lines [id:da8473177529790542] 
\draw    (417.77,151.5) -- (420.4,152.02) ;
\draw [shift={(423.34,152.61)}, rotate = 191.33] [fill={rgb, 255:red, 0; green, 0; blue, 0 }  ][line width=0.08]  [draw opacity=0] (7.14,-3.43) -- (0,0) -- (7.14,3.43) -- (4.74,0) -- cycle    ;
%Straight Lines [id:da7048423743522321] 
\draw    (417.77,176.58) -- (419.99,175.85) ;
\draw [shift={(422.84,174.92)}, rotate = 161.93] [fill={rgb, 255:red, 0; green, 0; blue, 0 }  ][line width=0.08]  [draw opacity=0] (7.14,-3.43) -- (0,0) -- (7.14,3.43) -- (4.74,0) -- cycle    ;
%Straight Lines [id:da9076887308198965] 
\draw    (535.84,163.95) -- (594.51,163.95) ;
\draw [shift={(567.77,163.95)}, rotate = 180] [fill={rgb, 255:red, 0; green, 0; blue, 0 }  ][line width=0.08]  [draw opacity=0] (7.14,-3.43) -- (0,0) -- (7.14,3.43) -- (4.74,0) -- cycle    ;
%Straight Lines [id:da40022367872158604] 
\draw    (594.51,163.95) -- (647.48,214.74) ;
\draw [shift={(622.87,191.15)}, rotate = 223.79] [fill={rgb, 255:red, 0; green, 0; blue, 0 }  ][line width=0.08]  [draw opacity=0] (7.14,-3.43) -- (0,0) -- (7.14,3.43) -- (4.74,0) -- cycle    ;
%Straight Lines [id:da1743803403579539] 
\draw    (648,113.17) -- (594.51,163.95) ;
\draw [shift={(624.23,135.74)}, rotate = 136.49] [fill={rgb, 255:red, 0; green, 0; blue, 0 }  ][line width=0.08]  [draw opacity=0] (7.14,-3.43) -- (0,0) -- (7.14,3.43) -- (4.74,0) -- cycle    ;
%Shape: Arc [id:dp44096405620002155] 
\draw  [draw opacity=0] (550.2,149.77) .. controls (550.82,149.76) and (551.45,149.76) .. (552.07,149.76) .. controls (575.51,149.76) and (594.51,156.11) .. (594.51,163.95) .. controls (594.51,171.79) and (575.51,178.15) .. (552.07,178.15) .. controls (551.76,178.15) and (551.44,178.15) .. (551.12,178.15) -- (552.07,163.95) -- cycle ; \draw   (550.2,149.77) .. controls (550.82,149.76) and (551.45,149.76) .. (552.07,149.76) .. controls (575.51,149.76) and (594.51,156.11) .. (594.51,163.95) .. controls (594.51,171.79) and (575.51,178.15) .. (552.07,178.15) .. controls (551.76,178.15) and (551.44,178.15) .. (551.12,178.15) ;  
%Straight Lines [id:da685443462159252] 
\draw    (98.78,163.95) -- (157.45,163.95) ;
\draw [shift={(130.71,163.95)}, rotate = 180] [fill={rgb, 255:red, 0; green, 0; blue, 0 }  ][line width=0.08]  [draw opacity=0] (7.14,-3.43) -- (0,0) -- (7.14,3.43) -- (4.74,0) -- cycle    ;
%Straight Lines [id:da5494890747110374] 
\draw    (45.8,113.17) -- (98.78,163.95) ;
\draw [shift={(74.17,140.36)}, rotate = 223.79] [fill={rgb, 255:red, 0; green, 0; blue, 0 }  ][line width=0.08]  [draw opacity=0] (7.14,-3.43) -- (0,0) -- (7.14,3.43) -- (4.74,0) -- cycle    ;
%Straight Lines [id:da40863851472686585] 
\draw    (98.78,163.95) -- (45.28,214.74) ;
\draw [shift={(75,186.52)}, rotate = 136.49] [fill={rgb, 255:red, 0; green, 0; blue, 0 }  ][line width=0.08]  [draw opacity=0] (7.14,-3.43) -- (0,0) -- (7.14,3.43) -- (4.74,0) -- cycle    ;
%Shape: Arc [id:dp49866179428623314] 
\draw  [draw opacity=0] (142.86,178.15) .. controls (142.83,178.15) and (142.81,178.15) .. (142.78,178.15) .. controls (118.48,178.15) and (98.78,171.79) .. (98.78,163.95) .. controls (98.78,156.11) and (118.48,149.76) .. (142.78,149.76) .. controls (142.8,149.76) and (142.83,149.76) .. (142.85,149.76) -- (142.78,163.95) -- cycle ; \draw   (142.86,178.15) .. controls (142.83,178.15) and (142.81,178.15) .. (142.78,178.15) .. controls (118.48,178.15) and (98.78,171.79) .. (98.78,163.95) .. controls (98.78,156.11) and (118.48,149.76) .. (142.78,149.76) .. controls (142.8,149.76) and (142.83,149.76) .. (142.85,149.76) ;  
%Straight Lines [id:da8786520251446612] 
\draw  [dash pattern={on 0.84pt off 2.51pt}]  (176.17,163.95) -- (234.84,163.95) ;
%Straight Lines [id:da8972805682566867] 
\draw    (198.86,113.17) -- (251.84,163.95) ;
\draw [shift={(227.23,140.36)}, rotate = 223.79] [fill={rgb, 255:red, 0; green, 0; blue, 0 }  ][line width=0.08]  [draw opacity=0] (7.14,-3.43) -- (0,0) -- (7.14,3.43) -- (4.74,0) -- cycle    ;
%Straight Lines [id:da1687850490340984] 
\draw    (251.84,163.95) -- (198.34,214.74) ;
\draw [shift={(228.06,186.52)}, rotate = 136.49] [fill={rgb, 255:red, 0; green, 0; blue, 0 }  ][line width=0.08]  [draw opacity=0] (7.14,-3.43) -- (0,0) -- (7.14,3.43) -- (4.74,0) -- cycle    ;
%Shape: Arc [id:dp22451878907890033] 
\draw  [draw opacity=0] (176.91,175.58) .. controls (165.58,173.01) and (158.17,168.76) .. (158.17,163.95) .. controls (158.17,159.25) and (165.25,155.09) .. (176.15,152.5) -- (202.17,163.95) -- cycle ; \draw   (176.91,175.58) .. controls (165.58,173.01) and (158.17,168.76) .. (158.17,163.95) .. controls (158.17,159.25) and (165.25,155.09) .. (176.15,152.5) ;  
%Straight Lines [id:da5752035003341397] 
\draw  [dash pattern={on 0.84pt off 2.51pt}]  (456.17,163.95) -- (514.84,163.95) ;
%Straight Lines [id:da060953879377177] 
\draw    (534.84,163.95) -- (587.81,214.74) ;
\draw [shift={(563.2,191.15)}, rotate = 223.79] [fill={rgb, 255:red, 0; green, 0; blue, 0 }  ][line width=0.08]  [draw opacity=0] (7.14,-3.43) -- (0,0) -- (7.14,3.43) -- (4.74,0) -- cycle    ;
%Straight Lines [id:da057160701257438995] 
\draw    (588.33,113.17) -- (534.84,163.95) ;
\draw [shift={(564.56,135.74)}, rotate = 136.49] [fill={rgb, 255:red, 0; green, 0; blue, 0 }  ][line width=0.08]  [draw opacity=0] (7.14,-3.43) -- (0,0) -- (7.14,3.43) -- (4.74,0) -- cycle    ;
%Shape: Arc [id:dp8229549048226312] 
\draw  [draw opacity=0] (515.79,152.1) .. controls (527.27,154.65) and (534.84,159) .. (534.84,163.95) .. controls (534.84,168.92) and (527.21,173.29) .. (515.67,175.83) -- (492.41,163.95) -- cycle ; \draw   (515.79,152.1) .. controls (527.27,154.65) and (534.84,159) .. (534.84,163.95) .. controls (534.84,168.92) and (527.21,173.29) .. (515.67,175.83) ;  
%Straight Lines [id:da3629511417396708] 
\draw    (158.17,163.95) -- (174.84,163.95) ;
%Straight Lines [id:da7525425374044717] 
\draw    (235.17,163.95) -- (251.84,163.95) ;
%Straight Lines [id:da39573607669518673] 
\draw    (439.84,163.95) -- (456.51,163.95) ;
%Straight Lines [id:da8528247797059575] 
\draw    (518.17,163.95) -- (534.84,163.95) ;
%Straight Lines [id:da5803766122887217] 
\draw    (568.77,150.5) -- (571.4,151.02) ;
\draw [shift={(574.34,151.61)}, rotate = 191.33] [fill={rgb, 255:red, 0; green, 0; blue, 0 }  ][line width=0.08]  [draw opacity=0] (7.14,-3.43) -- (0,0) -- (7.14,3.43) -- (4.74,0) -- cycle    ;
%Straight Lines [id:da6447176238335814] 
\draw    (117.77,175.5) -- (120.4,176.02) ;
\draw [shift={(123.34,176.61)}, rotate = 191.33] [fill={rgb, 255:red, 0; green, 0; blue, 0 }  ][line width=0.08]  [draw opacity=0] (5.36,-2.57) -- (0,0) -- (5.36,2.57) -- cycle    ;
\draw [shift={(123.11,176.57)}, rotate = 191.33] [fill={rgb, 255:red, 0; green, 0; blue, 0 }  ][line width=0.08]  [draw opacity=0] (7.14,-3.43) -- (0,0) -- (7.14,3.43) -- (4.74,0) -- cycle    ;
%Straight Lines [id:da29780491417417265] 
\draw    (117.39,152.21) -- (119.57,151.6) ;
\draw [shift={(122.46,150.79)}, rotate = 164.36] [fill={rgb, 255:red, 0; green, 0; blue, 0 }  ][line width=0.08]  [draw opacity=0] (5.36,-2.57) -- (0,0) -- (5.36,2.57) -- cycle    ;
\draw [shift={(122.43,150.8)}, rotate = 164.36] [fill={rgb, 255:red, 0; green, 0; blue, 0 }  ][line width=0.08]  [draw opacity=0] (7.14,-3.43) -- (0,0) -- (7.14,3.43) -- (4.74,0) -- cycle    ;
%Straight Lines [id:da39398200674916917] 
\draw    (569.39,177.21) -- (571.57,176.6) ;
\draw [shift={(574.46,175.79)}, rotate = 164.36] [fill={rgb, 255:red, 0; green, 0; blue, 0 }  ][line width=0.08]  [draw opacity=0] (7.14,-3.43) -- (0,0) -- (7.14,3.43) -- (4.74,0) -- cycle    ;

% Text Node
\draw (242.96,94.95) node [anchor=north west][inner sep=0.75pt]  [font=\scriptsize] [align=left] {$\what{\Gamma }_{2m_2+1}$};
% Text Node
\draw (87.33,169) node [anchor=north west][inner sep=0.75pt]  [font=\scriptsize] [align=left] {$-\what{r}_{m_2}$};
% Text Node
\draw (179.96,93.95) node [anchor=north west][inner sep=0.75pt]  [font=\scriptsize] [align=left] {$ \what{\Gamma }_{2m_2+3}$};
% Text Node
\draw (87.96,94.95) node [anchor=north west][inner sep=0.75pt]  [font=\scriptsize] [align=left] {$ \what{\Gamma }_{4m_2-3}$};
% Text Node
\draw (30.96,93.95) node [anchor=north west][inner sep=0.75pt]  [font=\scriptsize] [align=left] {$ \what{\Gamma }_{4m_2-1}$};
% Text Node
\draw (425.96,96.95) node [anchor=north west][inner sep=0.75pt]  [font=\scriptsize] [align=left] {$\what{\Gamma }_{1}$};
% Text Node
\draw (487.96,95.95) node [anchor=north west][inner sep=0.75pt]  [font=\scriptsize] [align=left] {$ \what{\Gamma }_{3}$};
% Text Node
\draw (571.96,93.95) node [anchor=north west][inner sep=0.75pt]  [font=\scriptsize] [align=left] {$ \what{\Gamma }_{2m_2-3}$};
% Text Node
\draw (634.96,93.95) node [anchor=north west][inner sep=0.75pt]  [font=\scriptsize] [align=left] {$ \what{\Gamma }_{2m_2-1}$};
% Text Node
\draw (636.96,211.95) node [anchor=north west][inner sep=0.75pt]  [font=\scriptsize] [align=left] {$ \what{\Gamma }_{2m_2}$};
% Text Node
\draw (572.96,210.95) node [anchor=north west][inner sep=0.75pt]  [font=\scriptsize] [align=left] {$ \what{\Gamma }_{2m_2-2}$};
% Text Node
\draw (483.96,212.95) node [anchor=north west][inner sep=0.75pt]  [font=\scriptsize] [align=left] {$ \what{\Gamma }_{4}$};
% Text Node
\draw (423.96,212.95) node [anchor=north west][inner sep=0.75pt]  [font=\scriptsize] [align=left] {$ \what{\Gamma }_{2}$};
% Text Node
\draw (243.96,210.95) node [anchor=north west][inner sep=0.75pt]  [font=\scriptsize] [align=left] {$ \what{\Gamma }_{2m_2+2}$};
% Text Node
\draw (185.96,211.95) node [anchor=north west][inner sep=0.75pt]  [font=\scriptsize] [align=left] {$ \what{\Gamma }_{2m_2+4}$};
% Text Node
\draw (92.96,211.95) node [anchor=north west][inner sep=0.75pt]  [font=\scriptsize] [align=left] {$ \what{\Gamma }_{4m_2-2}$};
% Text Node
\draw (34.96,213.95) node [anchor=north west][inner sep=0.75pt]  [font=\scriptsize] [align=left] {$ \what{\Gamma }_{4m_2}$};
% Text Node
\draw (144,169) node [anchor=north west][inner sep=0.75pt]  [font=\scriptsize] [align=left] {$ -\what{r}_{m_2-1}$};
% Text Node
\draw (239.84,168) node [anchor=north west][inner sep=0.75pt]  [font=\scriptsize] [align=left] {$-\what{r}_{2}$};
% Text Node
\draw (300.17,170) node [anchor=north west][inner sep=0.75pt]  [font=\scriptsize] [align=left] {$ -\what{r}_{1}$};
% Text Node
\draw (378.17,167.95) node [anchor=north west][inner sep=0.75pt]  [font=\scriptsize] [align=left] {$ \what{r}_{1}$};
% Text Node
\draw (436.17,167.95) node [anchor=north west][inner sep=0.75pt]  [font=\scriptsize] [align=left] {$ \what{r}_{2}$};
% Text Node
\draw (520,174) node [anchor=north west][inner sep=0.75pt]  [font=\scriptsize] [align=left] {$ \what{r}_{m_2-1}$};
% Text Node
\draw (588,170) node [anchor=north west][inner sep=0.75pt]  [font=\scriptsize] [align=left] {$ \what{r}_{m_2}$};
% Text Node
\draw (95,133) node [anchor=north west][inner sep=0.75pt]  [font=\scriptsize] [align=left] {$ \tilde{\Gamma }_{4m-1}$};
% Text Node
\draw (177,141) node [anchor=north west][inner sep=0.75pt]  [font=\scriptsize] [align=left] {$ \tilde{\Gamma }_{4m_2-3}$};
% Text Node
\draw (248,133) node [anchor=north west][inner sep=0.75pt]  [font=\scriptsize] [align=left] {$ \tilde{\Gamma }_{2m_2+3}$};
% Text Node
\draw (337.96,130) node [anchor=north west][inner sep=0.75pt]  [font=\scriptsize] [align=left] {$ \tilde{\Gamma }_{1}$};
% Text Node
\draw (420,133) node [anchor=north west][inner sep=0.75pt]  [font=\scriptsize] [align=left] {$ \tilde{\Gamma }_{3}$};
% Text Node
\draw (471,140) node [anchor=north west][inner sep=0.75pt]  [font=\scriptsize] [align=left] {$\ \tilde{\Gamma }_{2m_2-3}$};
% Text Node
\draw (572,133.95) node [anchor=north west][inner sep=0.75pt]  [font=\scriptsize] [align=left] {$ \tilde{\Gamma }_{2m_2-1}$};
% Text Node
\draw (572,177) node [anchor=north west][inner sep=0.75pt]  [font=\scriptsize] [align=left] {$ \tilde{\Gamma }_{2m_2}$};
% Text Node
\draw (474,166.95) node [anchor=north west][inner sep=0.75pt]  [font=\scriptsize] [align=left] {$ \tilde{\Gamma }_{2m_2-2}$};
% Text Node
\draw (420.96,176.95) node [anchor=north west][inner sep=0.75pt]  [font=\scriptsize] [align=left] {$ \tilde{\Gamma }_{4}$};
% Text Node
\draw (341.24,180.15) node [anchor=north west][inner sep=0.75pt]  [font=\scriptsize] [align=left] {$\tilde{\Gamma }_{2}$};
% Text Node
\draw (245,177) node [anchor=north west][inner sep=0.75pt]  [font=\scriptsize] [align=left] {$ \tilde{\Gamma }_{2m_2+4}$};
% Text Node
\draw (183,170) node [anchor=north west][inner sep=0.75pt]  [font=\scriptsize] [align=left] {$ \tilde{\Gamma }_{4m_2-2}$};
% Text Node
\draw (98,180) node [anchor=north west][inner sep=0.75pt]  [font=\scriptsize] [align=left] {$ \tilde{\Gamma }_{4m_2}$};
% Text Node
\draw (335.46,152.79) node [anchor=north west][inner sep=0.75pt]  [font=\tiny] [align=left] {$ \Delta _{1,+}$};
% Text Node
\draw (397.54,152.77) node [anchor=north west][inner sep=0.75pt]  [font=\tiny] [align=left] {$\Delta _{2,+}$};
% Text Node
\draw (550,150) node [anchor=north west][inner sep=0.75pt]  [font=\tiny] [align=left] {$ \Delta _{m_2,+}$};
% Text Node
\draw (267,153.79) node [anchor=north west][inner sep=0.75pt]  [font=\tiny] [align=left] {$\Delta _{2,+}$};
% Text Node
\draw (115,152) node [anchor=north west][inner sep=0.75pt]  [font=\tiny] [align=left] {$ \Delta _{m_2,+}$};
% Text Node
\draw (115,165.77) node [anchor=north west][inner sep=0.75pt]  [font=\tiny] [align=left] {$ \Delta _{m_2,-}$};
% Text Node
\draw (551,164) node [anchor=north west][inner sep=0.75pt]  [font=\tiny] [align=left] {$ \Delta _{m_2,-}$};
% Text Node
\draw (335.46,167.79) node [anchor=north west][inner sep=0.75pt]  [font=\tiny] [align=left] {$ \Delta _{1,-}$};
% Text Node
\draw (397.41,166.95) node [anchor=north west][inner sep=0.75pt]  [font=\tiny] [align=left] {$ \Delta _{2,-}$};
% Text Node
\draw (270.5,166.95) node [anchor=north west][inner sep=0.75pt]  [font=\tiny] [align=left] {$ \Delta _{2,-}$};
% Text Node
%\draw (378,147.4) node [anchor=north west][inner sep=0.75pt]  [font=\tiny]  {$\bar \Gamma _{1}$};
% Text Node
%\draw (383,180.4) node [anchor=north west][inner sep=0.75pt]  [font=\tiny]  {$\bar\Gamma _{2}$};
% Text Node
%\draw (524,143.4) node [anchor=north west][inner sep=0.75pt]  [font=\tiny]  {$\bar\Gamma _{2m-3}$};
% Text Node
%\draw (536.84,167.35) node [anchor=north west][inner sep=0.75pt]  [font=\tiny]  {$\bar\Gamma _{2m}$};
% Text Node
%\draw (147,147.4) node [anchor=north west][inner sep=0.75pt]  [font=\tiny]  {$\bar\Gamma _{4m-3}$};
% Text Node
%\draw (144.86,181.55) node [anchor=north west][inner sep=0.75pt]  [font=\tiny]  {$\bar\Gamma _{4m-2}$};
% Text Node
%\draw (299,145.4) node [anchor=north west][inner sep=0.75pt]  [font=\tiny]  {$\bar\Gamma _{2m+1}$};
% Text Node
%\draw (295,186.4) node [anchor=north west][inner sep=0.75pt]  [font=\tiny]  {$\bar\Gamma _{2m+2}$};

\end{tikzpicture}
\caption{Regions $\Delta_{k,\pm}$, $k=1, \ldots,m_2$, and the jump contour of the RH problem for $\what S$. $\what \Gamma_i$, $\tilde \Gamma_i$, $i=1,\ldots,4m_2$, are the solid lines that are distinct from $[-\what r_{m_2},\what r_{m_2}]$. }
\label{fig:S}
\end{center}
\end{figure}

 \begin{rhp}\label{rhp:what-S}
		\hfill
		\begin{enumerate}
			\item[\rm (a)] $\what S(z)$ is defined and analytic in $\mathbb{C} \setminus \what\Gamma_S$, where $\what\Gamma_S$ is shown in Figure \ref{fig:S}. 
            %and we define
   %          \begin{align}
   %              &\bar{\Gamma}_{2k-1}=\partial \Delta_{k+1,+} \cap \what \Gamma_{2k-1}, \quad  \bar{\Gamma}_{2m_2+2k-1} = \partial \Delta_{k+1,+} \cap \what \Gamma_{2m_2+2k-1}
   %              ,\nonumber\\
   %              &\bar \Gamma_{2k} = \partial \Delta_{k+1,-}\cap\what \Gamma_{2k}
   %             , \quad \bar \Gamma_{2m_2+2k} = \partial \Delta_{k+1,-}\cap\what \Gamma_{2m_2+2k},\quad k=1,\cdots,m_2-1.
   %          \end{align}
			\item[\rm (b)]  For $z \in \what \Gamma_S$, we have 
            \begin{align}
                \what S_{+}(z)=\what S_{-}(z) J_{\what S}(z),
            \end{align}
			where
            {\footnotesize{
			\begin{equation}\label{def:what-JS}
				J_{\what S}(z):= \begin{cases}
					\begin{pmatrix}
						I_n & \sum_{j\in \what J_{k}} e^{2\tilde \theta_j(z)}E_{jj} C \\
						0 & I_n
					\end{pmatrix}, & \quad z \in \{\what\Gamma_{2k-1} \cup \what\Gamma_{2m_2+2k-1}\}\setminus\{\bar\Gamma_{2k-1} \cup \bar\Gamma_{2m_2+2k-1}\},  \\
					\begin{pmatrix}
						I_n & 0 \\
						-\sum_{j\in \what J_{k}} e^{-2\tilde \theta_j(z)}E_{jj} C & I_n
					\end{pmatrix}, & \quad z \in \{\what\Gamma_{2k} \cup \what\Gamma_{2m_2+2k}\}\setminus\{\bar\Gamma_{2k} \cup \bar\Gamma_{2m_2+2k}\},  \\
					\begin{pmatrix}
					(I_n - \what C_{k-1}^2)^{-1} & 0\\
					0 & I_n - \what C_{k-1}^2
				\end{pmatrix}, & \quad z \in (-\what r_k, -\what r_{k-1}) \cup  (\what r_{k-1}, \what r_k),\\
                \begin{pmatrix}
						I_n & 0\\
						-e^{-\Theta (z)}(I_n - \what C^2_{k-1})^{-1}\what C_{k-1}e^{-\Theta (z)} & I_n
					\end{pmatrix}, & \quad z \in \tilde\Gamma_{2k-1} \cup \tilde\Gamma_{2m_2+2k-1},  \\
                    \begin{pmatrix}
						I_n & e^{\Theta (z)}\what C_{k-1}(I_n - \what C^2_{k-1})^{-1}e^{\Theta (z)}\\
						0 & I_n
					\end{pmatrix},& \quad z \in \tilde\Gamma_{2k} \cup \tilde\Gamma_{2m_2+2k},  \\
                    \begin{pmatrix}
						I_n & \sum_{j\in \what J_{k}} e^{2\tilde \theta_j(z)}E_{jj} C\\
						-e^{-\Theta (z)}(I_n - \what C^2_{k})^{-1}\what C_{k}e^{-\Theta (z)} & I_n
					\end{pmatrix}, & \quad z \in \bar\Gamma_{2k-1} \cup \bar\Gamma_{2m_2+2k-1},  \\
                    \begin{pmatrix}
						I_n & e^{\Theta (z)}\what C_{k}(I_n - \what C^2_{k})^{-1}e^{\Theta (z)}\\
						-\sum_{j\in \what J_{k}} e^{-2\tilde \theta_j(z)}E_{jj} C & I_n
					\end{pmatrix},& \quad z \in \bar\Gamma_{2k} \cup \bar\Gamma_{2m_2+2k},
				\end{cases} 
			\end{equation}}}
			with $1\leq k \leq m_2$ and $\what r_0:=0$, and where we define
            \begin{align*}
                &\bar{\Gamma}_{2k-1}=\partial \Delta_{k+1,+} \cap \what \Gamma_{2k-1}, \quad  \bar{\Gamma}_{2m_2+2k-1} = \partial \Delta_{k+1,+} \cap \what \Gamma_{2m_2+2k-1}
                ,\nonumber\\
                &\bar \Gamma_{2k} = \partial \Delta_{k+1,-}\cap\what \Gamma_{2k}
               , \quad \bar \Gamma_{2m_2+2k} = \partial \Delta_{k+1,-}\cap\what \Gamma_{2m_2+2k},\quad k=1,\cdots,m_2-1.
            \end{align*}
			\item[\rm (c)]As $z \to \infty$ with $z \in \mathbb{C} \setminus  \what\Gamma_S$, we have
			\begin{align}\label{eq:asyS}
				\what S(z)&= I_{2n}+ \frac{\Psi_{\mathcal J,1}}{\sqrt{-S}z}+\Boh\left(z^{-2}\right),
			\end{align}
			where $\Psi_{\mathcal J,1}$ is defined in \eqref{J1}. 		
		\end{enumerate}
	\end{rhp}
    
    \subsection{Global parametrix}

    As $S \to -\infty$, it is now readily seen that all the jump matrices of $\what S$ tend to the identity matrix exponentially fast except for those along $[-\what r_{m_2}, \what r_{m_2}]$ and we are led to consider the following global parametrix.
    \begin{rhp}\label{rhp:hatN}
		\hfill
		\begin{enumerate}
			\item[\rm (a)] $\what N(z)$ is defined and analytic in $\mathbb{C} \setminus [-\what r_{m_2}, \what r_{m_2}]$.
			\item[\rm (b)]  For $z\in (-\what r_{m_2}, \what r_{m_2})$, we have
            \begin{align}
                \what N_{+}(x)=\what N_{-}(x) J_{\what N}(x),
            \end{align}
			where 
			\begin{align}\label{def:jumphatN}
				J_{\what N}= \begin{pmatrix}
					(I_n - \what C_{k-1}^2)^{-1} & 0\\
					0 & I_n - \what C_{k-1}^2
				\end{pmatrix}, & \quad z \in  (-\what r_{k}, -\what r_{k-1}) \cup  (\what r_{k-1}, \what r_{k}),
			\end{align}
			with $1\leq k \leq m_2$.
			\item[\rm (c)] As $z \to \infty$ with $z \in \mathbb{C} \setminus  [-\what r_{m_2}, \what r_{m_2}]$, we have
			\begin{align}\label{eq:asyhatN}
				\what N(z)= I_{2n}+ \frac{\what N_1}{z}+\Boh\left(z^{-2}\right),
			\end{align}
			where $\what N_1$ is independent of $z$. 
		\end{enumerate}
	\end{rhp}

\begin{lemma}
     A solution of RH problem \ref{rhp:hatN} is given by
    \begin{align}\label{def:hat-N}
         \what N(z) := \begin{pmatrix}
             \sum_{j \in \bigcup_{k=1}^{m_2}\what J_k}\left(\frac{z+\what r_k}{z-\what r_k}\right)^{-\nu_j} E_{jj}+\sum_{i \in \I}E_{ii} & 0\\
             0 & \sum_{j \in \bigcup_{k=1}^{m_2}\what J_k}\left(\frac{z+\what r_k}{z-\what r_k}\right)^{\nu_j} E_{jj}+\sum_{i \in \I}E_{ii}
         \end{pmatrix},
        % \what N(z) := \begin{pmatrix}
        %     \displaystyle \sum_{i=1}^n \left(\frac{z+\sqrt{2r_i}}{z-\sqrt{2r_i}}\right)^{-\nu_j} E_{ii} & 0\\
        %     0 & \displaystyle \sum_{i=1}^n \left(\frac{z+\sqrt{2r_i}}{z-\sqrt{2r_i}}\right)^{\nu_j} E_{ii}
        % \end{pmatrix},
    \end{align}
    where 
    \begin{equation}\label{def:nui}
        \nu_j := -\frac{1}{2 \pi \ii} \ln \left(1-\mu_j \mu_{\sigma(j)}\right)
    \end{equation}
    with $\mu_j=c_{j\sigma(j)}$ being the diagonal entry of the matrix $\Lambda$ in Assumption \ref{assump} and we take the branch cut of $\left(\frac{z+\what r_k}{z-\what r_k}\right)^{\pm\nu_j}$ along $(-\what r_k, \what r_k)$.
\end{lemma}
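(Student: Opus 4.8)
The plan is to verify directly that the block-diagonal matrix $\what N$ in \eqref{def:hat-N} satisfies the three items of RH problem \ref{rhp:hatN}; since the candidate is given in closed form, no existence argument is needed and the proof reduces to three computations.

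\emph{Analyticity and behavior at infinity.} Each scalar entry of $\what N$ is either the constant $1$ (on the coordinates indexed by $\I$) or a factor $\left(\tfrac{z+\what r_k}{z-\what r_k}\right)^{\pm\nu_j}$ with the stated branch cut along $(-\what r_k,\what r_k)$, normalized to tend to $1$ as $z\to+\infty$; this factor is analytic in $\C\setminus[-\what r_k,\what r_k]$. Since $[-\what r_k,\what r_k]\subseteq[-\what r_{m_2},\what r_{m_2}]$ for every $1\le k\le m_2$, item (a) follows at once. For item (c) I would expand $\tfrac{z+\what r_k}{z-\what r_k}=1+\tfrac{2\what r_k}{z}+\Boh(z^{-2})$, so that $\left(\tfrac{z+\what r_k}{z-\what r_k}\right)^{\pm\nu_j}=1\pm\tfrac{2\nu_j\what r_k}{z}+\Boh(z^{-2})$ and therefore $\what N(z)=I_{2n}+\what N_1/z+\Boh(z^{-2})$ with a $z$-independent diagonal $\what N_1$ whose $(1,1)$- and $(2,2)$-blocks are $\mp 2\sum_{j}\nu_j\what r_{k(j)}E_{jj}$, where $k(j)$ denotes the index with $j\in\what J_{k(j)}$.

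\emph{Jump condition.} This is the step requiring care. Fix $x\in(\what r_{k-1},\what r_k)$; the interval $(-\what r_k,-\what r_{k-1})$ is handled identically because the factors have the same sign structure there. For a coordinate $j\in\what J_l$, the branch cut $(-\what r_l,\what r_l)$ contains $x$ precisely when $\what r_l\ge\what r_k$, i.e. $l\ge k$; all other coordinates are analytic across $x$ and contribute the identity. For $l\ge k$, tracking the argument of $\tfrac{z+\what r_l}{z-\what r_l}$ continuously from $z=+\infty$ around $\what r_l$ in the upper (resp. lower) half-plane shows that it acquires argument $-\pi$ (resp. $+\pi$) on the cut, so the boundary values of $\left(\tfrac{z+\what r_l}{z-\what r_l}\right)^{\pm\nu_j}$ satisfy $f_+=e^{\mp2\pi\ii\nu_j}f_-$. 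From \eqref{def:nui} one reads off $e^{-2\pi\ii\nu_j}=1-\mu_j\mu_{\sigma(j)}$ and $e^{2\pi\ii\nu_j}=(1-\mu_j\mu_{\sigma(j)})^{-1}$. To match these scalars with the jump \eqref{def:jumphatN}, I would use that the sets $\what J_l$ are invariant under the involution $\sigma$ (because $r_j=r_{\sigma(j)}$), which gives $\sum_{j\in\what J_l}E_{jj}C=C\sum_{j\in\what J_l}E_{jj}$ and, together with $C=\Lambda P$ and the fact that $C^2$ is diagonal, $\what C_{k-1}^2=\sum_{j\in\bigcup_{l=k}^{m_2}\what J_l}\mu_j\mu_{\sigma(j)}E_{jj}$; consequently $I_n-\what C_{k-1}^2$ and $(I_n-\what C_{k-1}^2)^{-1}$ act as multiplication by $1-\mu_j\mu_{\sigma(j)}$ and $(1-\mu_j\mu_{\sigma(j)})^{-1}$ on the coordinates $j\in\bigcup_{l\ge k}\what J_l$. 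Comparing the two diagonal blocks of $\what N_-(x)^{-1}\what N_+(x)$ with $J_{\what N}(x)$ block by block then yields item (b).

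\textbf{Main obstacle.} The computation is essentially bookkeeping, and the two spots where a slip would be fatal are the exponent in the branch-cut jump (an error there would interchange $1-\mu_j\mu_{\sigma(j)}$ with its reciprocal, so one must nail down the orientation of $[-\what r_{m_2},\what r_{m_2}]$ and the branch normalization) and the identification $\what C_{k-1}^2=\sum_{j\in\bigcup_{l\ge k}\what J_l}\mu_j\mu_{\sigma(j)}E_{jj}$, which rests on the $\sigma$-invariance of the $\what J_l$ and on Assumption \ref{assump}.
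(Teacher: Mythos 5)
Your proposal follows exactly the route the paper takes: verify directly that the explicit block-diagonal candidate satisfies (a), (b), (c), with the whole weight resting on the scalar jump of $\left(\frac{z+\what r_k}{z-\what r_k}\right)^{\pm\nu_j}$ and on the commutation coming from the $\sigma$-invariance of the sets $\what J_l$. Your derivation of the jump is correct: with the standard orientation (``$+$'' the upper side of a left-to-right oriented interval), $\frac{z+\what r_l}{z-\what r_l}$ approaches the negative real axis from below as $z\to x^+$ and from above as $z\to x^-$, which gives $\left(\frac{z+\what r_l}{z-\what r_l}\right)^{-\nu_j}_+ = e^{+2\pi\ii\nu_j}\left(\frac{z+\what r_l}{z-\what r_l}\right)^{-\nu_j}_-$, and $e^{2\pi\ii\nu_j}=(1-\mu_j\mu_{\sigma(j)})^{-1}$ then matches the $(1,1)$-block of $J_{\what N}$ once you use $\what C_{k-1}^2=\sum_{j\in\bigcup_{l\ge k}\what J_l}\mu_j\mu_{\sigma(j)}E_{jj}$, which you rightly flag as resting on $\sigma$-invariance and Assumption~\ref{assump}. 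You should be aware that the paper's own intermediate jump relation is stated with the opposite sign, $e^{-2\nu_j\pi\ii}$, followed by the (incompatible) identification $e^{-2\nu_j\pi\ii}=(1-\mu_j\mu_{\sigma(j)})^{-1}$; these two slips cancel and the paper reaches the same conclusion, but your version is the internally consistent one under the stated orientation. The one place your sketch is slightly looser than it needs to be is the phrase ``contains $x$ precisely when $\what r_l\ge\what r_k$'': what you actually use is $\what r_l>x$ for every $x$ in the open interval $(\what r_{k-1},\what r_k)$, which is equivalent to $l\ge k$ because the $\what r_l$ are strictly increasing; worth stating, but not a gap.
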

   
    \begin{proof}
        By re-expressing the (1,1)-block of the jump matrix in \eqref{def:jumphatN} in the  form
\begin{align}
        (I_n - \what C_{k-1}^2)^{-1}=\sum_{j\in \bigcup_{l=k}^{m_2}\what J_l}(1-\mu_j \mu_{\sigma(j)})^{-1}E_{jj} + \sum_{j \in \bigcup_{l=0}^{k-1}\what J_l}E_{jj},
\end{align}
and applying the jump relations
\begin{align}
   \left(\frac{z+\what r_k}{z-\what r_k}\right)^{-\nu_j}_+ &= e^{-2 \nu_j \pi \ii} \left(\frac{z+\what r_k}{z-\what r_k}\right)^{-\nu_j}_-, &&\quad z \in (-\what r_k, \what r_k),\nonumber\\
   \left(\frac{z+\what r_k}{z-\what r_k}\right)^{-\nu_j}_+ &=  \left(\frac{z+\what r_k}{z-\what r_k}\right)^{-\nu_j}_-, &&\quad z \in\C \setminus [-\what r_k, \what r_k],
\end{align}
we have, for $z \in  (-\what r_{k}, -\what r_{k-1}) \cup  (\what r_{k-1}, \what r_{k})$, the (1,1)-block of $\what N(z)$ satisfies 
\begin{align}
    &\sum_{j \in \bigcup_{k=1}^{m_2}\what J_k}\left(\frac{z+\what r_k}{z-\what r_k}\right)^{-\nu_j}_+ E_{jj}+\sum_{i \in \I}E_{ii}\nonumber\\
    &=\sum_{j \in \bigcup_{l=k}^{m_2}\what J_l}e^{-2 \nu_j \pi \ii}\left(\frac{z+\what r_k}{z-\what r_k}\right)^{-\nu_j}_-E_{jj}+\sum_{j \in \bigcup_{l=1}^{k-1}\what J_k}\left(\frac{z+\what r_k}{z-\what r_k}\right)_-^{-\nu_j} E_{jj}+\sum_{i \in \I}E_{ii}\nonumber\\
    &=\left[\sum_{j \in \bigcup_{k=1}^{m_2}\what J_k}\left(\frac{z+\what r_k}{z-\what r_k}\right)^{-\nu_j}_- E_{jj}+\sum_{i \in \I}E_{ii}\right]\left[\sum_{j\in \bigcup_{l=k}^{m_2}\what J_l}(1-\mu_j \mu_{\sigma(j)})^{-1}E_{jj} + \sum_{j \in \bigcup_{l=0}^{k-1}\what J_l}E_{jj}\right]\nonumber\\
     &=\left[\sum_{j \in \bigcup_{k=1}^{m_2}\what J_k}\left(\frac{z+\what r_k}{z-\what r_k}\right)^{-\nu_j}_- E_{jj}+\sum_{i \in \I}E_{ii}\right] \cdot (I_n - \what C_{k-1}^2)^{-1},
\end{align}
as required. The proof of the jump condition for the $(2,2)$-block of $\what N(z)$ follows similarly, we therefore omit the details here.

As $z \to \infty$, a direct verification confirms that $\what N(z)$ defined in \eqref{def:hat-N} satisfies the asymptotic behavior prescribed by \eqref{eq:asyhatN}.
    \end{proof}

   \subsection{Local parametrix near $ \pm \what r_k$}
    Since the convergence of $J_{\what S}$ to the identity matrix is not uniform near $z = \pm \what r_k$, $1 \le k \le m_2$, we have to build local parametrices  $\what P^{(\pm k)}$  around these points. From the inherent symmetry of the jump matrix $J_{\what S}$ in \eqref{def:what-JS}, it follows that, similar to \eqref{rk}, 
    \begin{equation}\label{relation:hatPk-hatP-k}
		\what P^{(k)}(z) =\what \sigma_1 \what P^{(-k)}(-z)\what \sigma_1
	\end{equation}
    with $\what \sigma_1$ given in \eqref{def:hatsigma1}.
%Here, $\what P^{(k)}$ and $\what P^{(-k)}$ denote the local parametrices constructed in the small disks $D(\what r_k, \varepsilon)$ centered at $z=\what r_k$ and $D(-\what r_k, \varepsilon)$ centered at $z=-\what r_k$, respectively. 
Hence, it suffices to construct the local parametrix in the neighborhood of $z=\what r_k$, which reads as follows.
   \begin{rhp}\label{rhp:hat-Pk}
		\hfill
		\begin{enumerate}
			\item[\rm (a)] $\what P^{(k)}(z)$ is defined and analytic in $D(\what{r}_k, \varepsilon)\setminus \what\Gamma_S$, where $\what\Gamma_S$ is shown in Figure \ref{fig:S}.
			
			\item[\rm (b)] For $z \in D(\what{r}_k, \varepsilon) \cap \what\Gamma_S$, we have 
			\begin{equation}
				\what P^{(k)}_+(z)=\what P^{(k)}_-(z)	J_{\what P^{(k)}}(z)
			\end{equation}
			where 
			\begin{equation}\label{def:hat-JP-k}
				J_{\what P^{(k)}}(z):= \begin{cases}
				\begin{pmatrix}
					(I_n -\what  C_{k-1}^2)^{-1} & 0\\
					0 & I_n - \what C_{k-1}^2
				\end{pmatrix}, & \quad z \in D(\what{r}_k, \varepsilon) \cap (\what r_{k-1}, \what r_k),\\
				\begin{pmatrix}
					(I_n - \what C_{k}^2)^{-1} & 0\\
					0 & I_n - \what C_{k}^2
				\end{pmatrix}, & \quad z \in D(\what{r}_k, \varepsilon) \cap (\what r_{k}, \what r_{k+1}),\\
                \begin{pmatrix}
						I_n & 0\\
						-e^{-\Theta (z)}(I_n - \what C^2_{k-1})^{-1}\what C_{k-1}e^{-\Theta (z)} & I_n
					\end{pmatrix}, & \quad z \in D(\what{r}_k, \varepsilon) \cap \tilde\Gamma_{2k-1},  \\
                    \begin{pmatrix}
						I_n & e^{\Theta (z)}\what C_{k-1}(I_n - \what C^2_{k-1})^{-1}e^{\Theta (z)}\\
						0 & I_n
					\end{pmatrix},& \quad z \in D(\what{r}_k, \varepsilon) \cap \tilde\Gamma_{2k},  \\
                    \begin{pmatrix}
						I_n & \sum_{j\in \what J_{k}} e^{2\tilde \theta_j(z)}E_{jj} C\\
						-e^{-\Theta (z)}(I_n - \what C^2_{k})^{-1}\what C_{k}e^{-\Theta (z)} & I_n
					\end{pmatrix}, & \quad z \in D(\what{r}_k, \varepsilon) \cap \bar\Gamma_{2k-1},  \\
                    \begin{pmatrix}
						I_n & e^{\Theta (z)}\what C_{k}(I_n - \what C^2_{k})^{-1}e^{\Theta (z)}\\
						-\sum_{j\in \what J_{k}} e^{-2\tilde \theta_j(z)}E_{jj} C & I_n
					\end{pmatrix},& \quad z \in D(\what{r}_k, \varepsilon) \cap \bar\Gamma_{2k},
				\end{cases} 
			\end{equation}
			with $1\leq k \leq m_2$ and $\what r_{m_2+1}:=\infty$. 
            %We observe that the matrix $\what  P^{(m_2)}$ is analytic on the interval $(\what r_{m_2}, +\infty)$. Therefore, the definition of $\what r_{m_2+1}$ is justified and is introduced solely for the sake of formal consistency.
			
			\item[\rm (c)]As $S \to -\infty$,  we have the matching condition
			\begin{equation}\label{eq:matchhatPk}
				\what  P^{(k)}(z)=\left( I+ \Boh\left((-S)^{-\frac{3}{4}}\right) \right)  \what N(z),\quad z \in \partial D(\what {r}_k, \varepsilon),
			\end{equation}
			where $\what  N(z)$ is given in \eqref{def:hat-N}.
		\end{enumerate}
	\end{rhp}
	
	We can construct $\what  P^{(k)}(z)$ explicitlyby  using the parabolic cylinder parametrix $\Phi^{(\PC)}$ introduced in Appendix \ref{sec:PC}. To proceed, we define 
	\begin{align}\label{def:hat-fk}
	\what f^{(k)}(z) : = -2(-S)^{-\frac{3}{4}}\left(-\tilde \theta_j(z) + \tilde \theta_j(\what r_k)\right)^{\frac 12}, \quad j \in \what J_k,
	\end{align}
	where $\tilde \theta_j(z)$ is defined in \eqref{def:tilde-theta} and the branch cut of $(\cdot)^{\frac 12}$ is chosen such that $\arg (z - \what r_k) \in (-\pi, \pi)$. It is easily seen that
	\begin{equation}\label{def:hat-fk-rk}
	\what f^{(k)}(z)  = e^{\frac{3 \pi \ii}{4}} (2 \what r_k)^{\frac 12} (z - \what r_k) + \Boh \left((z - \what r_k)^2\right), \quad \textrm{as $ z \to \what r_k$}.
	\end{equation}
	
Let $\Phi^{(\PC)}(z; \nu_j)$ be the parabolic cylinder parametrix with $\nu_j$ defined in \eqref{def:nui}, we now define
	\begin{align}\label{def:hat-Pk}
	\what  P^{(k)}(z) = \what E^{(k)}(z) \what \Phi^{(k)}(z) \what U^{(k)}(z) e^{\left(-\sum_{j \in \bigcup_{l=k}^{m_2} \what J_l} \tilde \theta_j(z)E_{jj} \right)\otimes \sigma_3},
	\end{align}
	where
	{\footnotesize
	\begin{align}\label{def:hatE}
	&\quad\what E^{(k)}(z) := \what N(z) \nonumber\\
	&\times\begin{cases}
	\begin{pmatrix}
	\sum_{j \in \bigcup_{l=0}^{k-1} \what J_l} E_{jj} + \sum_{j \in \bigcup_{l=k+1}^{m_2} \what J_l} E_{jj}& -\sum_{j \in \what J_k}h_{1j} (-S)^{-\frac{3\nu_j}{4}}\what f^{(k)}(z)^{- \nu_j} e^{\tilde \theta_j(\what r_k)}\mu_{\sigma(j)}^{-1} E_{j\sigma(j)} \\
	\sum_{j \in \what J_k} (-S)^{\frac{3\nu_j}{4}}\what f^{(k)}(z)^{\nu_j} e^{-\tilde \theta_j(\what r_k)}E_{jj}  & \sum_{j \in \bigcup_{l=0}^{k-1} \what J_l} E_{jj} + \sum_{j \in \bigcup_{l=k+1}^{m_2} \what J_l} E_{jj}
	\end{pmatrix}, &  \Im z >0,\\
	\begin{pmatrix}
	\sum_{j \in \bigcup_{l=0}^{k-1} \what J_l} E_{jj} + \sum_{j \in \bigcup_{l=k+1}^{m_2} \what J_l} E_{jj}(I_n - C^2)^{-1} & -\sum_{j \in \what J_k}h_{1j}(-S)^{-\frac{3\nu_j}{4}} \what f^{(k)}(z)^{- \nu_j} e^{\tilde \theta_j(\what r_k)}\mu_{\sigma(j)}^{-1} E_{j\sigma(j)}\\
	\sum_{j \in \what J_k} (-S)^{\frac{3\nu_j}{4}}\what f^{(k)}(z)^{\nu_j} e^{-\tilde \theta_j(\what r_k)}E_{jj}  & \sum_{j \in \bigcup_{l=0}^{k-1} \what J_l} E_{jj} + \sum_{j \in \bigcup_{l=k+1}^{m_2} \what J_l} E_{jj}(I_n - C^2)
	\end{pmatrix}, & \Im z <0,
	\end{cases}
	\end{align}}
\begin{align}
\what \Phi^{(k)}(z)&:= \begin{pmatrix}
\sum_{j \in \bigcup_{l=0}^{k-1} \what J_l} E_{jj} & 0\\
0 & \sum_{j \in \bigcup_{l=0}^{k-1} \what J_l} E_{jj} 
\end{pmatrix}+ \begin{pmatrix}
\sum_{j \in \bigcup_{l=k+1}^{m_2} \what J_l} E_{jj} e^{\tilde \theta_j(z)} &0\\
0 &  \sum_{j \in \bigcup_{l=k+1}^{m_2} \what J_l} E_{jj} e^{-\tilde \theta_j(z)} 
\end{pmatrix}\nonumber\\
&\quad+
 \begin{pmatrix}
\sum_{j \in \what J_k} E_{jj} \Phi^{(\PC)}_{11}\left((-S)^{\frac 34} \what f^{(k)}(z);\nu_j\right)  &  \sum_{j \in \what J_k} E_{jj} \Phi^{(\PC)}_{12}\left((-S)^{\frac 34} \what f^{(k)}(z);\nu_j\right)\\
 \sum_{j \in \what J_k} E_{jj} \Phi^{(\PC)}_{21}\left((-S)^{\frac 34} \what f^{(k)}(z);\nu_j\right) & \sum_{j \in \what J_k} E_{jj} \Phi^{(\PC)}_{22}\left((-S)^{\frac 34} \what f^{(k)}(z);\nu_j\right)
\end{pmatrix},
\end{align}
% with $\mu_j=c_{j\sigma(j)}$, 
and
\begin{align}\label{hat-Uk}
\what U^{(k)}(z) &:= \begin{pmatrix}
\sum_{j \in \bigcup_{l=0}^{k-1} \what J_l} E_{jj} & 0\\
0 & \sum_{j \in \bigcup_{l=0}^{k-1} \what J_l} E_{jj} 
\end{pmatrix}+ \begin{pmatrix}
0& \sum_{j \in\what J_k} E_{jj} \\
-  \sum_{j \in\what J_k} E_{jj} h_{1j}^{-1} C & 0
\end{pmatrix}\nonumber\\
&\quad+\begin{cases}
 \begin{pmatrix}
\sum_{j \in \bigcup_{l=k+1}^{m_2}\what J_l} E_{jj} & 0\\
0 & \sum_{j \in \bigcup_{l=k+1}^{m_2} \what J_l} E_{jj}
\end{pmatrix}, \quad & z \in \Delta_{k+1,+} \cup \Delta_{k,+},\\
 \begin{pmatrix}
\sum_{j \in \bigcup_{l=k+1}^{m_2}\what J_l} E_{jj} & 0\\
-\sum_{j \in \bigcup_{l=k+1}^{m_2}\what J_l} E_{jj}  C(I_n-C^2)^{-1} & \sum_{j\in \bigcup_{l=k+1}^{m_2}\what J_l} E_{jj}
\end{pmatrix}, \quad & z \in \what \Omega_{2k-1} \setminus \Delta_{k,+},\\
 \begin{pmatrix}
\sum_{j \in \bigcup_{l=k+1}^{m_2}\what J_l} E_{jj}(I_n-C^2) & 0\\
0 & \sum_{j \in \bigcup_{l=k+1}^{m_2}\what J_l} E_{jj}(I_n-C^2)^{-1}
\end{pmatrix}, \quad & z \in \Delta_{k+1,-} \cup \Delta_{k,-},\\
 \begin{pmatrix}
\sum_{j \in \bigcup_{l=k+1}^{m_2}\what J_l} E_{jj}(I_n-C^2) & -\sum_{j\in \bigcup_{l=k+1}^{m_2}\what J_l} E_{jj}  C\\
0 & \sum_{j\in \bigcup_{l=k+1}^{m_2}\what J_l} E_{jj}(I_n-C^2)^{-1}
\end{pmatrix}, \quad & z \in \what \Omega_{2k} \setminus \Delta_{k,-},
\end{cases}
\end{align}
where 
\begin{equation}\label{def:h1i}
h_{1j} := \frac{2 \pi}{\Gamma(-\nu_j)} e^{\pi \ii \nu_j}
\end{equation}
and $\nu_j$ is defined in \eqref{def:nui}.

For later use, we note that 
\begin{align}\label{hat-Uk-1}
    &\quad\what U^{(k)}(z)^{-1} \nonumber\\
    &= \begin{pmatrix}
\sum_{j \in \bigcup_{l=0}^{k-1} \what J_l} E_{jj} & 0\\
0 & \sum_{j \in \bigcup_{l=0}^{k-1} \what J_l} E_{jj} 
\end{pmatrix}+ \begin{pmatrix}
0&  -  \sum_{j \in\what J_k} E_{j\sigma(j)} h_{1j} \mu_{\sigma(j)}^{-1}\\
\sum_{j \in\what J_k} E_{jj} & 0
\end{pmatrix}\nonumber\\
&\quad+\begin{cases}
 \begin{pmatrix}
\sum_{j \in \bigcup_{l=k+1}^{m_2}\what J_l} E_{jj} & 0\\
0 & \sum_{j \in \bigcup_{l=k+1}^{m_2} \what J_l} E_{jj}
\end{pmatrix}, \quad & z \in \Delta_{k+1,+} \cup \Delta_{k,+},\\
 \begin{pmatrix}
\sum_{j \in \bigcup_{l=k+1}^{m_2}\what J_l} E_{jj} & 0\\
\sum_{j \in \bigcup_{l=k+1}^{m_2}\what J_l} E_{jj}  C(I_n-C^2)^{-1} & \sum_{j\in \bigcup_{l=k+1}^{m_2}\what J_l} E_{jj}
\end{pmatrix}, \quad & z \in \what \Omega_{2k-1} \setminus \Delta_{k,+},\\
 \begin{pmatrix}
\sum_{j \in \bigcup_{l=k+1}^{m_2}\what J_l} E_{jj}(I_n-C^2)^{-1} & 0\\
0 & \sum_{j \in \bigcup_{l=k+1}^{m_2}\what J_l} E_{jj}(I_n-C^2)
\end{pmatrix}, \quad & z \in \Delta_{k+1,-} \cup \Delta_{k,-},\\
 \begin{pmatrix}
\sum_{j \in \bigcup_{l=k+1}^{m_2}\what J_l} E_{jj}(I_n-C^2)^{-1} & \sum_{j\in \bigcup_{l=k+1}^{m_2}\what J_l} E_{jj}  C\\
0 & \sum_{j\in \bigcup_{l=k+1}^{m_2}\what J_l} E_{jj}(I_n-C^2)
\end{pmatrix}, \quad & z \in \what \Omega_{2k} \setminus \Delta_{k,-}.
\end{cases}
\end{align}
\begin{remark}
To explain the case that $\mu_{\sigma(j)}=0$ in \eqref{def:hatE} and \eqref{hat-Uk-1}, we see from \eqref{def:h1i} that
    \begin{align}\label{hij=0}
        h_{1j}\mu_{\sigma(j)}^{-1} = \frac{2 \pi e^{\pi \ii \nu_j}}{\Gamma(-\nu_j) \mu_{\sigma(j)}}.
        \end{align}
        Substituting the definition of $\nu_j$ in \eqref{def:nui} into the above equation, we have  
        %and applying the Taylor expansion, we obtain after simplification
    \begin{align}\label{hij=0-1}
    h_{1j}\mu_{\sigma(j)}^{-1}\rvert_{\mu_{\sigma(j)}=0}=\lim_{\mu_{\sigma(j)}\to 0}\frac{-2 \pi \nu_je^{\pi \ii \nu_j}}{\Gamma(1-\nu_j) \mu_{\sigma(j)}}=\lim_{\mu_{\sigma(j)}\to 0} \frac{-\ii \ln(1-\mu_j\mu_{\sigma(j)})}{\Gamma(1)\mu_{\sigma(j)}}=\ii \mu_j.
    \end{align}
    Therefore, the functions $\what E^{(k)}(z)$ in \eqref{def:hatE} and $\what U^{(k)}(z)^{-1}$ in \eqref{hat-Uk-1} are well-defined for $\mu_{\sigma(j)}=0$.
\end{remark}

\begin{proposition}
The function $\what P^{(k)}(z)$ defined in \eqref{def:hat-Pk} solves RH problem \ref{rhp:hat-Pk}.
\end{proposition}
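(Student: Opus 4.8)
The plan is to mirror, step by step, the strategy already used in the proof of Proposition \ref{pro:P-k} for the Airy parametrix, now with the parabolic cylinder parametrix $\Phi^{(\PC)}$ of Appendix \ref{sec:PC} in place of $\Phi^{(\Ai)}$. Accordingly the argument splits into three parts: (i) analyticity of the prefactor $\what E^{(k)}$ on the full disc $D(\what r_k,\varepsilon)$; (ii) verification of the jump relations \eqref{def:hat-JP-k}; (iii) verification of the matching condition \eqref{eq:matchhatPk}. Throughout, the degenerate subcase $\mu_{\sigma(j)}=0$ is interpreted by continuity via the limit \eqref{hij=0-1}, so that every formula involving $h_{1j}\mu_{\sigma(j)}^{-1}$ is well defined, and this interpretation must be carried through each step.

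First, for the analyticity of $\what E^{(k)}$, note from \eqref{def:hatE} that the only possible jump inside $D(\what r_k,\varepsilon)$ is across $(\what r_{k-1},\what r_k)\cup(\what r_k,\what r_{k+1})$, together with a possible isolated singularity at $z=\what r_k$. On each of the two subintervals one computes $\what E^{(k)}_-(z)^{-1}\what E^{(k)}_+(z)$ using: the jump \eqref{def:jumphatN} of $\what N$; the fact that $\what C_{k-1}^2$ and $\what C_k^2$ are diagonal, so $(I_n-C^2)^{\pm1}$ acts diagonally on the index sets $\bigcup_{l\ge k+1}\what J_l$; and the branch behavior of $\what f^{(k)}(z)^{\pm\nu_j}$ across $\what r_k$. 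This reproduces the telescoping matrix identities at the end of the proof of Proposition \ref{pro:P-k}, now with the Gamma-type constants $h_{1j}=2\pi e^{\pi\ii\nu_j}/\Gamma(-\nu_j)$ bookkept, and yields $\what E^{(k)}_-(z)^{-1}\what E^{(k)}_+(z)=I_{2n}$. Near $z=\what r_k$, by \eqref{def:hat-fk-rk} the factors $\what f^{(k)}(z)^{\pm\nu_j}$ blow up at worst like a power $(z-\what r_k)^{\pm\nu_j}$, while $\what N(z)$ in \eqref{def:hat-N} carries the compensating factors $\big(\tfrac{z+\what r_k}{z-\what r_k}\big)^{\mp\nu_j}$; combining them shows $\what E^{(k)}$ is bounded near $\what r_k$, and Riemann's removable singularity theorem gives analyticity on all of $D(\what r_k,\varepsilon)$.

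Second, the jump relations: since $\what E^{(k)}$ is analytic and both $\what U^{(k)}$ and $e^{(-\sum_{j\in\bigcup_{l\ge k}\what J_l}\tilde\theta_jE_{jj})\otimes\sigma_3}$ are analytic on each sector of $D(\what r_k,\varepsilon)\setminus\what\Gamma_S$, on any arc one has from \eqref{def:hat-Pk} that $J_{\what P^{(k)}}(z)=\big(\what U^{(k)}_-(z)\,e^{\cdots}\big)^{-1}\big(\what\Phi^{(k)}_-(z)^{-1}\what\Phi^{(k)}_+(z)\big)\big(\what U^{(k)}_+(z)\,e^{\cdots}\big)$. Inserting the known jumps of $\Phi^{(\PC)}$ on its six rays/segments (Appendix \ref{sec:PC}), using \eqref{hat-Uk-1} and the factorization \eqref{open}, reproduces each of the six cases of \eqref{def:hat-JP-k}; this is the exact analogue of the chain of identities closing the proof of Proposition \ref{pro:P-k}, with the additional ingredient that the $\Theta$-conjugation produces precisely the $e^{\pm2\tilde\theta_j(z)}$ and $e^{\pm\Theta(z)}$ factors that appear in $J_{\what S}$.

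Third, the matching condition: on $\partial D(\what r_k,\varepsilon)$ one has $\zeta:=(-S)^{3/4}\what f^{(k)}(z)\to\infty$ as $S\to-\infty$, so one substitutes the large-argument asymptotics of $\Phi^{(\PC)}(\zeta;\nu_j)$ from Appendix \ref{sec:PC}, whose leading prefactor is $\operatorname{diag}\!\big(\zeta^{-\nu_j}e^{\zeta^2/4},\,\zeta^{\nu_j}e^{-\zeta^2/4}\big)$ up to $\Boh(\zeta^{-1})$, with the constants $h_{1j}$ sitting in the subleading term. The powers $(-S)^{\pm3\nu_j/4}\what f^{(k)}(z)^{\pm\nu_j}$ built into $\what E^{(k)}$ cancel the $\zeta^{\mp\nu_j}$, the exponentials $e^{\pm\zeta^2/4}$ combine with the shift $\tilde\theta_j(\what r_k)$ and the conjugating $e^{\mp\tilde\theta_j(z)}$ to leave an $\what N$-compatible matrix, and one is left with $\what P^{(k)}(z)\what N(z)^{-1}=I_{2n}+\Boh\big((-S)^{-3/4}\big)$ — the error being $\Boh(\zeta^{-1})=\Boh\big((-S)^{-3/4}\big)$, one power weaker than the $\Boh\big((-S)^{-3/2}\big)$ of the Airy case, which is exactly the bound stated in \eqref{eq:matchhatPk}. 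By \eqref{relation:hatPk-hatP-k} the analogous statement for $\what P^{(-k)}$ then follows.

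\medskip

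\textbf{Main obstacle.} The principal difficulty is the bookkeeping of constants and branches so that the $\Phi^{(\PC)}$ asymptotic prefactor matches $\what N$ exactly on the circle: one must track the half-power branch of $\what f^{(k)}$ fixed in \eqref{def:hat-fk}–\eqref{def:hat-fk-rk}, the shifts $\tilde\theta_j(\what r_k)$, and the Gamma factors $h_{1j}$ of \eqref{def:h1i} simultaneously, and verify they conspire to cancel. A secondary subtlety is that all of this must remain valid in the limit $\mu_{\sigma(j)}\to0$, i.e.\ in the case $c_{kl}c_{lk}=0$, which is where \eqref{hij=0-1} enters; aside from these points, the proof is a mechanical transcription of the Airy-parametrix argument.
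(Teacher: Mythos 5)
Your proposal is correct and follows essentially the same route as the paper: establish analyticity of $\what E^{(k)}$ by computing the jump across $(\what r_{k-1},\what r_{k+1})$ (inserting the jump of $\what N$) and checking the singularity at $\what r_k$ is removable via the branch cancellation against $\what N$, then read off \eqref{def:hat-JP-k} from \eqref{jump:PC} together with the piecewise $\what U^{(k)}$, and finally match via the large-argument asymptotics \eqref{infty:PC} of $\Phi^{(\PC)}$. The only substantive difference is that the paper also records the explicit $(-S)^{-3/4}$ subleading term $\what J_1^{(k)}(z)$ in \eqref{match2}--\eqref{hat-Jk21} (needed later for the final asymptotics), whereas for the proposition as stated your $\Boh\big((-S)^{-3/4}\big)$ bound is exactly what \eqref{eq:matchhatPk} requires (incidentally, $\Phi^{(\PC)}$ has five jump rays, not six).
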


\begin{proof}
We first show the prefactor $\what E^{(k)}(z)$ is analytic in $D(\what r_k, \varepsilon)$. From the definition of $\what E^{(k)}(z)$ in \eqref{def:hatE}, the only possible jump is on $(\what r_k-\varepsilon, \what r_k+\varepsilon)$. It follows from \eqref{def:jumphatN} and \eqref{def:hat-fk}   that, for $x \in (\what r_k-\varepsilon, \what r_k)$, 
\begin{align}
    & \what E^{(k)}_-(x)^{-1}\what E^{(k)}_+(x) \nonumber\\
    &= \begin{pmatrix}
	\sum_{j \in \bigcup_{l=0}^{k-1} \what J_l} E_{jj} + \sum_{j \in \bigcup_{l=k+1}^{m_2} \what J_l} E_{jj}(I_n - C^2) & \sum_{j \in \what J_k} (-S)^{-\frac{3\nu_j}{4}}\what f^{(k)}_-(x)^{-\nu_j} e^{\tilde \theta_j(\what r_k)}E_{jj}\\
	-\sum_{j \in \what J_k}h_{1j}^{-1} (-S)^{\frac{3\nu_j}{4}}\what f^{(k)}_-(x)^{ \nu_j} e^{-\tilde \theta_j(\what r_k)}E_{jj} C  & \sum_{j \in \bigcup_{l=0}^{k-1} \what J_l} E_{jj} + \sum_{j \in \bigcup_{l=k+1}^{m_2} \what J_l} E_{jj}(I_n - C^2)^{-1}
	\end{pmatrix}\nonumber\\
    &\quad \times \what N_-(x)^{-1} \what N_+(x)\nonumber\\
    &\quad \times\begin{pmatrix}
	\sum_{j \in \bigcup_{l=0}^{k-1} \what J_l} E_{jj} + \sum_{j \in \bigcup_{l=k+1}^{m_2} \what J_l} E_{jj}& -\sum_{j \in \what J_k}h_{1j} (-S)^{-\frac{3\nu_j}{4}}\what f^{(k)}_+(x)^{- \nu_j} e^{\tilde \theta_j(\what r_k)}\mu_{\sigma(j)}^{-1} E_{j\sigma(j)}\\
	\sum_{j \in \what J_k} (-S)^{\frac{3\nu_j}{4}}\what f^{(k)}_+(x)^{\nu_j} e^{-\tilde \theta_j(\what r_k)}E_{jj}  & \sum_{j \in \bigcup_{l=0}^{k-1} \what J_l} E_{jj} + \sum_{j \in \bigcup_{l=k+1}^{m_2} \what J_l} E_{jj}
	\end{pmatrix}\nonumber\\
    &= \begin{pmatrix}
	\sum_{j \in \bigcup_{l=0}^{k-1} \what J_l} E_{jj} + \sum_{j \in \bigcup_{l=k+1}^{m_2} \what J_l} E_{jj}(I_n - C^2) & \sum_{j \in \what J_k} (-S)^{-\frac{3\nu_j}{4}}\what f^{(k)}_-(x)^{-\nu_j} e^{\tilde \theta_j(\what r_k)}E_{jj}\\
	-\sum_{j \in \what J_k}h_{1j}^{-1} (-S)^{\frac{3\nu_j}{4}}\what f^{(k)}_-(x)^{ \nu_j} e^{-\tilde \theta_j(\what r_k)}E_{jj} C  & \sum_{j \in \bigcup_{l=0}^{k-1} \what J_l} E_{jj} + \sum_{j \in \bigcup_{l=k+1}^{m_2} \what J_l} E_{jj}(I_n - C^2)^{-1}
	\end{pmatrix}\nonumber\\
    &\quad \times \begin{pmatrix}
					(I_n - \what C_{k-1}^2)^{-1} & 0\\
					0 & I_n - \what C_{k-1}^2
				\end{pmatrix}\nonumber\\
              &\quad \times  \begin{pmatrix}
	\sum_{j \in \bigcup_{l=0}^{k-1} \what J_l} E_{jj} + \sum_{j \in \bigcup_{l=k+1}^{m_2} \what J_l} E_{jj}& -\sum_{j \in \what J_k}h_{1j}(-S)^{-\frac{3\nu_j}{4}} \what f^{(k)}_+(x)^{- \nu_j} e^{\tilde \theta_j(\what r_k)}\mu_{\sigma(j)}^{-1} E_{j\sigma(j)}\\
	\sum_{j \in \what J_k} (-S)^{\frac{3\nu_j}{4}}\what f^{(k)}_+(x)^{\nu_j} e^{-\tilde \theta_j(\what r_k)}E_{jj}  & \sum_{j \in \bigcup_{l=0}^{k-1} \what J_l} E_{jj} + \sum_{j \in \bigcup_{l=k+1}^{m_2} \what J_l} E_{jj}
	\end{pmatrix}\nonumber\\
    &= \begin{pmatrix}
	\sum_{j \in \bigcup_{l=0}^{k-1} \what J_l} E_{jj} + \sum_{j \in \bigcup_{l=k+1}^{m_2} \what J_l} E_{jj}(I_n - C^2) & \sum_{j \in \what J_k} (-S)^{-\frac{3\nu_j}{4}}\what f^{(k)}_-(x)^{-\nu_j} e^{\tilde \theta_j(\what r_k)}E_{jj}\\
	-\sum_{j \in \what J_k}h_{1j}^{-1}(-S)^{\frac{3\nu_j}{4}} \what f^{(k)}_-(x)^{ \nu_j} e^{-\tilde \theta_j(\what r_k)}E_{jj} C  & \sum_{j \in \bigcup_{l=0}^{k-1} \what J_l} E_{jj} + \sum_{j \in \bigcup_{l=k+1}^{m_2} \what J_l} E_{jj}(I_n - C^2)^{-1}
	\end{pmatrix}\nonumber\\
    &\quad \times \begin{pmatrix}
					\sum_{j\in \bigcup_{l=k}^{m_2}\what J_l}E_{jj}(I_n-C^2)^{-1} + \sum_{j \in \bigcup_{l=0}^{k-1}\what J_l}E_{jj} & 0\\
					0 & \sum_{j\in \bigcup_{l=k}^{m_2}\what J_l}E_{jj}(I_n-C^2) + \sum_{j \in \bigcup_{l=0}^{k-1}\what J_l}E_{jj}
				\end{pmatrix}\nonumber\\
              &\quad \times  \begin{pmatrix}
	\sum_{j \in \bigcup_{l=0}^{k-1} \what J_l} E_{jj} + \sum_{j \in \bigcup_{l=k+1}^{m_2} \what J_l} E_{jj}& -\sum_{j \in \what J_k}h_{1j} (-S)^{-\frac{3\nu_j}{4}}\what f^{(k)}_+(x)^{- \nu_j} e^{\tilde \theta_j(\what r_k)}\mu_{\sigma(j)}^{-1} E_{j\sigma(j)}\\
	\sum_{j \in \what J_k} (-S)^{\frac{3\nu_j}{4}}\what f^{(k)}_+(x)^{\nu_j} e^{-\tilde \theta_j(\what r_k)}E_{jj}  & \sum_{j \in \bigcup_{l=0}^{k-1} \what J_l} E_{jj} + \sum_{j \in \bigcup_{l=k+1}^{m_2} \what J_l} E_{jj}
	\end{pmatrix}\nonumber\\
    &=\begin{pmatrix}
        \sum_{j \in \bigcup_{l=0}^{k-1} \what J_l} E_{jj} & 0\\
        0& \sum_{j \in \bigcup_{l=0}^{k-1} \what J_l} E_{jj}
    \end{pmatrix}+\begin{pmatrix}
        \sum_{j \in \bigcup_{l=k+1}^{m_2} \what J_l} E_{jj} & 0\\
        0& \sum_{j \in \bigcup_{l=k+1}^{m_2} \what J_l} E_{jj}
    \end{pmatrix}\nonumber\\
    &\quad+\begin{pmatrix}
        \sum_{j\in\what J_k}\left(\frac{\what f^{(k)}_+(x)}{\what f^{(k)}_-(x)}\right)^{\nu_j}(1-\mu_j\mu_{\sigma(j)})E_{jj} & 0\\
        0& \sum_{j\in\what J_k}\left(\frac{\what f^{(k)}_+(x)}{\what f^{(k)}_-(x)}\right)^{-\nu_j}(1-\mu_j\mu_{\sigma(j)})^{-1}E_{jj}
    \end{pmatrix}.
\end{align}
Applying the fact that $\what f^{(k)}_+(x)^{\nu_j} = e^{2 \nu_j \pi \ii}\what f^{(k)}_-(x)^{\nu_j}$ and the definition of $\nu_j$ in \eqref{def:nui}, we conclude that
\begin{align}
    \what E^{(k)}_-(x)^{-1}\what E^{(k)}_+(x)=I_{2n},\qquad x \in (\what r_k-\varepsilon, \what r_k). 
\end{align}
Similarly, for  $x \in (\what r_k, \what r_k+\varepsilon)$, one has 
\begin{align}
    & \what E^{(k)}_-(x)^{-1}\what E^{(k)}_+(x) \nonumber\\
    &= \begin{pmatrix}
	\sum_{j \in \bigcup_{l=0}^{k-1} \what J_l} E_{jj} + \sum_{j \in \bigcup_{l=k+1}^{m_2} \what J_l} E_{jj}(I_n - C^2) & \sum_{j \in \what J_k} (-S)^{-\frac{3\nu_j}{4}}\what f^{(k)}(x)^{-\nu_j} e^{\tilde \theta_j(\what r_k)}E_{jj}\\
	-\sum_{j \in \what J_k}h_{1j}^{-1} (-S)^{\frac{3\nu_j}{4}}\what f^{(k)}(x)^{ \nu_j} e^{-\tilde \theta_j(\what r_k)}E_{jj} C  & \sum_{j \in \bigcup_{l=0}^{k-1} \what J_l} E_{jj} + \sum_{j \in \bigcup_{l=k+1}^{m_2} \what J_l} E_{jj}(I_n - C^2)^{-1}
	\end{pmatrix}\nonumber\\
    &\quad \times \what N_-(x)^{-1} \what N_+(x)\nonumber\\
    &\quad \times\begin{pmatrix}
	\sum_{j \in \bigcup_{l=0}^{k-1} \what J_l} E_{jj} + \sum_{j \in \bigcup_{l=k+1}^{m_2} \what J_l} E_{jj}& -\sum_{j \in \what J_k}h_{1j} (-S)^{-\frac{3\nu_j}{4}}\what f^{(k)}(x)^{- \nu_j} e^{\tilde \theta_j(\what r_k)}\mu_{\sigma(j)}^{-1} E_{j\sigma(j)}\\
	\sum_{j \in \what J_k} (-S)^{\frac{3\nu_j}{4}}\what f^{(k)}(x)^{\nu_j} e^{-\tilde \theta_j(\what r_k)}E_{jj}  & \sum_{j \in \bigcup_{l=0}^{k-1} \what J_l} E_{jj} + \sum_{j \in \bigcup_{l=k+1}^{m_2} \what J_l} E_{jj}
	\end{pmatrix}\nonumber\\
    &= \begin{pmatrix}
	\sum_{j \in \bigcup_{l=0}^{k-1} \what J_l} E_{jj} + \sum_{j \in \bigcup_{l=k+1}^{m_2} \what J_l} E_{jj}(I_n - C^2) & \sum_{j \in \what J_k} (-S)^{-\frac{3\nu_j}{4}}\what f^{(k)}(x)^{-\nu_j} e^{\tilde \theta_j(\what r_k)}E_{jj}\\
	-\sum_{j \in \what J_k}h_{1j}^{-1} (-S)^{\frac{3\nu_j}{4}}\what f^{(k)}(x)^{ \nu_j} e^{-\tilde \theta_j(\what r_k)}E_{jj} C  & \sum_{j \in \bigcup_{l=0}^{k-1} \what J_l} E_{jj} + \sum_{j \in \bigcup_{l=k+1}^{m_2} \what J_l} E_{jj}(I_n - C^2)^{-1}
	\end{pmatrix}\nonumber\\
    &\quad \times \begin{pmatrix}
					(I_n - \what C_{k}^2)^{-1} & 0\\
					0 & I_n - \what C_{k}^2
				\end{pmatrix}\nonumber\\
                &\quad \times \begin{pmatrix}
	\sum_{j \in \bigcup_{l=0}^{k-1} \what J_l} E_{jj} + \sum_{j \in \bigcup_{l=k+1}^{m_2} \what J_l} E_{jj}& -\sum_{j \in \what J_k}h_{1j} (-S)^{-\frac{3\nu_j}{4}}\what f^{(k)}(x)^{- \nu_j} e^{\tilde \theta_j(\what r_k)}\mu_{\sigma(j)}^{-1} E_{j\sigma(j)}\\
	\sum_{j \in \what J_k} (-S)^{\frac{3\nu_j}{4}}\what f^{(k)}(x)^{\nu_j} e^{-\tilde \theta_j(\what r_k)}E_{jj}  & \sum_{j \in \bigcup_{l=0}^{k-1} \what J_l} E_{jj} + \sum_{j \in \bigcup_{l=k+1}^{m_2} \what J_l} E_{jj}
	\end{pmatrix}\nonumber\\
     &= \begin{pmatrix}
	\sum_{j \in \bigcup_{l=0}^{k-1} \what J_l} E_{jj} + \sum_{j \in \bigcup_{l=k+1}^{m_2} \what J_l} E_{jj}(I_n - C^2) & \sum_{j \in \what J_k} (-S)^{-\frac{3\nu_j}{4}}\what f^{(k)}(x)^{-\nu_j} e^{\tilde \theta_j(\what r_k)}E_{jj}\\
	-\sum_{j \in \what J_k}h_{1j}^{-1}(-S)^{\frac{3\nu_j}{4}} \what f^{(k)}(x)^{ \nu_j} e^{-\tilde \theta_j(\what r_k)}E_{jj} C  & \sum_{j \in \bigcup_{l=0}^{k-1} \what J_l} E_{jj} + \sum_{j \in \bigcup_{l=k+1}^{m_2} \what J_l} E_{jj}(I_n - C^2)^{-1}
	\end{pmatrix}\nonumber\\
    &\quad \times \begin{pmatrix}
					\sum_{j\in \bigcup_{l=k+1}^{m_2}\what J_l}E_{jj}(I_n-C^2)^{-1} + \sum_{j \in \bigcup_{l=0}^{k}\what J_l}E_{jj} & 0\\
					0 & \sum_{j\in \bigcup_{l=k+1}^{m_2}\what J_l}E_{jj}(I_n-C^2) + \sum_{j \in \bigcup_{l=0}^{k}\what J_l}E_{jj}
				\end{pmatrix}\nonumber\\
              &\quad \times  \begin{pmatrix}
	\sum_{j \in \bigcup_{l=0}^{k-1} \what J_l} E_{jj} + \sum_{j \in \bigcup_{l=k+1}^{m_2} \what J_l} E_{jj}& -\sum_{j \in \what J_k}h_{1j} (-S)^{-\frac{3\nu_j}{4}}\what f^{(k)}_+(x)^{- \nu_j} e^{\tilde \theta_j(\what r_k)}\mu_{\sigma(j)}^{-1} E_{j\sigma(j)}\\
	\sum_{j \in \what J_k} (-S)^{\frac{3\nu_j}{4}}\what f^{(k)}_+(x)^{\nu_j} e^{-\tilde \theta_j(\what r_k)}E_{jj}  & \sum_{j \in \bigcup_{l=0}^{k-1} \what J_l} E_{jj} + \sum_{j \in \bigcup_{l=k+1}^{m_2} \what J_l} E_{jj}
	\end{pmatrix}\nonumber\\
    &=\begin{pmatrix}
        \sum_{j \in \bigcup_{l=0}^{k-1} \what J_l} E_{jj} & 0\\
        0& \sum_{j \in \bigcup_{l=0}^{k-1} \what J_l} E_{jj}
    \end{pmatrix}+\begin{pmatrix}
        \sum_{j \in \bigcup_{l=k+1}^{m_2} \what J_l} E_{jj} & 0\\
        0& \sum_{j \in \bigcup_{l=k+1}^{m_2} \what J_l} E_{jj}
    \end{pmatrix}\nonumber\\
    &\quad+\begin{pmatrix}
        \sum_{j\in\what J_k}\left(\frac{\what f^{(k)}(x)}{\what f^{(k)}(x)}\right)^{\nu_j}E_{jj} & 0\\
        0& \sum_{j\in\what J_k}\left(\frac{\what f^{(k)}(x)}{\what f^{(k)}(x)}\right)^{-\nu_j}E_{jj}
    \end{pmatrix}=I_{2n}.
\end{align}
Moreover, from \eqref{def:hat-N} and \eqref{def:hat-fk-rk}, we observe 
\begin{align}
     \what E^{(k)}(\what r_k) &=\begin{pmatrix}
         \sum_{i\in\I} E_{ii}&0\\
         0&\sum_{i\in\I} E_{ii}
     \end{pmatrix}+\begin{pmatrix}
	\sum_{j \in \bigcup_{l=1}^{k-1} \what J_l} \left(\frac{\what r_k + \what r_l}{\what r_k - \what r_l}\right)^{-\nu_{j}}E_{jj} & 0\\
	0  & \sum_{j \in \bigcup_{l=1}^{k-1} \what J_l} \left(\frac{\what r_k + \what r_l}{\what r_k - \what r_l}\right)^{\nu_{j}}E_{jj} 
	\end{pmatrix}\nonumber\\
    &\quad+\begin{pmatrix}
	 \sum_{j \in \bigcup_{l=k+1}^{m_2} \what J_l} e^{\pi \ii \nu_j}\left(\frac{\what r_k + \what r_l}{\what r_l - \what r_k}\right)^{-\nu_{j}}E_{jj}&0\\
	0  &  \sum_{j \in \bigcup_{l=k+1}^{m_2} \what J_l} e^{-\pi \ii \nu_j}\left(\frac{\what r_k + \what r_l}{\what r_l - \what r_k}\right)^{\nu_{j}}E_{jj}
	\end{pmatrix}\nonumber\\
    &\quad+\begin{pmatrix}
	0& -\sum_{j \in \what J_k}h_{1j} (2\what r_k)^{-\frac{3 \nu_j}{2}} e^{\tilde \theta_j(\what r_k)-\frac{3 \pi \ii \nu_j}{4}}\mu_{\sigma(j)}^{-1} E_{j\sigma(j)}\\
	\sum_{j \in \what J_k} (2 \what r_k)^{\frac{3 \nu_j}{2}} e^{-\tilde \theta_j(\what r_k)+\frac{3 \pi \ii \nu_j}{4}}E_{jj}  & 0
	\end{pmatrix}.
\end{align}
% {\tiny
% \begin{align}
%     \what E^{(k)}(\what r_k) = \begin{pmatrix}
% 	\sum_{j \in \bigcup_{l=1}^{k-1} \what J_l} \left(\frac{\what r_k + \what r_l}{\what r_k - \what r_l}\right)^{-\nu_{j}}E_{ii} + \sum_{j \in \bigcup_{l=k+1}^{m_2} \what J_l} e^{\pi \ii \nu_j}\left(\frac{\what r_k + \what r_l}{\what r_l - \what r_k}\right)^{-\nu_{j}}E_{ii}& -\sum_{j \in \what J_k}h_{1j} (2\what r_k)^{-\frac{3 \nu_j}{2}} e^{\tilde \theta_j(\what r_k)-\frac{3 \pi \ii \nu_j}{4}}E_{ii} C^{-1}\\
% 	\sum_{j \in \what J_k} (2 \what r_k)^{\frac{3 \nu_j}{2}} e^{-\tilde \theta_j(\what r_k)+\frac{3 \pi \ii \nu_j}{4}}E_{ii}  & \sum_{j \in \bigcup_{l=1}^{k-1} \what J_l} \left(\frac{\what r_k + \what r_l}{\what r_k - \what r_l}\right)^{\nu_{j}}E_{ii} + \sum_{j \in \bigcup_{l=k+1}^{m_2} \what J_l} e^{-\pi \ii \nu_j}\left(\frac{\what r_k + \what r_l}{\what r_l - \what r_k}\right)^{\nu_{j}}E_{ii}
% 	\end{pmatrix}
% \end{align}}
Thus, $\what E^{(k)}(z)$ is indeed analytic near $\what r_k$.
It's then straightforward to verify the jump condition \eqref{def:hat-JP-k} by using the analyticity of $\what E^{(k)}(z)$, \eqref{jump:PC}, \eqref{hat-Uk} and \eqref{hat-Uk-1}.

For the matching condition \eqref{eq:matchhatPk}, applying \eqref{def:hat-Pk} along with the asymptotic behavior of the parabolic cylinder parametrix $\Phi^{(\PC)}(z)$ at infinity in \eqref{infty:PC} gives, for $\Im z > 0$,
\begin{align}
    &\what P^{(k)}(z)\what N(z)^{-1} \nonumber\\
    &= \what N(z)\begin{pmatrix}
        I_n-\sum_{j\in\what J_k}E_{jj} & -\sum_{j\in\what J_k}h_{1j}\mu_{\sigma(j)}^{-1}E_{j\sigma(j)}\\
        \sum_{j\in\what J_k}E_{jj} & I_n-\sum_{j\in\what J_k}E_{jj}
    \end{pmatrix} \Bigg[I_{2n}\nonumber\\
   &\quad +\frac{1}{(-S)^{\frac{3}{4}}f(z)}\begin{pmatrix}
        0 & \sum_{j \in \what J_k} \nu_j(-S)^{\frac{3\nu_j}{2}} \what f^{(k)}(z)^{2 \nu_j} e^{-2 \tilde \theta_j(\what r_k)}E_{jj}\\
        \sum_{j \in \what J_k}(-S)^{-\frac{3\nu_j}{2}} \what f^{(k)}(z)^{-2 \nu_j} e^{2 \tilde \theta_j(\what r_k)}E_{jj}&0
    \end{pmatrix}\nonumber\\
   &\quad +\Boh\left((-S)^{-\frac 32}\right)\Bigg]\begin{pmatrix}
        I_n-\sum_{j\in\what J_k}E_{jj} & \sum_{j\in\what J_k}E_{jj}\\
        -\sum_{j\in\what J_k}h_{1j}^{-1}\mu_{j}E_{\sigma(j)j} & I_n-\sum_{j\in\what J_k}E_{jj}\end{pmatrix}\what N(z)^{-1}, \quad S\to-\infty. 
\end{align}
By substituting explicit formula of $\what N$ given in \eqref{def:hat-N}, we obtain after simplification that
\begin{align}\label{match2}
\what P^{(k)}(z) \what N(z)^{-1}= I_{2n} + \frac{\what J_1^{(k)}(z)}{(-S)^{\frac 34}} + \Boh\left((-S)^{-\frac 32}\right), \qquad S \to -\infty,
\end{align}
where %uniformly for $z \in \partial D(\what r_k, \varepsilon)$, where
\begin{align}\label{hat-Jk}
    \what J_1^{(k)}(z) := \begin{pmatrix}
        0 & \left[\what J_1^{(k)}(z)\right]_{12}\\
        \left[\what J_1^{(k)}(z)\right]_{21} & 0
    \end{pmatrix}
\end{align}
with 
\begin{align}
    \left[\what J_1^{(k)}(z)\right]_{12} & = -\sum_{j \in \what J_k} h_{1j}(-S)^{-\frac{3\nu_j}{2}}  \what f^{(k)}(z)^{-2 \nu_j-1} e^{2 \tilde \theta_j(\what r_k)} \left(\frac{z-\what r_k}{z + \what r_k}\right)^{2 \nu_j}\mu_{\sigma(j)}^{-1} E_{j \sigma(j)},\label{hat-Jk12}\\
    \left[\what J_1^{(k)}(z)\right]_{21}  & = -\sum_{j \in \what J_k} \frac{\nu_j}{h_{1j}}(-S)^{\frac{3\nu_j}{2}} \what f^{(k)}(z)^{2 \nu_j-1} e^{-2 \tilde \theta_j(\what r_k)}  \left(\frac{z-\what r_k}{z + \what r_k}\right)^{-2 \nu_j}E_{jj}C.\label{hat-Jk21}
\end{align}
Similarly, for  $\Im z<0$, by applying the same methodology, we can derive the identical formula as presented in \eqref{match2}. This result suggests that the matching condition \eqref{eq:matchhatPk} holds uniformly for $z \in \partial D(\what r_k, \varepsilon)$.
% {\tiny
% \begin{align}\label{hat-Jk}
% \what J_1^{(k)}(z)=-\frac{1}{\what f^{(k)}(z)} \begin{pmatrix}
% 0 & \sum_{j \in \what J_k} h_{1j}(-S)^{-\frac{3\nu_j}{2}}  \what f^{(k)}(z)^{-2 \nu_j} e^{2 \tilde \theta_j(\what r_k)} \left(\frac{z-\what r_k}{z + \what r_k}\right)^{2 \nu_j}E_{ii}C^{-1}\\
% \sum_{j \in \what J_k} \frac{\nu_j}{h_{1j}}(-S)^{\frac{3\nu_j}{2}} \what f^{(k)}(z)^{2 \nu_j} e^{-2 \tilde \theta_j(\what r_k)}  \left(\frac{z-\what r_k}{z + \what r_k}\right)^{-2 \nu_j}E_{ii}C & 0
% \end{pmatrix}.
% \end{align}}
\end{proof}

 %    \subsection{Local parametrix near $-\what r_k$}
 %    Similarly, we can use the following symmetric relationship to construct the local parametrix near $-\what r_k$:
	% \begin{equation}\label{relation:hatPk-hatP-k}
	% 	\what P^{(k)}(z) = \what{\sigma}_1 \what P^{(-k)}(-z)\what{\sigma}_1,
	% \end{equation}
	% where $\what{\sigma}_1$ is defined in \eqref{def:sigmak}.
    
    \subsection{Final transformation}
    
    The final transformation is defined by 
    \begin{align}\label{def:hatR}
    \what R(z) = \begin{cases}
    \what S(z) \what P^{( k)}(z)^{-1}, \quad & z \in D(\what r_k, \varepsilon),\\
    \what S(z) \what P^{(-k)}(z)^{-1}, \quad & z \in D(-\what r_k, \varepsilon),\\
    \what S(z) \what N(z)^{-1}, \quad & \textrm{elsewhere.}
    \end{cases}
    \end{align}
From the RH problems for $\what S$, $\what P^{(\pm k)}$ and $\what N$, it can be verified  that $\pm \what{r}_k$ are actually removable singularities of $\what R$. It is then easily seen that $\what R$ satisfies the following RH problem.
    \begin{rhp}\label{rhp:hatR}
    \hfill
		\begin{itemize}
			\item [\rm{(a)}] $\what R(z)$ is defined and analytic in $\mathbb{C} \setminus \what \Gamma_{R}$, where
			\begin{equation}
				\what \Gamma_{R}:=\what \Gamma_S \cup \bigcup_{k=1}^{m_2}\left( \partial D(\what{r}_k,\varepsilon) \cup \partial D(-\what {r}_k,\varepsilon)\right) \setminus \{ \bigcup_{k=1}^{m_2} \left( D(\what{r}_k,\varepsilon) \cup D(-\what{r}_k,\varepsilon)\right) \cup  (-\what r_{m_2}, \what r_{m_2})\}.
			\end{equation}
			%and the orientation of $\partial D(\pm \what{r}_k,\varepsilon)$ are counterclockwise.
			\item [\rm{(b)}] For $z \in \what\Gamma_{R}$, we have
			\begin{equation}\label{eq:hatRjump}
				\what R_+(z) = \what R_-(z) J_{\what R} (z),
			\end{equation}
			where
			\begin{equation}\label{def:hatJR}
				J_{\what R}(z) := \begin{cases}
					\what P^{(\pm k)}(z) \what N(z)^{-1}, & \quad z \in \partial D(\pm\what{r}_k, \varepsilon), \quad k=1,\ldots,m_2, \\
					% \what P^{(-k)}(z) \what N(z)^{-1}, & \quad z \in \partial D(-\what{r}_k, \varepsilon),\\
					\what N_{-}(z) J_{\what S}(z) \what N_{+}(z)^{-1}, & \quad z \in \what\Gamma_{R} \setminus \{\bigcup_{k=1}^{m_2} \left( D(\what{r}_k,\varepsilon) \cup D(-\what{r}_k,\varepsilon)\right)\},
				\end{cases}
			\end{equation}
			with $J_{\what S}(z)$ defined in \eqref{def:what-JS}.
			\item [\rm{(c)}] As $z \to \infty$, we have
			\begin{equation}\label{eq:asyhatR}
				\what R(z) = I_{2n} + \frac{\what R_1}{z} + \Boh (z^{-2}),
			\end{equation}
			where $\what R_1$ is independent of $z$.
		\end{itemize}
    \end{rhp}
    From \eqref{def:what-JS}, \eqref{relation:hatPk-hatP-k} and \eqref{eq:matchhatPk}, we conclude that, as $S \to -\infty$,
    \begin{align}\label{eq:hatJR}
    J_{\what R}(z) =\begin{cases}
    I_{2n} + \Boh \left(e^{-\what c(-S)^{\frac 32}}\right), \quad & z \in \what \Gamma_R \setminus  \{\bigcup_{k=1}^{m_2} \left(\partial D(\what{r}_k, \varepsilon) \cup \partial D(-\what{r}_k, \varepsilon)\right)\},\\
    I_{2n} + \frac{\what J_1^{(k)}(z)}{(-S)^{\frac 34}}+\Boh\left((-S)^{-\frac 32}\right), \quad & z \in \partial D(\what{r}_k, \varepsilon), \\
    I_{2n} + \frac{\what \sigma_1 \what J_1^{(k)}(-z)\what \sigma_1}{(-S)^{\frac 34}}+\Boh\left((-S)^{-\frac 32}\right), \quad & z \in \partial D(-\what{r}_k, \varepsilon),
    \end{cases}
    \end{align}
    for some $\what c > 0$, where $\what \sigma_1$ and $\what J_1^{(k)}(z)$  are defined in \eqref{def:hatsigma1} and  \eqref{hat-Jk}. By a standard argument \cite{Deift1999, Deift1993}, we conclude that
    \begin{align}\label{eq:hatR}
    \what R(z) = I_{2n} + \frac{\what R^{(1)}(z)}{(-S)^{\frac 34}}+\Boh\left((-S)^{-\frac 32}\right), \qquad S \to -\infty,
    \end{align}
    uniformly for $z \in \mathbb{C} \setminus \what \Gamma_R$.
    
    Moreover, inserting \eqref{eq:hatJR} and \eqref{eq:hatR} into the jump condition \eqref{eq:hatRjump} for $\what R$ yields
    \begin{align}
    \what R_+^{(1)}(z) =  \what R_-^{(1)}(z) + 
    \begin{cases}
        J_1^{(k)}(z),  \quad & z \in \partial D(\what{r}_k, \varepsilon),
        \\
        \what \sigma_1 \what J_1^{(k)}(-z)\what \sigma_1, \quad & z \in \partial D(-\what{r}_k, \varepsilon),
    \end{cases}
        \end{align}
    for $k=1,\ldots,m_2$. This, together with the fact that $\what R^{(1)}(z) = \Boh(z^{-1})$ as $z \to \infty$, implies that
    \begin{align}
     \what R^{(1)}(z) =\sum_{k=1}^{m_2}\left( \frac{1}{2 \pi \ii} \int_{\partial D(\what{r}_k, \varepsilon)} \frac{\what J_1^{(k)}(\zeta)}{\zeta-z}d\zeta + \frac{1}{2 \pi \ii} \int_{\partial D(-\what{r}_k, \varepsilon)} \frac{\what \sigma_1 \what J_1^{(k)}(-\zeta)\what \sigma_1}{\zeta-z}d\zeta\right).
    \end{align}
    Based on the definition of $\what J_1^{(k)}$ in \eqref{hat-Jk}, and applying the residue theorem to the equation above, we obtain 
    \begin{align}\label{eq:hatR1}
    \what R^{(1)}(z) = \sum_{k=1}^{m_2}\left(\frac{A^{(k)}}{z-\what r_k} - \frac{\what \sigma_1 A^{(k)} \what \sigma_1}{z + \what r_k}\right), \quad z \in \mathbb{C}\setminus \{\partial D(\what{r}_k, \varepsilon) \cup \partial D(-\what{r}_k, \varepsilon)\},
    \end{align}
    where 
    \begin{align}\label{Ak}
        A^{(k)}:= \Res_{z = \what r_k} \what J_1^{(k)}(z) = \begin{pmatrix}
            0 & \left[A^{(k)}\right]_{12}\\
            \left[A^{(k)}\right]_{21} & 0
        \end{pmatrix}
    \end{align}
    with 
    \begin{align}
        \left[A^{(k)}\right]_{12} &:= -e^{-\frac{3 \pi \ii}{4}} (2 \what r_k)^{-\frac 12}\sum_{j \in \what J_k} h_{1j} (-S)^{-\frac{3\nu_j}{2}} e^{-\frac{3 \pi \ii \nu_j}{2}} (2 \what r_k)^{-3\nu_j} e^{2 \tilde \theta_j(\what r_k)} \mu_{\sigma(j)}^{-1} E_{j \sigma(j)},\\
         \left[A^{(k)}\right]_{21} &:= -e^{-\frac{3 \pi \ii}{4}} (2 \what r_k)^{-\frac 12}\sum_{j \in \what J_k} \frac{\nu_j}{h_{1j}}  (-S)^{\frac{3\nu_j}{2}} e^{\frac{3 \pi \ii \nu_j}{2}} (2 \what r_k)^{3\nu_j}  e^{-2 \tilde \theta_j(\what r_k)}  E_{jj}C.
    \end{align}
%     {\tiny
%         \begin{align}\label{Ak}
%     A^{(k)}&= \Res_{z = \what r_k} \what J_1^{(k)}(z) \nonumber\\
%     &= -e^{-\frac{3 \pi \ii}{4}} (2 \what r_k)^{-\frac 12}\begin{pmatrix}
% 0 & \sum_{j \in \what J_k} h_{1j} (-S)^{-\frac{3\nu_j}{2}} e^{-\frac{3 \pi \ii \nu_j}{2}} (2 \what r_k)^{-3\nu_j} e^{2 \tilde \theta_j(\what r_k)} \mu_{\sigma(j)}^{-1} E_{j \sigma(j)}\\
% \sum_{j \in \what J_k} \frac{\nu_j}{h_{1j}}  (-S)^{\frac{3\nu_j}{2}} e^{\frac{3 \pi \ii \nu_j}{2}} (2 \what r_k)^{3\nu_j}  e^{-2 \tilde \theta_j(\what r_k)}  E_{ii}C & 0
% \end{pmatrix}.
%     \end{align}}

\section{Proof of Theorem \ref{-infty asympototics results}}\label{proof}
As outcomes of the analysis performed in the previous two sections, we are ready to prove our main result. Recall the relationship between $\beta_1$ and RH problem \ref{rhp:pole-free} given in \eqref{def:beta1}, together with the definitions  of $\Psi_{\mathcal I,1}$ and $\Psi_{\mathcal J,1}$ in \eqref{I1} and \eqref{J1}, we have
     \begin{align}\label{beta1}
         \beta_1 = -\ii \lim_{\lambda\to \infty} \lambda [\Xi]_{12}(\lambda;\vec {s}) = -\ii [\Xi_1]_{12} =-\ii [\Psi_{\mathcal{I},1}]_{12}-\ii [\Psi_{\mathcal{J},1}]_{12}.
     \end{align}
  We next derive asymptotics of $\Psi_{\mathcal I,1}$ and $\Psi_{\mathcal J,1}$ as $S \to -\infty$, respectively. 
    % As we mentioned before, the specific choice of $C=\Lambda P$ facilitates the proof of Theorem \ref{-infty asympototics results} in the cases of $C^2=I_n$ and $\det (I_n -C^2) \neq 0$ independently.
	\paragraph{Asymptotics of $\Psi_{\mathcal I,1}$}
    %\label{sec:5.1}
	 Following the sequence of transformations $\Psi_{\I} \to T \to R$ described in \eqref{def:PsitoT} and \eqref{def:R}, we obtain
\begin{align}
	[\Psi_{\mathcal{I},1}]_{12} = \sqrt{-S} \left[ R_1+ N_1\right]_{12}.
	\end{align}
        From the explicit formula of $N(z)$ in \eqref{def:N} and  the definition of $\gamma_i(z)$ in \eqref{def:gamma}, one has
    \begin{align}
        \left[N_1\right]_{12}=\sum_{i \in \I} \ii \frac{\sqrt{r_i}}{2}E_{ii} C
        %\ii \operatorname{diag}\left(\frac{\sqrt{r_1}}{2},\cdots,\frac{\sqrt{r_n}}{2}\right)C
    \end{align}
    with $r_i(z)$ defined in \eqref{def:ri}.
    By comparing the asymptotic behaviors of $R(z)$ in \eqref{eq:asyR} and \eqref{def:RR}, we have
	\begin{align}
	\sqrt{-S}R_1=\Boh\left((-S)^{-1}\right).
	\end{align}
    Thus, it follows that 
    \begin{align}\label{psi-12}
        -\ii [\Psi_{\mathcal{I},1}]_{12}&=\sqrt{-S}\sum_{i \in \I} \frac{\sqrt{r_i}}{2}E_{ii} C+\Boh\left((-S)^{-1}\right)\nonumber\\
        &=\sum_{i \in \I}\sqrt{\frac{-s_i-s_{\sigma(i)}}{2}}E_{ii} C+\Boh\left((-S)^{-1}\right).
        %\operatorname{diag}\left(\sqrt{\frac{-s_1-s_{\sigma(1)}}{2}},\cdots,\sqrt{\frac{-s_n-s_{\sigma(n)}}{2}} \right)C+\Boh\left((-S)^{-1}\right).
    \end{align}
    %  Since
    % \begin{align}
    %     C=\sum_{i=1}^n c_{i\sigma(i)}E_{i\sigma(i)},
    % \end{align}
    % we recover the first formula in \eqref{-infty asympototics}.
	\paragraph{Asymptotics of $\Psi_{\mathcal J,1}$}
    Tracing back the transformations $\Psi_{\mathcal{J}} \to \what T \to \what S \to \what R$ defined in \eqref{def:hat-T}, \eqref{def:hat-S} and \eqref{def:hatR}, we have
	\begin{align}
	 [\Psi_{\mathcal J,1}]_{12}=\sqrt{-S} \left[\what R_1+\what N_1\right]_{12}.
	\end{align}
    Using the explicit expression of $\what N(z)$ in \eqref{def:hat-N}, one can show that
    \begin{align}
        \left[\what N_1\right]_{12}=0.
    \end{align}
    By comparing the asymptotic behaviors of $\what R(z)$ in \eqref{eq:asyhatR} and \eqref{eq:hatR}, and subsequently applying \eqref{eq:hatR1}, we arrive at
	\begin{align}
	[\Psi_{\mathcal J,1}]_{12} = (-S)^{-\frac 14} \sum_{k=1}^{m_2}\left[A^{(k)} - \begin{pmatrix}
		    0 & I_n\\
                I_n & 0
		\end{pmatrix} A^{(k)} \begin{pmatrix}
		    0 & I_n\\
                I_n & 0
		\end{pmatrix}\right]_{12} + \Boh\left((-S)^{-1}\right).
	\end{align}
	Upon insertion of the expression for $A^{(k)}$ defined in \eqref{Ak} into the above equation, it follows that
	\begin{align}\label{eq:beta1-2-2}
	[\Psi_{\mathcal J,1}]_{12} &= - (-S)^{-\frac 14} e^{-\frac{3 \pi \ii}{4}} \sum_{k=1}^{m_2} (2 \what r_k)^{-\frac 12}\left[\sum_{j \in \what J_k} \left(h_{1j} (-S)^{-\frac{3\nu_j}{2}} e^{-\frac{3 \pi \ii \nu_j}{2}} (2 \what r_k)^{-3\nu_j} e^{2 \tilde \theta_j(\what r_k)} \mu_{\sigma(j)}^{-1} E_{j \sigma(j)}\right.\right.\nonumber\\
	& \left.\left.\quad- \frac{\nu_j}{h_{1j}}  (-S)^{\frac{3\nu_j}{2}} e^{\frac{3 \pi \ii \nu_j}{2}} (2 \what r_k)^{3\nu_j}  e^{-2 \tilde \theta_j(\what r_k)}  E_{jj}C\right)\right]+ \Boh\left((-S)^{-1}\right)\nonumber\\
	& = \ii (-S)^{-\frac 14} e^{-\frac{\pi \ii}{4}} \sum_{j \in \bigcup_{k=1}^{m_2}\what J_k} \left[ (2 \what r_k)^{-\frac 12}\left(h_{1j} (-S)^{-\frac{3\nu_j}{2}} e^{-\frac{3 \pi \ii \nu_j}{2}} (2 \what r_k)^{-3\nu_j} e^{2 \tilde \theta_j(\what r_k)}\mu_{\sigma(j)}^{-1} E_{j \sigma(j)}\right.\right.\nonumber\\
	& \left.\left.\quad- \frac{\nu_j}{h_{1j}}  (-S)^{\frac{3\nu_j}{2}} e^{\frac{3 \pi \ii \nu_j}{2}} (2 \what r_k)^{3\nu_j}  e^{-2 \tilde \theta_j(\what r_k)}  E_{jj}C\right)\right]+ \Boh\left((-S)^{-1}\right),
	\end{align}
where $\widetilde \theta_j$	is defined in \eqref{def:tilde-theta}. Using the definition of $h_{1j}$ in \eqref{def:h1i}, we could rewrite \eqref{eq:beta1-2-2} as 
	% $\nu_j$, $\tilde \theta_j(z)$ in \eqref{def:h1i}, \eqref{def:nui} and \eqref{def:tilde-theta}, we can simplify \eqref{eq:beta1-2-2} as
	\begin{align}\label{eq:beta-1-2-31}
	[\Psi_{\mathcal J,1}]_{12}& = \ii (-S)^{-\frac 14}e^{-\frac{\pi \ii}{4}} \sum_{j \in \bigcup_{k=1}^{m_2}\what J_k} \left[ (2 \what r_k)^{-\frac 12}\left((-S)^{-\frac{3\nu_j}{2}} e^{-\frac{\pi \ii \nu_j}{2}} (2 \what r_k)^{-3\nu_j} e^{2 \tilde \theta_j(\what r_k)} \frac{\sqrt{2 \pi}}{\Gamma(-\nu_j)}\mu_{\sigma(j)}^{-1} E_{j \sigma(j)}\right.\right.\nonumber\\
	& \left.\left.\quad-  (-S)^{\frac{3\nu_j}{2}} e^{\frac{\pi \ii \nu_j}{2}} (2 \what r_k)^{3\nu_j}  e^{-2 \tilde \theta_j(\what r_k)} \frac{\nu_j \Gamma(-\nu_j)}{\sqrt{2 \pi}} E_{jj}C\right)\right]+ \Boh\left((-S)^{-1}\right).
    \end{align}
   On account of the singularity of Gamma function at 0, we define two sets to distinguish between the cases $\nu_j=0$ and $\nu_j \neq 0$:
    \begin{align}
        \J_1 = \{j: c_{j\sigma(j)}c_{\sigma(j)j}=0\}, \qquad \J_2 = \J\setminus \J_1.
    \end{align}
 Thus, equation \eqref{eq:beta-1-2-31} can be further written as
    \begin{align}\label{eq:beta-1-2-3}
	&[\Psi_{\mathcal J,1}]_{12} = \ii (-S)^{-\frac 14} \nonumber\\
    &\quad \times \left\{\sum_{j \in \bigcup_{k=1}^{m_2}\what J_k \bigcap \J_1} \left[ (2 \what r_k)^{-\frac 12}\left((-S)^{-\frac{3\nu_j}{2}} e^{-\frac{\pi \ii \nu_j}{2}} (2 \what r_k)^{-3\nu_j} e^{2 \tilde \theta_j(\what r_k)}e^{-\frac{\pi \ii}{4}} \frac{\sqrt{2 \pi}}{\Gamma(-\nu_j)}\mu_{\sigma(j)}^{-1} E_{j \sigma(j)}\right.\right.\right.\nonumber\\
	& \left.\left.\quad-  (-S)^{\frac{3\nu_j}{2}} e^{\frac{\pi \ii \nu_j}{2}} (2 \what r_k)^{3\nu_j}  e^{-2 \tilde \theta_j(\what r_k)} e^{-\frac{\pi \ii}{4}}\frac{\nu_j \Gamma(-\nu_j)}{\sqrt{2 \pi}} E_{jj}C\right)\right]\nonumber\\
    &\quad + \sum_{j \in \bigcup_{k=1}^{m_2}\what J_k \bigcap \J_2} \left[ (2 \what r_k)^{-\frac 12}\left((-S)^{-\frac{3\nu_j}{2}} e^{-\frac{\pi \ii \nu_j}{2}} (2 \what r_k)^{-3\nu_j} e^{2 \tilde \theta_j(\what r_k)}e^{-\frac{\pi \ii}{4}} \frac{\sqrt{2 \pi}}{\Gamma(-\nu_j)}\mu_{\sigma(j)}^{-1} E_{j \sigma(j)}\right.\right.\nonumber\\
	& \left.\left.\left.\quad-  (-S)^{\frac{3\nu_j}{2}} e^{\frac{\pi \ii \nu_j}{2}} (2 \what r_k)^{3\nu_j}  e^{-2 \tilde \theta_j(\what r_k)}e^{-\frac{\pi \ii}{4}} \frac{\nu_j \Gamma(-\nu_j)}{\sqrt{2 \pi}} E_{jj}C\right)\right]\right\}+ \Boh\left((-S)^{-1}\right).
    \end{align}
    
If $\nu_j=0$, that is $\mu_j\mu_{\sigma(j)}=0$. Recall \eqref{hij=0}, \eqref{hij=0-1} and the property of Gamma function, we have
    \begin{align}\label{6.13}
        \frac{1}{\Gamma(-\nu_j)\mu_{\sigma(j)}}E_{j\sigma(j)} =\frac{1}{\Gamma(-\nu_j)\mu_j\mu_{\sigma(j)}}E_{jj} C=\frac{\ii}{2 \pi}E_{jj} C
    \end{align}
    and 
    \begin{align}\label{6.14}
        \nu_j\Gamma(-\nu_j)=-\Gamma(1-\nu_j)=-1.
    \end{align}
   Substituting \eqref{6.13} and \eqref{6.14} into the first summation in \eqref{eq:beta-1-2-3}, we have
   \begin{align}\label{Psi-J1}
	& \sum_{j \in \bigcup_{k=1}^{m_2}\what J_k\bigcap\J_1}(2 \what r_k)^{-\frac 12}e^{-\frac{\pi \ii}{4}} \left((-S)^{-\frac{3\nu_j}{2}} e^{-\frac{\pi \ii \nu_j}{2}} (2 \what r_k)^{-3\nu_j} e^{2 \tilde \theta_j(\what r_k)} \frac{\sqrt{2 \pi}}{\Gamma(-\nu_j)}\mu_{\sigma(j)}^{-1} E_{j \sigma(j)}\right.\nonumber\\
	& \left.\quad-  (-S)^{\frac{3\nu_j}{2}} e^{\frac{\pi \ii \nu_j}{2}} (2 \what r_k)^{3\nu_j}  e^{-2 \tilde \theta_j(\what r_k)} \frac{\nu_j \Gamma(-\nu_j)}{\sqrt{2 \pi}} E_{jj}C\right)\nonumber\\
     &=\sum_{j \in \bigcup_{k=1}^{m_2}\what J_k\bigcap\J_1}(2 \what r_k)^{-\frac 12}e^{-\frac{\pi \ii}{4}} \left(e^{2 \tilde \theta_j(\what r_k)}\frac{\ii}{\sqrt{2 \pi}}E_{jj} C+  e^{-2 \tilde \theta_j(\what r_k)} \frac{1}{\sqrt{2 \pi}} E_{jj}C\right)\nonumber\\
    %& = e^{2 \tilde \theta_j(\what r_k)}\frac{e^{\frac{\pi \ii}{4}}}{\sqrt{2 \pi}}E_{jj} C+   e^{-2 \tilde \theta_j(\what r_k)} \frac{e^{-\frac{\pi \ii}{4}}}{\sqrt{2 \pi}} E_{jj}C\nonumber\\
     & =   \sum_{j \in \bigcup_{k=1}^{m_2}\what J_k\bigcap\J_1}(2 \what r_k)^{-\frac 12}\frac{1}{\sqrt{2 \pi}}\left(e^{2 \tilde \theta_j(\what r_k) +\frac{\pi\ii}{4}} +e^{-2 \tilde \theta_j(\what r_k) -\frac{\pi\ii}{4}}  \right)E_{jj}C\nonumber\\
     &=\sum_{j \in \bigcup_{k=1}^{m_2}\what J_k\bigcap\J_1}(2 \what r_k)^{-\frac 12}\sqrt{\frac{2}{\pi}}\cos \left(2 \ii \tilde \theta_j\left(\what r_k\right)- \frac{\pi}{4} \right) E_{jj} C.
 %     \left((-S)^{-\frac{3\nu_j}{2}} e^{-\frac{\pi \ii \nu_j}{2}} (2 \what r_k)^{-3\nu_j} e^{2 \tilde \theta_j(\what r_k)}e^{\frac{\pi \ii}{4}}E_{jj} C\right.\right.\nonumber\\
	% & \left.\left.\quad+  (-S)^{\frac{3\nu_j}{2}} e^{\frac{\pi \ii \nu_j}{2}} (2 \what r_k)^{3\nu_j}  e^{-2 \tilde \theta_j(\what r_k)} e^{-\frac{\pi \ii}{4}} E_{jj}C\right)\right]+ \Boh\left((-S)^{-1}\right)\nonumber\\
    \end{align}
 If  $\nu_j \neq 0$,  we apply the reflection formula of the Gamma function to deduce that \begin{align}\label{reflection}
	 \Gamma(\nu_j)\Gamma(-\nu_j) = -\frac{ \pi}{\nu_j\sin(\nu_j\pi)}=-\frac{2\pi \ii e^{-\pi\ii\nu_j}}{\nu_j(1-e^{-2 \pi\ii\nu_j})}.
	\end{align}
	Note that 
    \begin{align}
       \Gamma(-\nu_j)=\sqrt{\Gamma(-\nu_j)\Gamma(-\nu_j)} =\sqrt{\frac{\Gamma(-\nu_j)}{\Gamma(\nu_j)}}\sqrt{\frac{-2\pi\ii}{\nu_j(1-e^{-2\pi\ii\nu_j})}}e^{-\frac{\pi\ii\nu_j}{2}}
    \end{align}
   and $\nu_j =-\frac{1}{2 \pi \ii}\ln (1- \mu_j \mu_{\sigma(j)})$, we have
	
 %    And when $\nu_j =-\frac{1}{2 \pi \ii}\ln (1- \mu_j \mu_{\sigma(j)}) \ii \mathbb{R}$, we rewrite \eqref{eq:beta-1-2-3} as follows by using \eqref{reflection}
    \begin{align}\label{6.18}
    &\sum_{j \in \bigcup_{k=1}^{m_2}\what J_k \bigcap\J_2}(2 \what r_k)^{-\frac 12}\left[e^{-\nu_j\ln{\left((-S)^{\frac 32} (2\what r_k)^3\right)}+2 \tilde \theta_j(\what r_k)-\frac{\pi\ii}{4}}\right. \nonumber\\
    &\times \sqrt{\frac{\Gamma(\nu_j)}{\Gamma(-\nu_j)}}\sqrt{\ii \nu_j(1-e^{-2 \pi \ii \nu_j})}\mu_{\sigma(j)}^{-1} E_{j \sigma(j)}\nonumber\\
	& \left.+e^{\nu_j\ln{\left((-S)^{\frac 32} (2\what r_k)^3\right)}-2 \tilde \theta_j(\what r_k) +\frac{\pi\ii}{4}} \sqrt{\frac{\ii \nu_j }{1-e^{-2\pi\ii\nu_j}}}\sqrt{\frac{\Gamma(-\nu_j)}{\Gamma(\nu_j)}}E_{jj}C\right]\nonumber\\
    &=\sum_{j \in \bigcup_{k=1}^{m_2}\what J_k\bigcap\J_2}(2 \what r_k)^{-\frac 12}\sqrt{-\frac{\ln(1-\mu_j\mu_{\sigma(j)})}{2 \pi \mu_j\mu_{\sigma(j)}}}\left[e^{-\nu_j\ln{\left((-S)^{\frac 32} (2\what r_k)^3\right)}+2 \tilde \theta_j(\what r_k) -\frac{\pi\ii}{4}+\frac{1}{2}\ln\frac{\Gamma(\nu_j)}{\Gamma(-\nu_j)}}E_{jj}C\right.\nonumber\\
	&\quad \left.+e^{\nu_j\ln{\left((-S)^{\frac 32} (2\what r_k)^3\right)}-2 \tilde \theta_j(\what r_k) +\frac{\pi\ii}{4}-\frac{1}{2}\ln\frac{\Gamma(\nu_j)}{\Gamma(-\nu_j)}}  E_{jj}C\right].
	\end{align}
    The last identity follows directly from $\mu_j = c_{j\sigma(j)}$ and \eqref{expansion}. Inserting $\what r_k$ in \eqref{def:what-J_k} into the above equation and using \eqref{def:rj}, we simplify \eqref{6.18} as 
    \begin{align}\label{eq:beta-1-2-5}
	\sum_{j \in \bigcup_{k=1}^{m_2}\what J_k\bigcap\J_2}\what r_k^{-\frac 12} \sqrt{-\frac{\ln(1-\mu_j\mu_{\sigma(j)})}{\pi \mu_j\mu_{\sigma(j)}}} \cos \psi(s_j, s_{\sigma(j)}) E_{jj} C 
	\end{align}
where
	\begin{multline}\label{def-psi}
	 \psi(s_j, s_{\sigma(j)}) = 2 \ii \tilde \theta_j\left(\sqrt{\frac{s_j+s_{\sigma(j)}}{S}}\right)  +\frac{3}{4\pi}\ln (1-\mu_j\mu_{\sigma(j)}) \ln (-4(s_j+s_{\sigma(j)}))
     \\+\frac{\ii}{2}\ln\frac{\Gamma(\nu_j)}{\Gamma(-\nu_j)} + \frac{\pi}{4}.
	\end{multline}
 %    In particular, when $\nu_j =-\frac{1}{2 \pi \ii}\ln (1- \mu_j \mu_{\sigma(j)})$ is purely imaginary and nonzero, we utilize the properties of the Gamma function for purely imaginary arguments
 %     \begin{align}
 %       |\Gamma(\nu_j)|^2=|\Gamma(-\nu_j)|^2 = \Gamma(\nu_j)\Gamma(-\nu_j),
 %     \end{align}
 %     which implies 
 %     \begin{align}
 %         \ln\frac{\Gamma(\nu_j)}{\Gamma(-\nu_j)} = \ln \frac{|\Gamma(\nu_j)|e^{\ii \arg \Gamma(\nu_j)}}{|\Gamma(-\nu_j)|e^{\ii \arg \Gamma(-\nu_j)}}=\ln e^{-2 \ii \arg \Gamma(-\nu_j)} = -2 \ii \arg \Gamma(-\nu_j).
 %     \end{align}
 %     Substituting this into  \eqref{def-psi} yields:
 %      \begin{align}
	%  \psi(s_j, s_{\sigma(j)}) &= 2 \ii \tilde \theta_j\left(\sqrt{\frac{s_j+s_{\sigma(j)}}{S}}\right)  +\frac{3}{4\pi}\ln (1-\mu_j\mu_{\sigma(j)}) \ln (-4(s_j+s_{\sigma(j)}))\nonumber\\
 %     &\quad+ \arg \Gamma \left(\frac{\ln (1- \mu_j\mu_{\sigma(j)})}{2 \pi \ii}\right) + \frac{\pi}{4}.
	% \end{align}

By substituting \eqref{Psi-J1} and \eqref{eq:beta-1-2-5} into \eqref{eq:beta-1-2-3}, we obtain
   \begin{align}
       [\Psi_{\mathcal J,1}]_{12} &= \ii (-S)^{-\frac 14}\left[\sum_{j \in \bigcup_{k=1}^{m_2}\what J_k\bigcap\J_1}(2 \what r_k)^{-\frac 12}\sqrt{\frac{2}{\pi}}\cos \left(2 \ii \tilde \theta_j\left(\what r_k\right)- \frac{\pi}{4} \right) E_{jj} C\right.\nonumber\\
       &\quad\left.+\sum_{j \in \bigcup_{k=1}^{m_2}\what J_k\bigcap\J_2} \what r_k^{-\frac 12} \sqrt{-\frac{\ln(1-\mu_j\mu_{\sigma(j)})}{\pi \mu_j\mu_{\sigma(j)}}} \cos \psi(s_j, s_{\sigma(j)}) E_{jj} C\right]+\Boh\left((-S)^{-1}\right).
   \end{align}
   A further appeal to 
   \eqref{def:rj} and \eqref{def:what-J_k} yields 
   \begin{align}\label{eq:beta-1-2-6}
       -\ii [\Psi_{\mathcal J,1}]_{12} &= \sum_{j \in \J_1}(-s_j-s_{\sigma(j)})^{-\frac 14}\sqrt{\frac{1}{\pi}}\cos \left(2 \ii \tilde \theta_j\left(\sqrt{\frac{s_j+s_{\sigma(j)}}{S}}\right)- \frac{\pi}{4} \right) E_{jj} C\nonumber\\
       &\quad+\sum_{j \in \J_2}(-s_j-s_{\sigma(j)})^{-\frac 14}\sqrt{-\frac{\ln(1-\mu_j\mu_{\sigma(j)})}{\pi \mu_j\mu_{\sigma(j)}}} \cos \psi(s_j, s_{\sigma(j)}) E_{jj} C+\Boh\left((-S)^{-1}\right).
   \end{align}

    Finally, inserting asymptotics of $[\Psi_{\I,1}]_{12}$ and $[\Psi_{\J,1}]_{12}$ in \eqref{psi-12} and \eqref{eq:beta-1-2-6} into \eqref{beta1}, and applying the expansion in \eqref{expansion}, we obtain the desired asymptotic form \eqref{-infty asympototics}.
    This finishes the proof of Theorem \ref{-infty asympototics results}.
    \qed

\begin{remark}
If $\nu_j =-\frac{1}{2 \pi \ii}\ln (1- \mu_j \mu_{\sigma(j)})$ is purely imaginary and nonzero, we note that
     \begin{align}
       |\Gamma(\nu_j)|^2=|\Gamma(-\nu_j)|^2 = \Gamma(\nu_j)\Gamma(-\nu_j),
     \end{align}
     which implies 
     \begin{align}
         \ln\frac{\Gamma(\nu_j)}{\Gamma(-\nu_j)} = \ln \frac{|\Gamma(\nu_j)|e^{\ii \arg \Gamma(\nu_j)}}{|\Gamma(-\nu_j)|e^{\ii \arg \Gamma(-\nu_j)}}=\ln e^{-2 \ii \arg \Gamma(-\nu_j)} = -2 \ii \arg \Gamma(-\nu_j).
     \end{align}
\end{remark}
\section*{Acknowledgements}
We are grateful to Shuai-Xia Xu for helpful discussions and comments. Luming Yao was partially supported by National Natural Science Foundation of China under grant number 12401316 and Scientific Foundation for Youth Scholars of Shenzhen University under grant number 868--000001032818. Lun Zhang was partially supported by National Natural Science Foundation of China under grant numbers 12271105, 11822104, and ``Shuguang Program'' supported by Shanghai Education Development Foundation and Shanghai Municipal Education Commission.

    \appendix
	
	\section{The Airy parametrix}\label{airy}
	The Airy parametrix $\Phi^{({\Ai})}$ is the unique solution of the following RH problem.
	\begin{rhp}
		\hfill
		\begin{itemize}
			\item[\rm(a)] $\Phi^{(\mathrm{Ai})}(z)$ is analytic in $\mathbb{C} \setminus \{\cup_{j=1}^4 \Sigma_j \cup \{0\}\}$, where the contours $\Sigma_j$, $j=1, 2, 3, 4$, are indicated in Figure \ref{fig:Airy}.
			\item[\rm(b)] For $z \in \cup_{j=1}^3 \Sigma_j$, we have
			\begin{align}\label{jump:Airy}
				\Phi^{({\Ai})}_+(z)=\Phi^{({\Ai})}_-(z)\begin{cases}
					\begin{pmatrix}
						1 & 1\\
						0 & 1
					\end{pmatrix}, & \qquad z \in \Sigma_1,\\
					\begin{pmatrix}
						1 & 0\\
						1& 1
					\end{pmatrix}, & \qquad z \in \Sigma_2 \cup \Sigma_4,\\
					\begin{pmatrix}
						0 & 1\\
						-1 & 0
					\end{pmatrix}, & \qquad z \in \Sigma_3.
				\end{cases}
			\end{align}
			\item[\rm(c)] As $z \to \infty$, we have
			\begin{align}\label{infty:Ai}
				\Phi^{({\Ai})}(z) = \frac{1}{\sqrt{2}} \begin{pmatrix}
					z^{-\frac 14} & 0\\
					0 & z^{\frac 14}
				\end{pmatrix} \begin{pmatrix}
					1 & \ii\\
					\ii & 1
				\end{pmatrix}\left(I + \Boh(z^{-\frac 32})\right)e^{-\frac 23 z^{3/2} \sigma_3}.
			\end{align}
			\item[\rm(d)] $\Phi^{({\Ai})}(z)$ is bounded near the origin.
		\end{itemize}
	\end{rhp}
	
	\begin{figure}[t]
		\begin{center}
			
			\tikzset{every picture/.style={line width=0.75pt}} %set default line width to 0.75pt
			
			\begin{tikzpicture}[x=0.75pt,y=0.75pt,yscale=-1,xscale=1]
				%uncomment if require: \path (0,300); %set diagram left start at 0, and has height of 300
				
				%Straight Lines [id:da1902587191166466]
				\draw    (52,120) -- (203,120) -- (333,120) ;
				\draw [shift={(132.5,120)}, rotate = 180] [fill={rgb, 255:red, 0; green, 0; blue, 0 }] (7.14,-3.43) -- (0,0) -- (7.14,3.43) -- (4.74,0) -- cycle ;  
				\draw [shift={(273,120)}, rotate = 180] [fill={rgb, 255:red, 0; green, 0; blue, 0 }  ](7.14,-3.43) -- (0,0) -- (7.14,3.43) -- (4.74,0) -- cycle    ;
				%Straight Lines [id:da5833272932983059]
				\draw    (116.5,20) -- (216.5,120) ;
				\draw [shift={(170.04,73.54)}, rotate = 225] [fill={rgb, 255:red, 0; green, 0; blue, 0 }  ](7.14,-3.43) -- (0,0) -- (7.14,3.43) -- (4.74,0) -- cycle    ;
				%Straight Lines [id:da4356022020361009]
				\draw    (117.5,208) -- (216.5,120) ;
				\draw [shift={(170.74,160.68)}, rotate = 138.37] [fill={rgb, 255:red, 0; green, 0; blue, 0 }  ](7.14,-3.43) -- (0,0) -- (7.14,3.43) -- (4.74,0) -- cycle    ;
				%Shape: Circle [id:dp8172020689471965]
				\draw  [fill={rgb, 255:red, 0; green, 0; blue, 0 }  ,fill opacity=1 ] (216.67,119.83) .. controls (216.67,119.19) and (216.15,118.66) .. (215.5,118.66) .. controls (214.85,118.66) and (214.33,119.19) .. (214.33,119.83) .. controls (214.33,120.48) and (214.85,121) .. (215.5,121) .. controls (216.15,121) and (216.67,120.48) .. (216.67,119.83) -- cycle ;
				
				% Text Node
				\draw (218.5,123) node [anchor=north west][inner sep=0.75pt]   [align=left] {$0$};
				% Text Node
				\draw (343,117) node [anchor=north west][inner sep=0.75pt]   [align=left] {$\Sigma_1$};
				% Text Node
				\draw (98,14) node [anchor=north west][inner sep=0.75pt]   [align=left] {$\Sigma_2$};
				% Text Node
				\draw (33,113) node [anchor=north west][inner sep=0.75pt]   [align=left] {$\Sigma_3$};
				% Text Node
				\draw (95,205) node [anchor=north west][inner sep=0.75pt]   [align=left] {$\Sigma_4$};

			\end{tikzpicture}
			\caption{The jump contours of the RH problem for $\Phi^{({\Ai})}$.}
			\label{fig:Airy}
		\end{center}
	\end{figure}
	
	Denote $\omega:= e^{ 2 \pi \ii/3}$, the unique solution is given by (cf. \cite{Deift1999})
	\begin{align}
		\Phi^{({\Ai})}(z) = \sqrt{2 \pi} \begin{cases}
			\begin{pmatrix}
				\Ai (z) & - \omega^2 \Ai (\omega^2 z)\\
				-\ii \Ai'(z) & \ii \omega \Ai' (\omega^2 z)
			\end{pmatrix}, & \arg z \in \left(0, \frac{3 \pi}{4} \right),\\
			\begin{pmatrix}
				-\omega \Ai (\omega z) & - \omega^2 \Ai (\omega^2 z)\\
				\ii \omega^2 \Ai'(\omega z) & \ii \omega \Ai' (\omega^2 z)
			\end{pmatrix}, & \arg z \in \left(\frac{3 \pi}{4}, \pi \right),\\
			\begin{pmatrix}
				-\omega^2 \Ai (\omega^2 z) & \omega \Ai (\omega z)\\
				\ii \omega \Ai'(\omega^2 z) & -\ii \omega^2 \Ai' (\omega z)
			\end{pmatrix}, & \arg z \in \left(-\pi, -\frac{3 \pi}{4} \right),\\
			\begin{pmatrix}
				\Ai (z) &  \omega \Ai (\omega z)\\
				-\ii \Ai'(z) & -\ii \omega^2 \Ai' (\omega z)
			\end{pmatrix}, & \arg z \in \left(-\frac{3 \pi}{4}, 0\right).
		\end{cases}
	\end{align}

    \section{The parabolic cylinder parametrix}\label{sec:PC}
    The parabolic cylinder parametrix $\Phi^{(\PC)}(z) = \Phi^{(\PC)}(z; \nu)$ with $\nu$ being a real or complex parameter is a solution of the following RH problem.

    \begin{rhp}\label{rhp:PC}
		\hfill
		\begin{itemize}
			\item[\rm(a)] $\Phi^{(\mathrm{PC})}(z)$ is analytic in $\mathbb{C} \setminus \{\cup_{j=1}^5 \what\Sigma_j \}$, where the contours \begin{equation}
			    \what \Sigma_j = \{ z \in \mathbb{C}|\arg z = \frac{(j-1)\pi}{2}\}, \quad j = 1, \dots 4, \quad \what \Sigma_5 = \{z \in \mathbb{C}|\arg z = -\frac{\pi}{4}\}
			\end{equation}
            are indicated in Figure \ref{fig:PC}.
			\item[\rm(b)] For $z \in \cup_{j=1}^5 \what\Sigma_j$, we have
			\begin{align}\label{jump:PC}
				\Phi^{({\PC})}_+(z)=\Phi^{({\PC})}_-(z)\begin{cases}
					H_0, & \qquad z \in \what\Sigma_1,\\
					H_1, & \qquad z \in \what\Sigma_2 ,\\
					H_2, & \qquad z \in \what\Sigma_3,\\
                    H_3, & \qquad z \in \what\Sigma_4,\\
                    e^{2 \pi \ii \nu \sigma_3}, & \qquad z \in \what\Sigma_5,\\
				\end{cases}
			\end{align}
            where
            \begin{align}
                H_0 = \begin{pmatrix}
                    1 & 0\\
                    h_0 & 1
                \end{pmatrix}, \quad H_1 = \begin{pmatrix}
                    1 & h_1\\
                    0 & 1
                \end{pmatrix}, \quad H_{k+2} = e^{\pi \ii (\nu + \frac 12)\sigma_3} H_k e^{-\pi \ii (\nu + \frac 12)\sigma_3}, \quad k = 0,1
            \end{align}
            with 
            \begin{align}
                h_0 = -\ii \frac{\sqrt{2 \pi}}{\Gamma(\nu +1)}, \quad h_1 = \frac{\sqrt{2 \pi}}{\Gamma(-\nu)}e^{\pi \ii \nu}, \quad 1 + h_0 h_1 = e^{2 \pi \ii \nu}.
            \end{align}
			\item[\rm(c)] As $z \to \infty$, we have
			\begin{align}\label{infty:PC}
				\Phi^{({\PC})}(z) = \left(I + \begin{pmatrix}
				    0 & \nu\\
                    1 & 0
				\end{pmatrix}\frac{1}{z}+\Boh(z^{-2})\right)e^{\left(\frac{z^2}{4}-\nu \ln z\right) \sigma_3}.
			\end{align}
			%\item[\rm(d)] $\Phi^{({\PC})}(z)$ is bounded near the origin.
		\end{itemize}
	\end{rhp}

     \begin{figure}[t]
		\begin{center}

\tikzset{every picture/.style={line width=0.75pt}} %set default line width to 0.75pt        

\begin{tikzpicture}[x=0.75pt,y=0.75pt,yscale=-1,xscale=1]
%uncomment if require: \path (0,300); %set diagram left start at 0, and has height of 300

%Straight Lines [id:da2696453496951594] 
\draw    (202,124) -- (418,124) ;
\draw [shift={(421,124)}, rotate = 180] [fill={rgb, 255:red, 0; green, 0; blue, 0 }  ](7.14,-3.43) -- (0,0) -- (7.14,3.43) -- (4.74,0) -- cycle    ;
\draw [shift={(199,124)}, rotate = 0] [fill={rgb, 255:red, 0; green, 0; blue, 0 }  ](7.14,-3.43) -- (0,0) -- (7.14,3.43) -- (4.74,0) -- cycle    ;
%Straight Lines [id:da03380555803889718] 
\draw    (309.03,15.5) -- (309.98,121.99) -- (310.97,232.5) ;
\draw [shift={(311,235.5)}, rotate = 269.49] [fill={rgb, 255:red, 0; green, 0; blue, 0 }  ](7.14,-3.43) -- (0,0) -- (7.14,3.43) -- (4.74,0) -- cycle    ;
\draw [shift={(309,12.5)}, rotate = 89.49] [fill={rgb, 255:red, 0; green, 0; blue, 0 }  ](7.14,-3.43) -- (0,0) -- (7.14,3.43) -- (4.74,0) -- cycle    ;
%Straight Lines [id:da7411063680859744] 
\draw    (310,124) -- (389.89,204.87) ;
\draw [shift={(392,207)}, rotate = 225.35] [fill={rgb, 255:red, 0; green, 0; blue, 0 }  ](7.14,-3.43) -- (0,0) -- (7.14,3.43) -- (4.74,0) -- cycle    ;

% Text Node
\draw (297,125) node [anchor=north west][inner sep=0.75pt]   [align=left] {0};
% Text Node
\draw (400,97) node [anchor=north west][inner sep=0.75pt]   [align=left] {$\what\Sigma_1$};
% Text Node
\draw (280,5) node [anchor=north west][inner sep=0.75pt]   [align=left] {$\what\Sigma_2$};
% Text Node
\draw (181,97) node [anchor=north west][inner sep=0.75pt]   [align=left] {$\what\Sigma_3$};
% Text Node
\draw (280,218) node [anchor=north west][inner sep=0.75pt]   [align=left] {$\what\Sigma_4$};
% Text Node
\draw (395,180) node [anchor=north west][inner sep=0.75pt]   [align=left] {$\what\Sigma_5$};

\end{tikzpicture}
			\caption{The jump contours of the RH problem for $\Phi^{({\PC})}$.}
			\label{fig:PC}
		\end{center}
	\end{figure}
According to \cite[Section 9.4]{Fokas2006}, the solution to above RH problem can be built explicitly in terms of the parabolic cylinder functions (cf. \cite[Chapter 12]{DLMF}) $D_{\nu}$ and $D_{-\nu-1}$. More precisely, we have
\begin{align}\label{PC1}
    \Phi^{({\PC})}(z) = \begin{pmatrix}
        \frac{z}{2} & 1\\
        1 & 0
    \end{pmatrix}\begin{pmatrix}
        D_{-\nu-1}(\ii z)& D_{\nu}(z)\\
        D'_{-\nu-1}(\ii z)& D'_{\nu}(z)
    \end{pmatrix}\begin{pmatrix}
        e^{\frac{\pi \ii}{2}(\nu +1)} & 0\\
        0 & 1
    \end{pmatrix}, \quad \arg z \in \left(-\frac{\pi}{4}, 0\right).
\end{align}
    For $z$ in other sectors of the complex plane, $\Phi^{({\PC})}$ is determined by the jump relation \eqref{jump:PC} and \eqref{PC1}.


\begin{thebibliography}{10} 
        \bibitem{AS1}
        M. J. Ablowitz and H. Segur, Asymptotic solutions of the Korteweg-deVries equation, {\ Stud. Appl. Math.} {57} (1976/77), 13--44.

        \bibitem{AS3}
        M. J. Ablowitz and H. Segur, Exact linearization of a Painlev\'{e} transcendent, { Phys. Rev. Lett.} {38} (1977), 1103--1106.
        
        % \bibitem{BBDI}
        %  J. Baik, R. Buckingham, J. DiFranco and  A.R. Its, Total integrals of global solutions to Painlev\'{e}
        %  II, {\it Nonlinearity}, \textbf{22} (2009), 1021--1061.
    
		\bibitem{MM12}
		M. Bertola and M. Cafasso, Fredholm determinants and pole-free solutions to the noncommutative Painlev\'e II equation, Comm. Math. Phys. {309} (2012),  793--833.
        
        \bibitem{BCR2018}
        M. Bertola, M. Cafasso and V. Rubtsov, Noncommutative painlev\'{e} equations and systems of Calogero type, Comm. Math. Phys. {363} (2018), 503--530.
        
        \bibitem{BCI2016}
        A. Bogatskiy, T. Claey and A. Its, Hankel determinant and orthogonal polynomials for a Gaussian weight with a discontinuity at the edge, Comm. Math. Phys. 347 (2016), 127--162.

        \bibitem{Boh1}
         O. Bohigas, J. X. De Carvalho and M. P. Pato, 
         Deformations of the Tracy-Widom distribution,  Phys. Rev. E 79 (2009), 03117.

         \bibitem{Boh04}
         O. Bohigas and M. P. Pato, 
         Missing levels in correlated spectra, Phys. Lett. B 595 (2004), 171--176.


          \bibitem{Boh06}
          O. Bohigas and M. P. Pato, 
          Randomly incomplete spectra and intermediate statistics,
          Phys. Rev. E  74 (2006), 036212.

         
        \bibitem{Monvel}
        A. Boutet de Monvel, A. Its and  D. Shepelsky, Painlev\'{e}-type asymptotics for the Camassa-Holm  equation,
        SIAM J.  Math.  Anal.  42  (2010), 1854--1873.

        
	\bibitem{Deift1999}
		P. Deift, Orthogonal Polynomials and Random Matrices: A Riemann--Hilbert Approach, Courant Lecture Notes, vol. 3, New York University, 1999.
        
        \bibitem{Deift1993}
		P. Deift and X. Zhou, A steepest descent method for oscillatory Riemann--Hilbert problems. Asymptotics
		for the MKdV equation, Ann. Math. (2) 137 (1993), 295--368.

        \bibitem{DZ95}
        P. Deift and X. Zhou, Asymptotics for the Painlev\'{e} II equation, { Comm. Pure Appl. Math.} {48} (1995), 277--337.

        \bibitem{DXZ25}
        J. Du, S.-X. Xu and Y.-Q. Zhao, Asymptotics of Fredholm determinant solutions of the noncommutative Painlev\'{e} II equation, to appear in Anal. Appl., doi:10.1142/S0219530525500277.

        \bibitem{Duits} M. Duits,
        Painlev\'{e} kernels in Hermitian matrix models,
        Constr. Approx. 39 (2014), 173--196.

        
        \bibitem{FN1980}
         H. Flaschka and A.~C. Newell, Monodromy-and spectrum-preserving deformations I, Comm. Math. Phys. 76 (1980), 65--116.
        
         \bibitem{Fokas2006}
         A. S. Fokas, A. R. Its, A. A. Kapaev, and V. Y. Novokshenov, Painlev\'e Transcendents. The Riemann--Hilbert Approach, Math. Surveys Monographs 128, AMS, Providence, RI, 2006.

        \bibitem{FW2015}
        P.~J. Forrester and N.~S. Witte, Painlev\'e II in random matrix theory and related fields, Constr. Approx. 41 (2015), 589--613.
        

       \bibitem{GGRW}
        I. Gelfand, S. Gelfand, V. Retakh and R.~L. Wilson, 
        Quasideterminants, Adv. Math. 193 (2005), 56--141.

        \bibitem{HM1980}
		S.~P. Hastings and J.~B. McLeod,
		A boundary value problem associated with the second Painlev\'{e} transcendent 
            and the Korteweg-de Vries equation,
		Arch. Ration. Mech. Anal. 73 (1980), 31--51.
        
        \bibitem{Ince:book} 
        E. L. Ince, Ordinary Differential Equations, Dover Publications, New York, 1944.

        \bibitem{IIKS}
        A.~R. Its, A.~G. Izergin, V.~E. Korepin and N.~A. Slavnov, 
        Differential equations for quantum correlation functions, Internat. J. Modern Phys. B 4 (1990), 1003--1037.
		
		\bibitem{IP2020}
         A. R. Its and A. Prokhorov,  On $\beta=6$ Tracy-Widom distribution and the second Calogero-Painlev\'{e} system, preprint arXiv:2010.06733.

        \bibitem{DLMF}
        F. W. J. Olver, A. B. Olde Daalhuis, D. W. Lozier, B. I. Schneider, R. F. Boisvert, C. W. Clark, B. R. Miller, B. V. Saunders, H. S. Cohl, and M. A. McClain, eds., NIST Digital Library of Mathematical Functions, Release 1.2.4, http://dlmf.nist.gov/, (2025).

        

         \bibitem{RR2010}
         V. Retakh and V. Rubtsov, Noncommutative Toda chains, Hankel quasideterminants and Painlev\'{e} II equation, J. Phys. A: Math. Theor. 43 (2010), 505204. 
         
         \bibitem{Rum1} 
         I. Rumanov, Classical integrability for beta-ensembles and general Fokker-Planck equations, J. Math. Phys., 
         {56} (2015), 013508, 16pp.% arXiv:1306.2117.


         \bibitem{Rum2} I. Rumanov, 
         Painlev\'e Representation of $\text{Tracy-Widom}_{\beta}$  distribution for $\beta$  = 6, Comm. Math. Phys. 342 (2016), 843--868.

        \bibitem{SA}
		 H. Segur and M. J. Ablowitz, Asymptotic solutions of nonlinear evolution equations and a Painlev\'{e} transcedent, Physica D 3 (1981), 165--184.

        %  \bibitem{SUL}
        % B. I. Suleimanov, The second Painlev\'{e} equation in a problem on nonlinear effects near caustics, Zap. Nauchn. Sem. Leningrad. Otdel. Mat. Inst. Steklov. (LOMI) 187 (1991), 110--128; English transl., J. Math. Sci. (N.Y.) 73 (1995), no. 4, 482--493.
        %Suleimanov, I. The second Painlev\'{e} equation in a problem concerning nonlinear effects near caustics. J Math Sci 73, 482–493 (1995).

        \bibitem{TW1994}
        C. Tracy and H. Widom, Level spacing distributions and the Airy kernel, Comm. Math. Phys. 159 (1994), 151--174.

        \bibitem{XYZ}
        T.-Y. Xu, Y.-L. Yang and L. Zhang,
        Transient asymptotics of the modified Camassa-Holm equation,
        J. Lond. Math. Soc. 110 (2024), e12967.
	\end{thebibliography}
\end{document}